\title[Invariance Feedback Entropy of Uncertain Control Systems]{Invariance Feedback Entropy\\ of Uncertain Control Systems}
\author{Mahendra Singh Tomar}
\address{M. S. Tomar is with the Computer Science Department, Ludwig Maximilian University of Munich, Germany.}
\author{Matthias Rungger}
\address{M. Rungger is with the Hybrid Control Systems Group,
	Department of Electrical and Computer Engineering,
	Technical University of Munich, Germany.} 
\author{Majid Zamani}
\address{M. Zamani is with the Computer Science Department, University of Colorado Boulder, USA. M. Zamani is with the Computer Science Department, Ludwig Maximilian University of Munich, Germany.}
\email{mahendra.tomar@lmu.de,matthias.rungger@tum.de,majid.zamani@colorado.edu}
\keywords{Networked control systems, communication channel, entropy, invariance, data
	rate constrained feedback} 
\date{}
\DeclarePairedDelimiter{\ceil}{\lceil}{\rceil}
\newtheorem{theorem}{Theorem}
\newtheorem{lemma}{Lemma}
\newtheorem{corollary}{Corollary}
\newtheorem{definition}{Definition}
\newtheorem{remark}{Remark}
\declaretheorem{example}
\renewcommand\thmcontinues[1]{Continued}
\DeclareMathOperator{\cl}{cl}
\newcommand{\no}{\raisebox{0.13\baselineskip}{\ensuremath{{\scriptstyle \#}}}}
\newcommand{\intcc}[1]{\ensuremath{{\left[#1\right]}}}
\newcommand{\intoc}[1]{\ensuremath{{\left]#1\right]}}}
\newcommand{\intco}[1]{\ensuremath{{\left[#1\right[}}}
\newcommand{\intoo}[1]{\ensuremath{{\left]#1\right[}}}
\newcommand{\R}{\mathbb{R}}
\newcommand{\N}{\mathbb{N}}
\newcommand{\Z}{\mathbb{Z}}
\newcommand{\E}{\mathbb{E}}
\newcommand{\id}{\mathrm{id}}
\renewcommand{\emptyset}{{\varnothing}}
\begin{document}

\begin{abstract}
We introduce a novel notion of invariance feedback entropy to quantify the
state information that is required by any controller that
enforces a given subset of the state space to be invariant. We establish a number
of elementary properties, e.g. we provide conditions that ensure that the
invariance feedback entropy is finite and show for the deterministic case that we recover the well-known notion of entropy for
deterministic control systems. We prove the data rate theorem, which shows that
the invariance entropy is a tight lower bound of the data rate of any
coder-controller that achieves invariance in the closed loop.
We analyze uncertain linear control systems and derive a universal lower bound
of the invariance feedback entropy. The lower bound depends on the absolute
value of the determinant of the system matrix and a ratio involving the volume
of the invariant set and the set of uncertainties.
Furthermore, we derive a lower bound of the
data rate of any static, memoryless coder-controller. Both lower bounds are
intimately related and for certain cases it is possible to bound the performance
loss due to the restriction to static coder-controllers by $1$ bit/time unit.
We provide various examples throughout the paper to illustrate and discuss
different definitions and results.
\end{abstract}

\maketitle
\section{Introduction}

In this work we study the classical feedback control loop, in which a
controller that is feedback connected with a given system is used to enforce a
prespecified control task in the closed loop. Unlike in
the classical setting, we do not assume that the sensor (or coder) is able to transmit an
infinite amount of information to the controller, but is restricted to use a
digital noiseless channel with a bounded data rate to communicate
with the controller. The closed loop of such a feedback is illustrated in
Fig. \ref{f:1}. In this context, we are interested in characterizing the
minimal data rate of the digital channel between coder and controller that
enables the controller to achieve the given control task. Or equivalently, we
are interested in quantifying the information required by the controller to
achieve a given control goal.

Data rate constrained feedback is a maturate research topic and has been
extensively studied for linear control systems and asymptotic stabilizability,
see e.g. \cite{NairFagniniZampieriEvans07} and references therein. Remarkably, for
this class of synthesis problems, the critical data rate has been
characterized in terms of the unstable eigenvalues of the system matrix independent
of the particular disturbance model \cite{hespanha2002towards,nair2003exponential,tatikonda2004control}.

We are interested in minimal data rates necessary
for a coder-controller scheme to render a given nonempty subset of the state
space invariant. Invariance specifications are one of the
most fundamental system requirements and are ubiquitous in the analysis and control
of dynamical systems~\cite{Aubin91,BlanchiniMiani08}. In
\cite{NairEvansMarrelsMoran04}, Nair et.~al extended the well-known notion of topological
entropy of dynamical
systems~\cite{adler1965topological,bowen1971entropy,dinaburg1971connection} to
discrete-time
deterministic control systems and showed that the topological feedback entropy
characterizes the data rate necessary to achieve invariance. Later Colonius
and Kawan~\cite{ColoniusKawan09} introduced a notion of invariance entropy for
continuous-time deterministic control systems. While the definition in
\cite{NairEvansMarrelsMoran04} clearly resembles the definition of entropy for
dynamical systems in \cite{adler1965topological} based on open covers, the
invariance entropy introduced in \cite{ColoniusKawan09} is close to the notion
of entropy in \cite{bowen1971entropy,dinaburg1971connection} based on spanning
sets. Both notions coincide for discrete-time control systems provided that a
strong invariance condition holds~\cite{ColoniusKawanNair13,Kawan13}.

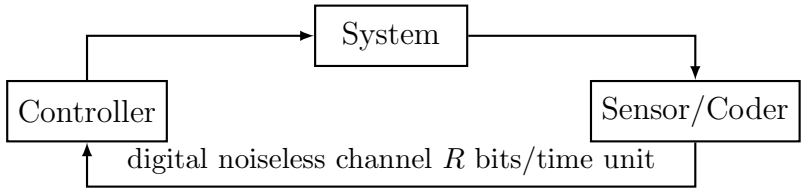
\begin{figure}[t]
 \centering
\begin{tikzpicture}[node distance=1.8cm, thick, >=latex]

\tikzset { 
	block/.style={ 
		draw,
		thick, 
		rectangle, 
		minimum height = .8cm, 
		minimum width = 2cm},
}
% first system
\draw node at (0,0) [block] (F)  {System};
\draw node at (4,-1) [block] (H)  {Sensor/Coder};
\draw node at (-4,-1) [block] (C)  {Controller};

%arrows
\draw[->] (F.east)  -|  (H.north);
\draw[->] (C.north)  |- (F.west);
\draw[->] (H.south)  |-       (0,-2)        node[above] {\small digital
	noiseless channel $R$ bits/time unit}  -|          (C.south);
\end{tikzpicture}
\caption{Coder-controller feedback loop.}\label{f:1}
\end{figure}

In this paper, we continue this line of research and introduce a notion of
invariance feedback entropy for uncertain control systems to characterize the
necessary state information required by any controller to enforce the
invariance condition in the closed loop. After we introduce the notation used in
this paper in Section \ref{s:notation}, we motivate the need of the novel
notion of invariance feedback entropy in Section \ref{s:motivation}.
We define
invariance feedback entropy and establish various elementary
properties in Section \ref{s:inv}. We show that the entropy is nonincreasing across two systems that
are related via a feedback refinement relation~\cite{reissig2016feedback}. This
result generalizes the fact that the invariance entropy of
deterministic control systems cannot increase under semiconjugation \cite[Thm 3.5]{ColoniusKawan09},
\cite[Prp. 2.13]{Kawan13}. We provide conditions that ensure that the invariance
feedback entropy is finite and show that we recover the notion of
invariance feedback entropy known for deterministic control systems, in the
deterministic case. We establish the data rate theorem in Section
\ref{s:data-rate}. It shows that the invariance entropy provides a tight lower
bound on the data rate of any coder-controller that enforces the invariance
specification in the closed loop. To this end, we introduce a history-dependent
notion of data rate. We discuss possible alternative data
rate definitions and motivate our particular choice by two examples.
We continue with the analysis of uncertain linear control systems in
Section~\ref{s:lin}. We derive a lower bound on the invariance feedback entropy.
The lower bound depends on the absolute value of the determinant of the system
matrix and a ratio involving the volume
of the invariant set and the set of uncertainties. The lower bound is invariant
under state space transformations and recovers the well-known minimal data
rate~\cite{NairFagniniZampieriEvans07} in the absence of uncertainties.
{\color{black} Similar to \cite[Section~II]{NairFagniniZampieriEvans07}, in the
	derivation we make use 
	of the Brunn-Minkowsky inequality for compact, measurable sets.}
Additionally, we derive a
lower bound of the data rate of any \emph{static}, \emph{memoryless}
coder-controller. Both lower bounds are intimately related 
and for certain cases it is possible to bound the performance
loss due to the restriction to static coder-controllers by 
$\log_2(1+1/2^{h_{\rm inv}{\color{black}(Q)}})$, where $h_{\rm inv}{\color{black}(Q)}$ is the invariance feedback
entropy of the uncertain linear systems, i.e., the best possible (dynamically)
achievable data rate, {\color{black}and $Q$ is the set of interest}. We show that the lower bounds are tight for certain
classes of systems.

A preliminary version of the results presented in Sections
\ref{s:motivation}-\ref{s:data-rate} appeared in~\cite{rungger2017invariance} and
the results on uncertain linear systems (Section \ref{s:lin}) appeared in \cite{rungger2017on}. This paper provides a detailed and
extended elaboration of the results proposed in
\cite{rungger2017invariance,rungger2017on}, including the new results  presented
in Theorem \ref{t:frr} and Theorem \ref{t:closed}.

\section{Notation}
\label{s:notation}

We denote by $\N$, $\Z$ and $\R$ the set of natural, integer and
real numbers, respectively. We annotate those symbols with subscripts to restrict the sets in
the obvious way, e.g. $\R_{>0}$ denotes the positive real numbers.
We denote the closed, open and half-open intervals in $\R$ with endpoints $a$ and $b$ by $\intcc{a,b}$,
$\intoo{a,b}$, $\intco{a,b}$, and $\intoc{a,b}$, respectively. The corresponding
intervals in $\Z$ are denoted by $\intcc{a;b}$, $\intoo{a;b}$, $\intco{a;b}$,
and $\intoc{a;b}$, i.e., $\intcc{a;b}=\intcc{a,b}\cap \Z$ and $\intco{a;a}=\emptyset$.

For a set $A$, we use $\no A\in \Z_{\ge0}\cup\{\infty\}$ to denote the number of
elements of $A$, i.e., if
$A$ is finite we have $\no A\in \Z_{\ge0}$ and  $\no A=\infty$ otherwise.
Given two sets $A$ and $B$, we say that $A$ is smaller (larger) than $B$ if $\no
A\le \no B$ ($\no A\ge \no B$) holds. 
A set $J$ of subsets of $A$ is said to \emph{cover} $B$, where $B\subseteq
A$, if $B$ is a subset of the union of {\color{black}elements of $J$ }.
A \emph{cover} of set $B$, is a set of subsets of $B$ that covers $B$.

{\color{black}We use $\exists_{a \in A}x =a$ to refer to: there exists $a$ in
	$A$ such that $x=a$. 
	In a similar way, $\forall_{a\in A}x=a$ is used.}
Given two sets $A,B\subseteq \R^n$, we define
the set addition by $A+B:=\{x\in\R^n\mid \exists_{a\in A
},\exists_{b\in B}\; x=a+b\}$ and denote by $\texttt{dim}(A)$ the dimension of set $A$. For $A=\{a\}$, we slightly abuse notation and use
$a+B=\{a\}+B$. The symbols $\cl A$, $\mathrm{int}A$ {\color{black}and $\wp(A)$} denote the closure,
the interior {\color{black}and the power set of a set} $A$, respectively.
We call a set $A\subseteq \R^n$ measurable if it is Lebesgue meas\-ur\-a\-ble and use $\mu(A)$
to denote its measure~\cite{tao2011introduction}. We use $\id$ to denote an identity map.

We follow \cite{RockafellarWets09} and use 
$f \colon A \rightrightarrows B$ to denote a \emph{set-valued map} from
$A$ into $B$, whereas $f \colon A \to B$ denotes an ordinary map.
If $f$ is set-valued, then $f$ is \emph{strict} if for every
$a\in A$ we have $f(a) \not= \emptyset$.
%\toremove{\color{black}For $D \subseteq B$ and set-valued $f$, by $f^{-1}(D)$ we refer to
%$\{ a\in A \mid f(a) \cap D \neq \emptyset  \}$.}
The restriction of $f$ to a subset $M \subseteq A$ is denoted by
$f|_{M}$. 
By convention we set $f|_\emptyset:=\emptyset$.  %and its domain by $\dom f=\{a\mid f(a)\neq \emptyset\}$.
The composition of $f:A\rightrightarrows B $ and
$g:C\rightrightarrows A$, $(f \circ g)(x) = f(g(x))$ is denoted by $f \circ g$. We use $B^A$
to denote the set of all functions $f:A\to B$.

%{\color{red} Strict is already explained.} {\color{brown}(Reviewer-1, comment-6 asked 'what is a strict relation'.)}  
%{\color{red} Yes, you can point him/her to the corresponding text above. No need to introduce it again.}
%{\color{black}We call a relation $R \subseteq A \times B$ strict if for every $a \in A$ we have $R(a) \neq \emptyset$.}
{\color{black}For a relation $R \subseteq A \times B$ and $D \subseteq A$, we define $R(D):=\cup_{d\in D} R(d).$}

The concatenation of two functions $x:\intco{0;a}\to X$ and
$y:\intco{0;b}\to X$ with $a\in \N$ and $b\in \N\cup\{\infty\}$
is denoted by $x y$ which we define by $x y(t):=x(t)$ for
$t\in\intco{0;a}$ and $x y(t):=y(t-a)$ for
$t\in\intco{a,a+b}$.

We use $\inf\emptyset=\infty$, $\log_2\infty=\infty$ and \mbox{$0
	\cdot \infty = 0$}.

\section{Motivation}
\label{s:motivation}
We study data rate constrained feedback for discrete-time \emph{uncertain
	control systems} described by 
difference inclusions of the form
\begin{equation}\label{e:ndi}
\xi(t+1)\in F(\xi(t),\nu(t))
\end{equation}
where $\xi(t)\in X$ is the \emph{state signal} and $\nu(t)\in U$ is the
\emph{input signal}. The sets $X$ and $U$ are referred to as \emph{state
	alphabet} and \emph{input alphabet}, respectively. The map $F:X\times
U\rightrightarrows X$ is called the
\emph{transition function}.

We are interested in coder-controllers that force the system
\eqref{e:ndi} to evolve inside a nonempty set $Q$ of the state alphabet $X$, i.e., every state signal $\xi$ of the
closed loop illustrated in Fig.~\ref{f:1} with $\xi(0)\in Q$ satisfies
$\xi(t)\in Q$ for all $t\in\Z_{\ge0}$. Specifically, we are interested in the
average data rate of such coder-controllers.

Notably, our system description is rather general and, depending on the structure of
alphabets  $X$ and $U$, we can represent a variety of commonly used system models. If we assume
$X$ and $U$ to be discrete, we can use~\eqref{e:ndi} to represent discrete event
systems\footnote{\small If \eqref{e:ndi} represents a discrete event system, the data
	rate unit is given in bits/event.}~\cite{CassandrasLafortune09} and digital/embedded
systems~\cite{BaierKatoen08}. Let us consider the following simple example.
\begin{example}\label{ex:intro}
	\normalfont
	Consider a system with state
	alphabet and input alphabet given by $X:=\{0,1,2\}$ and $U:=\{a,b\}$,
	respectively. The transition function is illustrated by
	\begin{center} 
		\begin{tikzpicture}[->,thick,shorten >=1pt]
		\tikzstyle{state} = [draw, circle, minimum height=1.5em, minimum width=1.5em]
		
		\node[state] (0) at (-2,0) {$0$};
		\node[thin,dashed,state] (1) at (0,0) {$1$};
		\node[state] (2) at (2,0) {$2$};
		
		\path (0) edge [bend left] node[very near end,above] {$a$} (2);
		\path (2) edge [bend left] node[very near end,below] {$b$} (0);
		
		\path[thin,dashed] (0) edge[bend left=10] node[above,near end]   {$b$} (1);
		\path[thin,dashed] (1) edge[bend left=10] node[below,near start] {$b$} (0);
		
		\path[thin,dashed] (2) edge[bend left=10] node[below,near end] {$a$} (1);
		\path[thin,dashed] (1) edge[bend left=10] node[above,near start] {$a$} (2);
		
		\path (0) edge[in=210,out=150,loop]  node[left] {$a$} (0);
		\path (2) edge[out=30,in=-30,loop]  node[right] {$b$} (2);
		\end{tikzpicture}
	\end{center}
	The set of interest is defined to $Q:=\{0,2\}$.  The transitions and states that
	lead, respectively, are outside
	$Q$ are indicated by dashed lines.
	When the system is in state $0$ the only valid input is
	given by $a$. Similarly, if the system is in state $2$ the only valid input is given by $b$. If the input $a$ is
	applied at $0$ at time $t$, the system can either be in $0$ or $2$ at time $t+1$.
	Note that the valid control inputs for the states $0$ and $2$ differ and the controller
	is required to have exact state information at every point in time.
	Due to the nondeterministic transition function, 
	it is not possible to determine the current state of the system based on the knowledge of
	the past states, the past control inputs and the transition function. Therefore, the controller
	can obtain the state information only through measurement, which implies that at
	least one bit needs to be transmitted at every time step.% over the feedback channel.
	\qed
\end{example}
Current theories~\cite{NairEvansMarrelsMoran04,ColoniusKawan09,Kawan13,Colonius15}
are unable to explain the minimal data rate of one bit per time step observed in Example~\ref{ex:intro}.

If we allow $X$ and $U$ to be (subsets of) Euclidean spaces, we are able 
to recover one of the most fundamental system models in control theory, i.e.,
the class of nonlinear control systems with bounded
uncertainties~\cite{FreemanKokotovic96,BlanchiniMiani08}. If the system
description is given in continuous-time, we can use~\eqref{e:ndi} to represent
the sampled-data system~\cite{LailaNesicAstolfi06} with sampling time
$\tau\in\R_{>0}$ as illustrated in
Fig.~\ref{f:sd}.
\begin{figure}[h]
	\begin{center} 
		\begin{tikzpicture}[thick]
      \tikzset{block/.style = {draw, rectangle, minimum height = 3em, minimum width = 3em}}
\small

\node[block] (zoh) at (-2.5,0) {$\mathrm{ZOH}$};
\node[block] (sys) at (0.5,0) {$\dot \xi_c= f(\xi_c,\nu_c,\omega)$};

\draw[->,>=latex] (zoh.east) --
%node[above] {$\forall_{s\in \intco{t\tau ,(t+1)\tau}}$}
node[yshift=-1cm,xshift=-2.5mm] {$\forall_{s\in \intco{t\tau,(t+1)\tau}}\:\nu_c(s)=\nu(t)$} (sys.west);

\draw[<-,>=latex] (zoh.west) -- node[near end,above] {$\nu(t)$} ++(-1.25,0);

\draw (sys.east) -- (2.5,0);

\draw (2.5,0) -- node[above] {$\tau$}(2.9,.35);
\draw[dashed,<-] (2.75,0) --  (2.5,.35);
\draw[->,>=latex] (3,0) --  node[above,near end] {$\xi(t)$} ++(1.5,0);
\draw[fill] (2.5,0) circle (1pt);
\draw[fill] (3,0) circle (1pt);

\node at (2.75,-1cm) {$\xi_c(t\tau)=\xi(t)$};

\draw[dashed] (-3.75,-1.5) rectangle (3.75,.8);
		
		\end{tikzpicture}
		\caption{Sampled-data discrete-time system.}\label{f:sd}
	\end{center}
\end{figure}
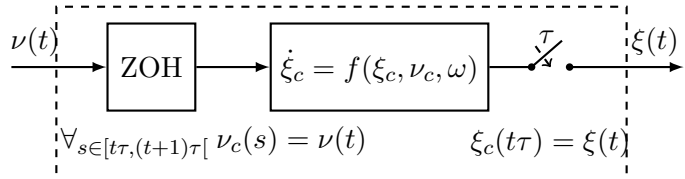
The disturbance signal $\omega$ is assumed to be
bounded $\omega(s)\in W\subseteq \R^p$ for all times $s\in\R_{\ge0}$.
The transition function $F(x,u)$ is defined as the set of states that are
reachable by the continuous-time system at time $\tau$ from initial state
$x$ under constant input signal $\nu_c(s)=u$ and a bounded disturbance signal
$\omega$. If the continuous-time dynamic is linear, the sampled-data system results in a
discrete-time system of the form
\begin{equation}\label{e:sys:lin}
\xi(t+1)\in A\xi(t)+ B\nu(t)+W
\end{equation}
where $A$ and $B$ are matrices of appropriate dimension  and $W$ is a nonempty
set representing the uncertainties.
%and $\xi(t)$ and $\nu(t)$ is the state, respectively, input signal.
%Here $x+W$ for $x\in\R^n$ denotes the Minkowsky set addition, which we define in
%the next section. 
%
\begin{example}\label{ex:linear}
	\normalfont
	Consider an instance of~\eqref{e:sys:lin} with $X:=\R$, $U:=\intcc{-1,1}$ and 
	\begin{equation*}
	F(x,u):=\tfrac{1}{2}x+u+\intcc{-3,3}
	\end{equation*}
	with the set of constraints given by $Q:=\intcc{-4,4}$.  \qed
\end{example}
For Example~\ref{ex:linear}, we establish in Section~\ref{s:lin}, that the smallest possible data rate of a
coder-controller that enforces $Q$ to be invariant is one bit per time step.
{\color{black} The example demonstrates that in contrast to linear systems without
	disturbances, where the data rate depends only on the unstable eigenvalues, see
	e.g.~\cite[Thm.~5.1]{ColoniusKawan09} or~\cite{tatikonda2004control}, for
	systems of the form \eqref{e:sys:lin} the
	data rate depends among other things also on the stable eigenvalues.}

%This is contrasts the results that are known for data rates of
%coder-controllers for controlled invariance for linear
%systems without disturbances and 
%
%This is in stark contrast to what is known for data rate constrained feedback
%control of linear systems with bounded disturbances in the context of asymptotic stabilization (or norm
%boundedness)~\cite[Thm.~1]{NairEvansMarrelsMoran04}, or for . 
%Both results suggest that the data rate
%should be zero, since the eigenvalue of the system matrix in Example~\ref{ex:linear} is given by
%$1/2$.

\section{Invariance Feedback Entropy}
\label{s:inv}

We introduce the notion of invariance feedback entropy and establish some
elementary properties. 

\subsection{The entropy}

Formally, we define a \emph{system} as triple 
\begin{equation}\label{e:sys}
\Sigma:=(X,U,F)
\end{equation}
where $X$ and $U$ are
nonempty sets and $F:X\times U\rightrightarrows X$ is assumed to be 
strict. 
A \emph{trajectory} of \eqref{e:sys} on $\intco{0;\tau}$ with $\tau\in\N\cup\{\infty\}$ is a pair
of sequences $(\xi,\nu)$, consisting of a state signal
\mbox{$\xi:\intco{0;\tau+1}\to X$} and an input signal $\nu:\intco{0;\tau}\to U$, that
satisfies~\eqref{e:ndi} for all $t\in\intco{0;\tau}$. We denote the set of all
trajectories on $\intco{0;\infty}$ by $\mathcal{B}(\Sigma)$.

Throughout the paper, we call a system  $(X,U,F)$
\emph{finite} if $X$ and $U$ are finite. 

We follow~\cite{NairEvansMarrelsMoran04} and~\cite[Sec.~6]{ColoniusKawan09} and define
the invariance feedback entropy with the help of covers of $Q$. 
Consider the system $\Sigma=(X,U,F)$ and a nonempty set
$Q\subseteq X$. A cover $\mathcal{A}$ of $Q$ and a function $G:\mathcal{A}\to U$ is called an
\emph{invariant cover}  $(\mathcal{A},G)$ of $\Sigma$ and $Q$ if $\mathcal{A}$ is finite and for all
$A\in \mathcal{A}$ we have $F(A,G(A))\subseteq Q$.

Consider an invariant cover $(\mathcal{A},G)$ of $\Sigma$ and $Q$, fix
$\tau\in\N$ and let $\mathcal{S}\subseteq
\mathcal{A}^{\intco{0;\tau}}$ be a set of sequences in $\mathcal{A}$. For
$\alpha\in \mathcal{S}$ and $t\in\intco{0;\tau-1}$ we define 
\begin{equation*}
P(\alpha|_{\intcc{0;t}}):=\{A\in\mathcal{A}\mid \exists_{\hat
	\alpha\in\mathcal{S}}\;\hat\alpha|_{\intcc{0;t}}=\alpha|_{\intcc{0;t}}\wedge
A=\hat\alpha(t+1)\}.
\end{equation*}
The set $P(\alpha|_{\intcc{0;t}})$ contains the cover elements $A$
so that the sequence $\alpha|_{\intcc{0;t}}A$ can be extended to a sequence in
$\mathcal{S}$.
For $t=\tau-1$ we have $\alpha|_{\intcc{0;\tau-1}}=\alpha$ and we define for
notational convenience the set
\begin{equation*}
P(\alpha):=\{A\in\mathcal{A}\mid \exists_{\hat \alpha\in\mathcal{S}}\;
A=\hat\alpha(0)\}
\end{equation*}
which is actually independent of $\alpha\in\mathcal{S}$ and corresponds to the
``initial'' cover elements $A$ in $\mathcal{S}$, i.e., there exists $\alpha\in
\mathcal{S}$ with $A=\alpha(0)$.
A set
$\mathcal{S}\subseteq\mathcal{A}^{\intco{0;\tau}}$ %with %$\tau\in\N$
is called
\emph{$(\tau,Q)$-spanning} in $(\mathcal{A},G)$ if the set $P(\alpha)$  with
$\alpha\in\mathcal{S}$ covers
$Q$ and we have
\begin{equation}\label{e:icover:spanning}
\forall_{\alpha\in\mathcal{S}}\forall_{t\in \intco{0;\tau-1}}\;
F(\alpha(t),G(\alpha(t)))\subseteq \bigcup_{A'\in
	P(\alpha|_{\intcc{0;t}})}A'.
\end{equation}
We associate with every $(\tau,Q)$-spanning set $\mathcal{S}$ the
\emph{expansion number} $N(\mathcal{S})$, which we define by 
\begin{equation*}
N(\mathcal{S}):=\max_{\alpha\in\mathcal{S}} \prod_{t=0}^{\tau-1}
\no P(\alpha|_{\intcc{0;t}}).
\end{equation*}
A tight lower bound on the expansion number of any $(\tau,Q)$-spanning set
$\mathcal{S}$ in $(\mathcal{A},G)$ is given by
\begin{equation*}
r_{\rm inv}(\tau,Q):=\min\left\{ N(\mathcal{S}) \mid \mathcal{S} \text{ is
	$(\tau,Q)$-spanning in } (\mathcal{A},G) \right\}.
\end{equation*}
We define the \emph{entropy} of an invariant cover $(\mathcal{A},G)$ by
\begin{equation}\label{e:icover:entropy}
h(\mathcal{A},G):=\lim_{\tau\to\infty}\frac{1}{\tau}\log_2 r_{\rm inv}(\tau,Q).
\end{equation}
As shown in Lemma~\ref{l:subadditivity} (stated below), the limit of the
sequence
in~\eqref{e:icover:entropy} exists so  that the
entropy of an invariant cover $(\mathcal{A},G)$ is well-defined.

The \emph{invariance feedback entropy} of $\Sigma$ and $Q$ follows by 
\begin{equation}\label{e:entropy}
h_{\rm inv}{\color{black}(Q)}:=\inf_{(\mathcal{A},G)}h(\mathcal{A},G)
\end{equation}
where we take the infimum over all $(\mathcal{A},G)$ invariant covers of $\Sigma$ and
$Q$. Let us revisit the examples from the previous section to illustrate the
various definitions. 

\begin{example}[continues=ex:intro]
	\normalfont
	First, we determine an invariant cover $(\mathcal{A},G)$ of the system in
	Example~\ref{ex:intro} and $Q$. Since the system is
	finite, we can set $\mathcal{A}:=\{\{x\}\mid x\in Q\}$. Recall that $Q=\{0,2\}$ and a suitable
	function $G$ is given by $G(\{0\}):=a$ and $G(\{2\}):=b$. Suppose that 
	$\mathcal{S}\subseteq \mathcal{A}^{\intco{0;\tau}}$ is $(\tau,Q)$-spanning
	with $\tau\in\N$. Let us look at condition~\eqref{e:icover:spanning} for $t\in\intco{0;\tau-1}$ and
	$\alpha\in\mathcal{S}$. If $\alpha(t)=\{0\}$, we have
	$P(\alpha {\color{black}|_\intcc{0;t}})=\{\{0\},\{2\}\}$ since $F(\{0\},G(\{0\}))=F(0,a)=\{0,2\}$. If
	$\alpha(t)=\{2\}$ the same reasoning leads to $P(\alpha{\color{black}|_\intcc{0;t}})=\{\{0\},\{2\}\}$.
	Also for $\alpha\in\mathcal{S}$ we have $P(\alpha)=\{\{0\},\{2\}\}$ since
	$P(\alpha)$ is required to be a cover of $Q$. It follows that
	$\mathcal{S}=\mathcal{A}^{\intco{0;\tau}}$ and the expansion number
	$N(\mathcal{S})=r_{\rm inv}(\mathcal{A},G)=2^\tau$ so that the entropy of the
	$(\mathcal{A},G)$ follows to $h(\mathcal{A},G)=1$. Since
	$(\mathcal{A},G)$ is the only invariant cover we obtain $h_{\rm inv}{\color{black}(Q)}=1$.
	\qed
\end{example}
\begin{example}[continues=ex:linear]
	\normalfont
	Let us recall the linear system in Example~\ref{ex:linear}. An invariant cover
	$(\mathcal{A},G)$ is given by $\mathcal{A}:=\{a_0,a_1\}$ with
	$a_0:=\intcc{-4,0}$, $a_1:=\intcc{0,4}$ and $G(a_0):=1$, $G(a_1):=-1$. 
	{\color{black}Let $\mathcal{S}$ be any $(\tau,Q)$-spanning set in $(\mathcal{A},G)$.
		As $P(\alpha)\subseteq \mathcal{A}$ is required to cover $Q$, so $P(\alpha) = \mathcal{A}$.
		For $a_i\in\mathcal{A}$, $i\in\{0,1\}$ we have $F(a_i,G(a_i)) = \intcc{-4;4}$ which makes $P(a_i)=\mathcal{A}$. Thus $\mathcal{S} = \mathcal{A}^\intco{0;\tau}$.       
		%  We use a similar reasoning as in
		%  Example~\ref{ex:intro} to see that for every $\tau\in\N$ the only
		%  $(\tau,Q)$-spanning set is $\mathcal{S}:=\mathcal{A}^{\intco{0;\tau}}$. 
	}
	Since $\no \mathcal{A}=2$, we obtain that
	$h(\mathcal{A},G)=1$.\qed
\end{example}
We continue with showing the subadditivity of $\log_2 r_{\rm inv}(\cdot,Q)$.
\begin{lemma} \label{l:subadditivity}
	Consider the system $\Sigma=(X,U,F)$ and a non\-empty set $Q\subseteq X$.
	Let $(\mathcal{A},G)$ be an invariant cover of $\Sigma$ and $Q$,
	then the function
	$\tau\mapsto \log_2 r_{\rm inv}(\tau,Q)$, $\N\to \R_{\ge0}$ is subadditive, i.e., 
	for all $\tau_1,\tau_2\in\N$ the inequality
	$$
	\log_2 r_{\rm inv}(\tau_1+\tau_2,Q)
	\le
	\log_2 r_{\rm inv}(\tau_1,Q) 
	+ 
	\log_2 r_{\rm inv}(\tau_2,Q)
	$$ holds and we have
	\begin{equation}\label{e:ifb:equiv}
	\lim_{\tau\to\infty}\frac{1}{\tau}\log_2 r_{\rm inv}(\tau,Q)
	=
	\inf_{\tau\in\N}\frac{1}{\tau}\log_2 r_{\rm inv}(\tau,Q).
	\end{equation}
\end{lemma}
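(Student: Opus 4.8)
The plan is to establish the subadditivity inequality first and then invoke Fekete's subadditive lemma to obtain the identity~\eqref{e:ifb:equiv}, noting that convergence of the limit in~\eqref{e:icover:entropy} follows as a byproduct. The key observation is a concatenation argument: given $\tau_1,\tau_2 \in \N$, take a $(\tau_1,Q)$-spanning set $\mathcal{S}_1$ achieving $N(\mathcal{S}_1) = r_{\rm inv}(\tau_1,Q)$ and a $(\tau_2,Q)$-spanning set $\mathcal{S}_2$ achieving $N(\mathcal{S}_2) = r_{\rm inv}(\tau_2,Q)$, both in the same invariant cover $(\mathcal{A},G)$. I would build a candidate $(\tau_1+\tau_2,Q)$-spanning set $\mathcal{S}$ by concatenating sequences: roughly, $\mathcal{S} := \{\alpha\beta \mid \alpha \in \mathcal{S}_1,\ \beta \in \mathcal{S}_2,\ \alpha(\tau_1) \in P_{\mathcal{S}_2}(\beta)\text{-compatible}\}$, i.e., glue $\beta$ onto $\alpha$ whenever the terminal cover element reached by $\alpha$ lies among the initial cover elements of the chosen continuation in $\mathcal{S}_2$. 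One must be slightly careful with indexing since a sequence in $\mathcal{A}^{\intco{0;\tau_1}}$ together with its ``successor'' data $P(\alpha)$ effectively specifies $\tau_1+1$ cover elements; the natural gluing point is where the spanning condition~\eqref{e:icover:spanning} for $\mathcal{S}_1$ hands off to the initial-cover condition $P(\beta)$ covering $Q$ for $\mathcal{S}_2$.

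Next I would verify that this concatenated $\mathcal{S}$ is indeed $(\tau_1+\tau_2,Q)$-spanning in $(\mathcal{A},G)$: the initial-cover condition $\bigcup_{\alpha \in \mathcal{S}} P(\alpha) \supseteq Q$ is inherited from $\mathcal{S}_1$ since the first $\tau_1$ coordinates of elements of $\mathcal{S}$ range over all of $\mathcal{S}_1$ (after possibly enlarging to ensure every $\alpha \in \mathcal{S}_1$ gets at least one valid continuation, which is possible because $\mathcal{S}_2$ is $(\tau_2,Q)$-spanning and hence its $P(\beta)$ covers $Q$, in particular covers the terminal cell reached by $\alpha$ whose image under $F(\cdot,G(\cdot))$ lies in $Q$). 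For $t \in \intco{0;\tau_1-1}$ the spanning inequality for $\mathcal{S}$ follows from that of $\mathcal{S}_1$ because $P_{\mathcal{S}}(\alpha\beta|_{\intcc{0;t}})$ contains $P_{\mathcal{S}_1}(\alpha|_{\intcc{0;t}})$ on that initial segment; for $t$ in the second block, a shift-by-$\tau_1$ reduction to the spanning inequality for $\mathcal{S}_2$ applies, using that conditioning on the full first block plus the first few coordinates of the second block is no coarser than conditioning on just the second-block prefix. Then for the expansion number I estimate, for any $\alpha\beta \in \mathcal{S}$,
\begin{equation*}
  \prod_{t=0}^{\tau_1+\tau_2-1} \no P_{\mathcal{S}}(\alpha\beta|_{\intcc{0;t}})
  \le
  \Bigl(\prod_{t=0}^{\tau_1-1}\no P_{\mathcal{S}_1}(\alpha|_{\intcc{0;t}})\Bigr)
  \Bigl(\prod_{t=0}^{\tau_2-1}\no P_{\mathcal{S}_2}(\beta|_{\intcc{0;t}})\Bigr)
  \le N(\mathcal{S}_1)\,N(\mathcal{S}_2),
\end{equation*}
where the first inequality uses the two containments of $P$-sets just described (the conditioning only gets finer along $\mathcal{S}$, so each factor is bounded by the corresponding factor for $\mathcal{S}_1$ or $\mathcal{S}_2$). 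Taking the max over $\alpha\beta$ gives $N(\mathcal{S}) \le N(\mathcal{S}_1)N(\mathcal{S}_2) = r_{\rm inv}(\tau_1,Q)\,r_{\rm inv}(\tau_2,Q)$, hence $r_{\rm inv}(\tau_1+\tau_2,Q) \le N(\mathcal{S}) \le r_{\rm inv}(\tau_1,Q)\,r_{\rm inv}(\tau_2,Q)$, and taking $\log_2$ yields subadditivity. Note $r_{\rm inv}(\tau,Q) \ge 1$ always (each $\no P \ge 1$ since $(\mathcal{A},G)$ covers $Q$ and spanning sets are nonempty), so $\log_2 r_{\rm inv}(\tau,Q) \in \R_{\ge 0}$ as claimed.

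Finally, Fekete's lemma states that for a subadditive sequence $a_\tau := \log_2 r_{\rm inv}(\tau,Q)$ the limit $\lim_{\tau\to\infty} a_\tau/\tau$ exists in $\intcc{-\infty,\infty}$ and equals $\inf_{\tau\in\N} a_\tau/\tau$; since $a_\tau \ge 0$ the infimum is finite and nonnegative, giving~\eqref{e:ifb:equiv} and simultaneously the existence of the limit asserted after~\eqref{e:icover:entropy}. The main obstacle I anticipate is purely bookkeeping: getting the index shift at the gluing point exactly right so that the concatenated sequences genuinely satisfy~\eqref{e:icover:spanning} at every $t$, and confirming the two $P$-set containments $P_{\mathcal{S}_1}(\alpha|_{\intcc{0;t}}) \subseteq P_{\mathcal{S}}(\alpha\beta|_{\intcc{0;t}})$ and the analogous shifted one for $\mathcal{S}_2$ are oriented correctly — the direction matters because we need each factor in the product to shrink, not grow, when passing to $\mathcal{S}$. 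There is no analytic difficulty; once the combinatorial construction is set up cleanly the inequality and the appeal to Fekete are immediate.
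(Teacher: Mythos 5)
Your plan is essentially the paper's proof: concatenate a minimal $(\tau_1,Q)$-spanning set with a minimal $(\tau_2,Q)$-spanning set, verify the spanning condition (at the gluing time $t=\tau_1-1$ using that $F(\alpha(t),G(\alpha(t)))\subseteq Q$ and that the initial elements of $\mathcal{S}_2$ cover $Q$), factor the expansion-number product into $N(\mathcal{S}_1)N(\mathcal{S}_2)$, and apply Fekete's subadditive lemma. The only difference is that the paper takes \emph{all} concatenations $\alpha_1\alpha_2$ with no compatibility filter, which turns your $P$-set containments into exact equalities (with the ``initial'' factor of $\mathcal{S}_2$ appearing at $t=\tau_1-1$ and that of $\mathcal{S}_1$ at $t=\tau_1+\tau_2-1$) and thereby removes the direction-of-inclusion worry you raise.
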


The following lemma might be of independent interest. 
We use it in the proof of Theorem~\ref{t:invdet}.
\begin{lemma}\label{l:nospanningset}
	Consider an invariant cover $(\mathcal{A},G)$ of~\eqref{e:sys} and some
	nonempty set $Q\subseteq X$. Let $\mathcal{S}$ be a $(\tau,Q)$-spanning set,
	then we have $\no \mathcal{S}\le N(\mathcal{S})$.
\end{lemma}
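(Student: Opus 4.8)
The plan is to bound $\no \mathcal{S}$ by showing that distinct sequences $\alpha,\hat\alpha\in\mathcal{S}$ must ``branch'' at some time, and every branching point forces at least two choices to be counted in the relevant product $\prod_{t=0}^{\tau-1}\no P(\alpha|_{\intcc{0;t}})$. Concretely, I would fix the maximizing sequence $\alpha^\ast$ attaining $N(\mathcal{S})=\prod_{t=0}^{\tau-1}\no P(\alpha^\ast|_{\intcc{0;t}})$ and relate it — or rather, a more refined count — to $\no\mathcal{S}$.

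First I would set up the following claim, proved by induction on $\tau$: for every invariant cover $(\mathcal{A},G)$ and every $(\tau,Q)$-spanning set $\mathcal{S}$, one has $\no\mathcal{S}\le \max_{\alpha\in\mathcal{S}}\prod_{t=0}^{\tau-1}\no P(\alpha|_{\intcc{0;t}})$, which is exactly $N(\mathcal{S})$. For the base case $\tau=1$, a sequence in $\mathcal{S}$ is determined by its single value $\alpha(0)\in P(\alpha)$ (recall $P(\alpha)$ is independent of $\alpha$), so $\no\mathcal{S}\le\no P(\alpha)=\prod_{t=0}^{0}\no P(\alpha|_{\intcc{0;t}})$. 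For the inductive step, partition $\mathcal{S}$ according to the value $A_0:=\alpha(0)\in P(\alpha)$; write $\mathcal{S}=\bigcup_{A_0} \mathcal{S}_{A_0}$ with $k$ nonempty parts, where $k\le \no P(\alpha)$. Each part $\mathcal{S}_{A_0}$, after dropping the first coordinate, is a $(\tau-1,Q')$-spanning-type set for the ``shifted'' problem — here I need to check that the defining condition \eqref{e:icover:spanning} and the role of the sets $P(\cdot)$ behave correctly under this shift, noting that for $\beta\in\mathcal{S}_{A_0}$ the tail-$P$-sets computed within $\mathcal{S}_{A_0}$ are subsets of those computed within $\mathcal{S}$. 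By the induction hypothesis applied to each shifted part, $\no\mathcal{S}_{A_0}\le\max_{\beta}\prod_{t=1}^{\tau-1}\no P(\beta|_{\intcc{0;t}})$, and this product is bounded by the corresponding product along some full sequence in $\mathcal{S}$ continuing $A_0$. Summing over the $k\le\no P(\alpha)$ parts, $\no\mathcal{S}=\sum_{A_0}\no\mathcal{S}_{A_0}\le \sum_{A_0}\max_\beta\prod_{t=1}^{\tau-1}\no P(\beta|_{\intcc{0;t}})\le \no P(\alpha)\cdot\max_\alpha\prod_{t=1}^{\tau-1}\no P(\alpha|_{\intcc{0;t}})$, which is at most $\max_\alpha\prod_{t=0}^{\tau-1}\no P(\alpha|_{\intcc{0;t}})=N(\mathcal{S})$.

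The main obstacle I anticipate is the bookkeeping in the inductive step: $N(\mathcal{S})$ is a maximum over sequences of a product, not itself a product, so I cannot naively split it multiplicatively. The clean way around this is to prove the sharper statement by induction on $\tau$ for \emph{prefix-trees}: think of $\mathcal{S}$ as (a subset of) the set of leaves of a tree of depth $\tau$ whose node at the end of a prefix $\alpha|_{\intcc{0;t}}$ has out-degree at most $\no P(\alpha|_{\intcc{0;t}})$, and observe that the number of leaves of such a tree is at most the maximum over root-to-leaf paths of the product of out-degrees along the path. That combinatorial fact is itself a one-line induction: if the root has children $c_1,\dots,c_k$ and subtree $i$ has $L_i$ leaves with $L_i\le M_i$ (its own path-product bound), then total leaves $\sum_i L_i\le \sum_i M_i\le k\max_i M_i$, and $k\le\no P(\alpha)$. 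I would present Lemma~\ref{l:nospanningset} as an immediate corollary of this tree lemma, after verifying that the map $\alpha\mapsto(\alpha(0),\dots,\alpha(\tau-1))$ realizes $\mathcal{S}$ as the leaf set of a tree with the stated out-degree bounds — which is essentially the definition of $P(\alpha|_{\intcc{0;t}})$.
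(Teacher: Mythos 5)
Your proposal is correct and, in its final form, is essentially the paper's own argument: the paper's sets $Y(a_0\ldots a_t)$ are precisely the leaf sets of the subtrees of your prefix tree, and its recursive bound $\no Y(a_0\ldots a_t)\le \no P(a_0\ldots a_t)\max_{a_{t+1}}\no Y(a_0\ldots a_{t+1})$, unrolled and combined with $\no\mathcal{S}_0=\no P(\alpha)$, is exactly your $\sum_i L_i\le k\max_i M_i$ tree lemma. The only point to make explicit is the index bookkeeping that the root's out-degree $\no\mathcal{S}_0$ supplies the $t=\tau-1$ factor $\no P(\alpha)$ in $N(\mathcal{S})$, which you have implicitly and the paper handles the same way.
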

The proofs of both lemmas are given in the appendix.

\subsection{Entropy across related systems}

One of the most important properties of entropy of classical dynamical systems
is its invariance under any change of
coordinates~\cite[Thm.~1]{adler1965topological}. In~\cite{ColoniusKawan09}
this property has been shown for deterministic control systems in the context of
semiconjugation~\cite[Thm.~3.5]{ColoniusKawan09}.
In the following, we present a result in the
context of feedback refinement relations~\cite{reissig2016feedback}, which
contains the result on semiconjugation as a special case.

\begin{definition}\label{d:frr}
	Let $\Sigma_1$ and $\Sigma_2$ be two systems of the form
	\begin{equation} \label{e:systems}
	\Sigma_i=(X_i, U_i, F_i) \text{ with } i\in\{1,2\}.
	\end{equation}
	A strict relation $R \subseteq X_1 \times X_2$ is a
	\emph{feedback refinement relation} from $\Sigma_1$ to $\Sigma_2$ if there exists a map
	$r:U_2\to U_1$ so that the following inclusion holds for all $(x_1,x_2) \in {\color{black}R}$
	and $u\in U_2$
	\begin{equation} \label{e:frr}
	R(F_1(x_1,r(u))) \subseteq F_2(x_2, u).
	\end{equation}
\end{definition}

\begin{theorem}\label{t:frr}
	Consider two systems $\Sigma_i$, $i\in\{1,2\}$ of the form~\eqref{e:systems}.
	Let $Q_1$ and $Q_2$ be two nonempty subsets of $X_1$ and $X_2$, respectively.
	Suppose that $R$ is a feedback refinement relation from $\Sigma_1$ to $\Sigma_2$,
	and $Q_1=R^{-1}(Q_2)$.
	%{\color{black}and $R^{-1}:Q_2\rightrightarrows Q_1$ is strict. }
	Then
	\begin{equation} \label{e:frr:ineq}
	h_{1,\rm inv}{\color{black}(Q_1)}\le h_{2,\rm inv}{\color{black}(Q_2)}
	\end{equation}
	holds, where $h_{i,\rm inv}{\color{black}(Q_i)}$ is the invariance feedback entropy of $\Sigma_i$
	and $Q_i$. 
\end{theorem}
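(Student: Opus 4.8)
The plan is to show that for every invariant cover $(\mathcal{A}_2,G_2)$ of $\Sigma_2$ and $Q_2$ there is an invariant cover $(\mathcal{A}_1,G_1)$ of $\Sigma_1$ and $Q_1$ with $h(\mathcal{A}_1,G_1)\le h(\mathcal{A}_2,G_2)$; passing to the infimum over $(\mathcal{A}_2,G_2)$ in~\eqref{e:entropy} then gives~\eqref{e:frr:ineq}. So fix $(\mathcal{A}_2,G_2)$, write $\pi(A):=R^{-1}(A)\cap Q_1$ for $A\subseteq X_2$, set $\mathcal{A}_1:=\{\pi(A)\mid A\in\mathcal{A}_2\}\setminus\{\emptyset\}$, pick for each $B\in\mathcal{A}_1$ some $A_B\in\mathcal{A}_2$ with $\pi(A_B)=B$, and define $G_1(B):=r(G_2(A_B))$ with $r$ the map from Definition~\ref{d:frr}. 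A basic consequence of~\eqref{e:frr} and strictness of $R$, used throughout, is that $(x_1,x_2)\in R$ with $x_1\in Q_1$ and $x_2\in Q_2$ implies $F_1(x_1,r(u))\subseteq R^{-1}(F_2(x_2,u))$ for all $u\in U_2$ (take $y\in F_1(x_1,r(u))$; then $\emptyset\ne R(y)\subseteq F_2(x_2,u)$, hence $y\in R^{-1}(F_2(x_2,u))$).

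Next I would verify that $(\mathcal{A}_1,G_1)$ is an invariant cover of $\Sigma_1$ and $Q_1$. Finiteness is clear from finiteness of $\mathcal{A}_2$. For the covering property, any $x_1\in Q_1=R^{-1}(Q_2)$ admits $x_2\in Q_2$ with $(x_1,x_2)\in R$ and then some $A\in\mathcal{A}_2$ with $x_2\in A$, so $x_1\in\pi(A)$. For invariance, let $B\in\mathcal{A}_1$ and $x_1\in B=R^{-1}(A_B)\cap Q_1$, and pick $x_2\in A_B$ with $(x_1,x_2)\in R$. Since $F_2(A_B,G_2(A_B))\subseteq Q_2$, the basic consequence gives $F_1(x_1,G_1(B))\subseteq R^{-1}(F_2(x_2,G_2(A_B)))\subseteq R^{-1}(Q_2)=Q_1$, hence $F_1(B,G_1(B))\subseteq Q_1$.

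The heart of the proof is to transport spanning sets so as to compare expansion numbers. Fix $\tau\in\N$ and let $\mathcal{S}_2$ be a $(\tau,Q_2)$-spanning set in $(\mathcal{A}_2,G_2)$. The induced map on sequences $\alpha\mapsto\pi\circ\alpha$ carries $\mathcal{S}_2$ to a set $\mathcal{S}_1\subseteq\mathcal{A}_1^{\intco{0;\tau}}$ (possibly after a suitable restriction, see below). One checks that $\mathcal{S}_1$ is $(\tau,Q_1)$-spanning in $(\mathcal{A}_1,G_1)$: its set of initial cover elements is the $\pi$-image of that of $\mathcal{S}_2$ and thus covers $Q_1$ by the covering argument above; and for $\beta=\pi\circ\alpha\in\mathcal{S}_1$ and $t\in\intco{0;\tau-1}$, combining the basic consequence, property~\eqref{e:icover:spanning} for $\mathcal{S}_2$, and the invariance just established yields $F_1(\beta(t),G_1(\beta(t)))\subseteq\bigcup_{A'\in P(\alpha|_{\intcc{0;t}})}\pi(A')$, and each $\pi(A')$ lies in the corresponding set $P(\beta|_{\intcc{0;t}})$ because any $\hat\alpha\in\mathcal{S}_2$ realising $A'$ there gives $\pi\circ\hat\alpha\in\mathcal{S}_1$ realising $\pi(A')$. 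Using that, conversely, a branch of $\mathcal{S}_1$ at any node lifts to a branch of $\mathcal{S}_2$ (the delicate point, see below), one obtains $\no P(\beta|_{\intcc{0;t}})\le\no P(\alpha|_{\intcc{0;t}})$ for all $t$, hence $N(\mathcal{S}_1)\le N(\mathcal{S}_2)$, and therefore $r_{\rm inv}(\tau,Q_1)\le r_{\rm inv}(\tau,Q_2)$ (the left side formed in $(\mathcal{A}_1,G_1)$, the right in $(\mathcal{A}_2,G_2)$). Dividing by $\tau$, taking $\log_2$, letting $\tau\to\infty$ and invoking Lemma~\ref{l:subadditivity} gives $h(\mathcal{A}_1,G_1)\le h(\mathcal{A}_2,G_2)$.

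I expect the main obstacle to be exactly the caveats flagged above: the map $\pi=R^{-1}(\cdot)\cap Q_1$ need not be injective on $\mathcal{A}_2$, so the forced input $G_1(\beta(t))=r(G_2(A_{\beta(t)}))$ is tied to the chosen representative rather than to $\alpha(t)$, and a node of $\mathcal{S}_1$ may collapse several prefixes of $\mathcal{S}_2$ and hence, a priori, accumulate more successor cover elements than the corresponding node of $\mathcal{S}_2$. The argument must therefore fix the representatives $A_B$ and the transported set $\mathcal{S}_1$ together, so that every applied input and every branch occurring in the $\Sigma_1$ picture is inherited from $\mathcal{S}_2$ without inflating the branching. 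This step is automatic — and the theorem reduces to the known invariance of the entropy under semiconjugation~\cite{ColoniusKawanNair13,Kawan13} — when $R$ is the graph of a surjective map $q\colon X_1\to X_2$, since then $\pi$ is injective on subsets of $Q_2$ and $\mathcal{S}_1=\{\pi\circ\alpha\mid\alpha\in\mathcal{S}_2\}$ works verbatim.
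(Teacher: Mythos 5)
You have reproduced the paper's own route: the preimage cover $\mathcal{A}_1=\{R^{-1}(A_2)\mid A_2\in\mathcal{A}_2\}$ with $G_1(R^{-1}(A_2)):=r(G_2(A_2))$, the verification that this is an invariant cover of $\Sigma_1$ and $Q_1$, the elementwise-$R^{-1}$ transport of a $(\tau,Q_2)$-spanning set $\mathcal{S}_2$ to a set $\mathcal{S}_1$, the comparison of expansion numbers, and the passage to entropies via Lemma~\ref{l:subadditivity} and the infimum over covers of $\Sigma_2$ and $Q_2$. So in approach you and the paper coincide. The step you explicitly leave open --- that the applied inputs and the branching of $\mathcal{S}_1$ are inherited from $\mathcal{S}_2$ when $A_2\mapsto R^{-1}(A_2)$ fails to be injective on $\mathcal{A}_2$ --- is precisely the step the paper does not argue either: its definition of $G_1$ is well posed, and its assertions $P(\alpha_1|_{\intcc{0;t}})=P(\alpha_2|_{\intcc{0;t}})$ and $N(\mathcal{S}_1)=N(\mathcal{S}_2)$ are justified, exactly when distinct elements of $\mathcal{A}_2$ have distinct preimages (in particular when $R$ is the graph of a map, the case you single out); the printed proof implicitly treats this as given. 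Relative to the paper, then, you have not missed an argument that the paper supplies; you have flagged a gap that the two of you share.

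The flag is substantive rather than pedantic. If $A_2\neq A_2'$ in $\mathcal{A}_2$ satisfy $R^{-1}(A_2)=R^{-1}(A_2')$ and both occur in $\mathcal{S}_2$ with different successor behaviour, then in the image the corresponding prefixes merge: a single node of $\mathcal{S}_1$ collects successors stemming from several $\mathcal{S}_2$-prefixes, so $\no P(\alpha_1|_{\intcc{0;t}})$ can strictly exceed $\no P(\alpha_2|_{\intcc{0;t}})$ for every individual preimage sequence (one can arrange abstract prefix trees in which the maximal product, and hence the expansion number of the naive image, strictly increases under such a collapse), and $G_1$ at the merged element agrees with $r(G_2(\cdot))$ for only one of the originals, so \eqref{e:icover:spanning} for $\mathcal{S}_1$ cannot be verified from the spanning property of $\mathcal{S}_2$ alone --- only $F_1(\beta(t),G_1(\beta(t)))\subseteq Q_1$ survives. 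Closing this degenerate case therefore needs an additional idea beyond what either you or the paper write down; one option that bypasses covers entirely is to refine an arbitrary $Q_2$-admissible coder-controller of $\Sigma_2$ into a $Q_1$-admissible one of $\Sigma_1$ (encode $\theta\circ\xi_1$ for a selection $\theta(x_1)\in R(x_1)$ chosen inside $Q_2$ whenever possible, and apply $r$ to the decoded inputs), check that its transmission data rate is no larger, and invoke Theorem~\ref{t:datarate}. Your closing paragraph correctly locates where the remaining work lies; it just does not carry it out, and neither does the paper's proof.
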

\begin{proof}
	{\color{black}
		If $h_{2,\rm inv}(Q_2)=\infty$, the inequality holds and subsequently we consider the
		case $h_{2,\rm inv}(Q_2)<\infty$. 
		We will make use of Lemma~\ref{lem:MPG} in the Appendix to show~\eqref{e:frr:ineq}.
		Let us pick an	invariant cover $(\mathcal{A}_2,G_2)$  of 
		$\Sigma_2$ and $Q_2$ so that
		$h(\mathcal{A}_2,G_2)<\infty$. 
		Next we define the set $\mathcal{A}_1:=\{A_1\subseteq Q_1\mid
		\exists_{A_2\in\mathcal{A}_2}\;R^{-1}(A_2)=A_1\}$.
		
		Now let $M=R^{-1}$ and $r:U_2\to U_1$ in Lemma~\ref{lem:MPG}, where $R$ and $r$ are, respectively, the relation and map associated with the feedback refinement relation in Def.~\ref{d:frr}. 
		We observe that all the 
		conditions \ref{lem:MPG:cover}) - \ref{lem:MPG:semconj}) in Lemma~\ref{lem:MPG} hold.
		
		%{\color{olive}
		%\begin{enumerate}
		%	\item $R^{-1}(Q_2) = Q_1$ , (assumption)
		%	
		%	\item Let $A_2, A_2', B \subseteq X_2$ such that $A_2'=A_2\cup B$. 
		%	From definition of $R^{-1}$
		%	\begin{align*}
		%	  	R^{-1}(A_2') &= \left\{ x\in X_1 \mid \exists_{y\in A_2'}(x,y)\in R\right\} \\
		%	  	&= \left\{ x\in X_1 \mid \exists_{y\in A_2}(x,y)\in R\right\} \\ 
		%	  	& \quad \cup \left\{ x\in X_1 \mid \exists_{y\in B}(x,y)\in R\right\} \\
		%	  	&= R^{-1}(A_2) \cup R^{-1}(B)
		%	\end{align*}
		%	Thus $A_2\subseteq A_2'\implies R^{-1}(A_2)\subseteq R^{-1}(A_2')$.
		%	
		%	\item From 2), $R^{-1}(A_2\cup B) = R^{-1}(A_2) \cup R^{-1}(B)$
		%	
		%	\item Given $R(F_1(x_1,r(u))) \subseteq F_2(x_2, u)$ for $(x_1,x_2)\in R$. So
		%	\begin{align*}
		%  	R(F_1(R^{-1}(A_2),r(u))) &\subseteq F_2(A_2, u) \\
		%  \implies	F_1(R^{-1}(A_2),r(u)) &\subseteq R^{-1}(F_2(A_2, u)).
		%	\end{align*}
		% % 	$A_1:= R^{-1}(A_2)$ so for every $x\in A_1$ there exists $y\in A_2$ such that
		% % 	$(x,y)\in R$. Given 
		%\end{enumerate} }
		
		Thus there exists a map $G_1^*:\mathcal{A}_1\to U_1$ such that $(\mathcal{A}_1,G_1^*)$ is an invariant cover of $\Sigma_1$ and $Q_1$, and
		\begin{equation*}
		h(\mathcal{A}_1,G_1^*) \leq h(\mathcal{A}_2,G_2).
		\end{equation*}
		Therefore, inequality~\eqref{e:frr:ineq} holds. 
	}
\end{proof}

\subsection{Conditions for finiteness}

We analyze two particular instances of systems -- finite
systems and systems with a topological state alphabet  -- and provide conditions ensuring that the invariance
entropy is finite. The results are based on the following lemma.
\begin{lemma}\label{l:finite}
	Consider a system $\Sigma=(X,U,F)$ and a non\-empty set $Q\subseteq X$.
	There exists an invariant cover
	$(\mathcal{A},G)$ of $\Sigma$ and $Q$ iff $h_{\rm inv}{\color{black}(Q)}<\infty$.
\end{lemma}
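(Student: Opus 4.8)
The plan is to prove both directions of the equivalence, with the interesting content lying in the ``only if'' direction.

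For the ``if'' direction, suppose $h_{\rm inv} < \infty$. By definition $h_{\rm inv} = \inf_{(\mathcal{A},G)} h(\mathcal{A},G)$, where the infimum ranges over invariant covers. If no invariant cover existed, this infimum would be over the empty set and hence equal to $\inf \emptyset = \infty$ by our convention, contradicting $h_{\rm inv} < \infty$. Hence an invariant cover exists. (One should double-check that the paper's convention $\inf\emptyset=\infty$ is indeed what forces this; it is stated in Section~\ref{s:notation}.)

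For the ``only if'' direction, suppose an invariant cover $(\mathcal{A},G)$ of $\Sigma$ and $Q$ exists; we must show $h_{\rm inv} < \infty$. It suffices to show $h(\mathcal{A},G) < \infty$, since $h_{\rm inv} \le h(\mathcal{A},G)$. The key observation is that $r_{\rm inv}(\tau,Q) \le (\no \mathcal{A})^{\tau}$ for every $\tau\in\N$: indeed, the full set $\mathcal{S} := \mathcal{A}^{\intco{0;\tau}}$ is $(\tau,Q)$-spanning in $(\mathcal{A},G)$ — the initial cover $P(\alpha)$ equals all of $\mathcal{A}$, which covers $Q$ since $\mathcal{A}$ is a cover of $Q$; and the spanning condition~\eqref{e:icover:spanning} holds because for this $\mathcal{S}$ we have $P(\alpha|_{\intcc{0;t}}) = \mathcal{A}$ for all $\alpha$ and $t$, whose union contains $Q$, and $F(\alpha(t),G(\alpha(t))) \subseteq Q$ by the invariant cover property. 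For this $\mathcal{S}$, each factor $\no P(\alpha|_{\intcc{0;t}}) = \no\mathcal{A}$, so $N(\mathcal{S}) = (\no\mathcal{A})^{\tau}$, and since $\mathcal{A}$ is finite this is a finite number. Hence $r_{\rm inv}(\tau,Q) \le (\no\mathcal{A})^{\tau}$, giving $\frac{1}{\tau}\log_2 r_{\rm inv}(\tau,Q) \le \log_2 \no\mathcal{A}$, and therefore $h(\mathcal{A},G) = \lim_{\tau\to\infty}\frac{1}{\tau}\log_2 r_{\rm inv}(\tau,Q) \le \log_2\no\mathcal{A} < \infty$.

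I do not anticipate a serious obstacle here; the only points requiring care are (i) verifying cleanly that $\mathcal{A}^{\intco{0;\tau}}$ genuinely satisfies the $(\tau,Q)$-spanning definition (in particular that $P(\alpha)$ for this $\mathcal{S}$ is all of $\mathcal{A}$ and covers $Q$), and (ii) making sure the limit in~\eqref{e:icover:entropy} is being used legitimately — but its existence is already guaranteed by Lemma~\ref{l:subadditivity}, and in fact~\eqref{e:ifb:equiv} lets us bound $h(\mathcal{A},G)$ by any single $\frac{1}{\tau}\log_2 r_{\rm inv}(\tau,Q)$, e.g.\ $\tau=1$, which is an even cleaner route. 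So the main ``work'' is just assembling these observations; the content is essentially that a finite invariant cover forces at most $\log_2\no\mathcal{A}$ bits per step.
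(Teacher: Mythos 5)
Your proof is correct and follows essentially the same route as the paper: the ``if'' direction via the convention $\inf\emptyset=\infty$, and the ``only if'' direction by exhibiting an explicit $(\tau,Q)$-spanning set with expansion number at most $(\no\mathcal{A})^{\tau}$, giving $h(\mathcal{A},G)\le\log_2\no\mathcal{A}$. The only (immaterial) difference is that the paper restricts to sequences $\alpha$ with $\alpha(t+1)\cap F(\alpha(t),G(\alpha(t)))\neq\emptyset$ while you take the full product set $\mathcal{A}^{\intco{0;\tau}}$; both satisfy the spanning condition and yield the same bound.
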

\begin{proof}
	It follows immediately from~\eqref{e:entropy} that $h_{\rm
		inv}{\color{black}(Q)}<\infty$ implies the existence of an invariant cover of $\Sigma$ and $Q$.
	For the reverse direction, we assume that $(\mathcal{A},G)$ is an invariant
	cover of $\Sigma$ and $Q$.
	We fix $\tau\in \N$ and define
	$\mathcal{S}:=\{\alpha\in\mathcal{A}^{\intco{0;\tau}}\mid
	\forall_{t\in\intco{0;\tau-1}}\;\alpha(t+1)\cap
	F(\alpha(t),G(\alpha(t)))\neq\emptyset \}$. It is easy to verify that
	$\mathcal{S}$ is $(\tau,Q)$-spanning and $N(\mathcal{S})\le
	(\no\mathcal{A})^{\tau}$. An upper bound on $h_{\rm inv}{\color{black}(Q)}$ follows
	by $\log_2\no\mathcal{A}$.
\end{proof}

If $\Sigma$ is finite, it is rather straightforward to show that the controlled invariance of
$Q$ w.r.t.~$\Sigma$ is necessary and sufficient for $h_{\rm inv}{\color{black}(Q)}$ to be finite.
Let us recall the notion of controlled invariance~\cite{BlanchiniMiani08}.

We call $Q\subseteq X$ \emph{controlled invariant} with respect to a system
$\Sigma=(X,U,F)$, if for all $x\in Q$ there exists $u\in U$ so that $F(x,u)\subseteq Q$. 
{\color{black} We refer the interested readers to \cite{rungger2017computing} for a discussion on computation of controlled invariant set for 
	controllable linear discrete-time systems.}

\begin{theorem}\label{t:finite:finite}
	Consider a finite system $\Sigma=(X,U,F)$ and a nonempty set $Q\subseteq X$.
	Then $h_{\rm inv}{\color{black}(Q)}<\infty$ if and only if $Q$ is controlled invariant.
\end{theorem}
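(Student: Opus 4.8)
The plan is to prove the two directions separately, using Lemma~\ref{l:finite} to reduce the problem to the existence of an invariant cover.

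\textbf{Necessity.} Suppose $h_{\rm inv}<\infty$. By Lemma~\ref{l:finite} there exists an invariant cover $(\mathcal{A},G)$ of $\Sigma$ and $Q$. Fix $x\in Q$. Since $\mathcal{A}$ covers $Q$, there is some $A\in\mathcal{A}$ with $x\in A$, and by the defining property of an invariant cover $F(A,G(A))\subseteq Q$. In particular $F(x,G(A))\subseteq F(A,G(A))\subseteq Q$, so $u:=G(A)$ witnesses controlled invariance of $Q$ at $x$. As $x\in Q$ was arbitrary, $Q$ is controlled invariant. This direction does not even use finiteness of $\Sigma$.

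\textbf{Sufficiency.} Suppose $Q$ is controlled invariant. Here is where finiteness of $\Sigma$ enters. Since $X$ is finite, $Q\subseteq X$ is finite, so the collection $\mathcal{A}:=\{\{x\}\mid x\in Q\}$ of singletons is a finite cover of $Q$. For each $x\in Q$, controlled invariance supplies some $u_x\in U$ with $F(x,u_x)\subseteq Q$; define $G:\mathcal{A}\to U$ by $G(\{x\}):=u_x$. Then for every $A=\{x\}\in\mathcal{A}$ we have $F(A,G(A))=F(x,u_x)\subseteq Q$, so $(\mathcal{A},G)$ is an invariant cover of $\Sigma$ and $Q$. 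By Lemma~\ref{l:finite}, $h_{\rm inv}<\infty$.

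The argument is short and the only genuine obstacle is noticing that finiteness of $\Sigma$ is used solely to guarantee that the singleton cover $\{\{x\}\mid x\in Q\}$ is \emph{finite}; without it, controlled invariance would still give a (possibly infinite) cover, which is not admissible in the definition of $h_{\rm inv}$. One could also phrase sufficiency via Lemma~\ref{l:finite}'s proof directly, which already shows $h_{\rm inv}\le\log_2\no\mathcal{A}=\log_2\no Q<\infty$, but invoking Lemma~\ref{l:finite} as a black box keeps the proof cleanest.
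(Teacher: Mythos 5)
Your proof is correct and follows essentially the same route as the paper: necessity by extracting, for each $x\in Q$, a cover element $A\ni x$ and using $F(x,G(A))\subseteq F(A,G(A))\subseteq Q$, and sufficiency by building the singleton invariant cover $\{\{x\}\mid x\in Q\}$ with $G(\{x\}):=u_x$ and invoking Lemma~\ref{l:finite}. Your added remark that finiteness of $\Sigma$ is needed only to make the singleton cover finite is accurate and consistent with the paper's Example~\ref{ex:notfinite}.
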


\begin{proof}
	Let $h_{\rm inv}{\color{black}(Q)}$ be finite. Then there exists an invariant cover
	$(\mathcal{A},G)$ so that $h(\mathcal{A},G)<\infty$. Hence, for every $x\in Q$
	we can pick an $A\in\mathcal{A}$ with $x\in A$, so that $F(x,G(A))\subseteq
	F(A,G(A))\subseteq Q$. Hence $Q$ is controlled invariant w.r.t.~$\Sigma$.
	
	Assume $Q$ is controlled invariant w.r.t.~$\Sigma$. For $x\in Q$, let $u_x\in U$ be such
	that $F(x,u_x)\subseteq Q$. It is easy to check that $(\mathcal{A},G)$ with
	$\mathcal{A}:=\{\{x\}\mid x\in Q\}$ and $G(\{x\}):=u_x$ is an invariant cover
	of $\Sigma$ and $Q$, so that the assertion follows from Lemma~\ref{l:finite}.
\end{proof}

In general controlled invariance
of $Q$ is not sufficient to guarantee finiteness of the invariance feedback
entropy as shown in the next example.
%In particular, the finiteness of $\Sigma$ in the previous
%Theorem~\ref{t:finite:finite} was essential.
\begin{example}[label=ex:notfinite]
	\normalfont
	Consider $\Sigma=(\R,\intcc{-1,1},F)$ with the dynamics given by
	$F(x,u):=x+u+\intcc{-1,1}$. Let $Q:=\intcc{-1,1}$, then for every
	$x\in Q$ we can pick $u=-x$ so that $F(x,u)=\intcc{-1,1}\subseteq Q$, which
	shows that $Q$ is controlled invariant.
	Now suppose that $h_{\rm inv}{\color{black}(Q)}$ is finite. Then
	according to Lemma~\ref{l:finite} there exists an invariant cover
	$(\mathcal{A},G)$ of $\Sigma$ and $Q$. Since $\mathcal{A}$ is required to be
	finite, there exists $A\in\mathcal{A}$ with an infinite number of elements and
	therefore we can pick two different states in $A$, i.e., $x,x'\in A$ with
	$x\neq x'$. However, there does not exist a single $u\in U$ so that
	$F(x,u)\subseteq Q$ and $F(x',u)\subseteq Q$. Hence, $(\mathcal{A},G)$ cannot
	be an invariant cover, which implies $h_{\rm inv}{\color{black}(Q)}=\infty$.
	\qed
\end{example}

In the subsequent theorem we present some conditions for systems with a  topological state alphabet,
which imply the finiteness of the invariance entropy. {\color{black} The
	conditions may be difficult to verify for a particular problem instance.}
Nevertheless with these conditions, we
follow closely the assumptions based on continuity and strong invariance employed
in~\cite{NairFagniniZampieriEvans07, ColoniusKawanNair13} to ensure finiteness
of the invariance entropy for deterministic systems. We use the following notion of
continuity of set-valued maps~\cite{AubinFrankowska90} to show the next result.

Let $A$ and $B$ be topological
spaces and $f:A\rightrightarrows B$. We say that $f$ is \emph{upper
	semicontinuous}, if for every $a\in A$ and every open set
$V\subseteq B$ containing $f(a)$ there exists an open set $U\subseteq A$ with $a\in U$ so that
$f(U)\subseteq V$.

\begin{theorem}\label{t:finite:topological}
	Consider a system $\Sigma=(X,U,F)$ and a nonempty compact subset $Q$  of $X$.
	Let $X$ be a topological space.  If $F(\cdot,u)$ is upper semicontinuous for
	every $u\in U$ and $Q$ is strongly controlled invariant, i.e., for all $x\in
	Q$ there exists $u\in U$ so that $F(x,u)\subseteq \mathrm{int}\:Q$, then
	$h_{\rm inv}{\color{black}(Q)}<\infty$.
\end{theorem}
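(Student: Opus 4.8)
The plan is to construct a finite invariant cover of $\Sigma$ and $Q$ by hand and then invoke Lemma~\ref{l:finite}, which reduces finiteness of $h_{\rm inv}$ to the mere existence of such a cover.

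First I would fix an arbitrary $x\in Q$. By strong controlled invariance there is some $u_x\in U$ with $F(x,u_x)\subseteq\mathrm{int}\,Q$. Since $\mathrm{int}\,Q$ is an open subset of $X$ containing $F(x,u_x)$ and $F(\cdot,u_x)$ is upper semicontinuous, there exists an open set $O_x\subseteq X$ with $x\in O_x$ and $F(O_x,u_x)\subseteq\mathrm{int}\,Q$. Next, the family $(O_x)_{x\in Q}$ is an open cover of $Q$; as $Q$ is compact, finitely many of them, say $O_{x_1},\dots,O_{x_n}$, already cover $Q$. I would then set $A_i:=O_{x_i}\cap Q$ for $i\in\intcc{1;n}$, take $\mathcal{A}:=\{A_1,\dots,A_n\}$, and define $G:\mathcal{A}\to U$ by choosing, for each $A\in\mathcal{A}$, some index $i$ with $A=A_i$ and setting $G(A):=u_{x_i}$ (this resolves the only mild annoyance, namely that distinct indices may produce the same set). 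By construction $\mathcal{A}$ is finite, each $A\in\mathcal{A}$ satisfies $A\subseteq Q$, the union of the $A_i$ equals $Q$, and $F(A,G(A))=F(A_i,u_{x_i})\subseteq F(O_{x_i},u_{x_i})\subseteq\mathrm{int}\,Q\subseteq Q$. Hence $(\mathcal{A},G)$ is an invariant cover of $\Sigma$ and $Q$, and Lemma~\ref{l:finite} gives $h_{\rm inv}<\infty$ (with the explicit bound $\log_2\no\mathcal{A}=\log_2 n$).

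I do not expect a serious obstacle here. The only step that requires care is the passage from the pointwise inclusion $F(x,u_x)\subseteq\mathrm{int}\,Q$ to a neighborhood version of it, and this is exactly what upper semicontinuity of $F(\cdot,u_x)$ together with the openness of $\mathrm{int}\,Q$ delivers; it is also the reason the hypothesis asks for $F(x,u)\subseteq\mathrm{int}\,Q$ rather than merely $F(x,u)\subseteq Q$. Compactness of $Q$ is then what turns the resulting open cover into a finite one, so that $\mathcal{A}$ meets the finiteness requirement in the definition of an invariant cover.
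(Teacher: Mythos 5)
Your proof is correct and follows essentially the same route as the paper: use upper semicontinuity of $F(\cdot,u_x)$ together with $F(x,u_x)\subseteq\mathrm{int}\,Q$ to get open neighborhoods mapped into $\mathrm{int}\,Q$, extract a finite subcover by compactness, intersect with $Q$ to form the invariant cover, and conclude via Lemma~\ref{l:finite}. Your remark about choosing one index per cover element to make $G$ well-defined is a small point the paper leaves implicit, but the argument is the same.
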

\begin{proof}
	For each $x\in Q$, we pick an input $u_x\in U$ so that $F(x,u_x)\subseteq
	\mathrm{int}\: Q$. Since $F(\cdot,u_x)$ is  upper semicontinuous and
	$\mathrm{int}\:Q$ is open, there exists an open subset $A_x$ of $X$, so that $x\in A_x$ and $F(A_x,u_x)\subseteq \mathrm{int}\:Q$. Hence, the
	set $\{A_x\mid x\in Q\}$ of open subsets of $X$ covers $Q$. Since $Q$ is a
	compact subset of $X$, there exists a finite set $\{A_{x_1},\ldots,
	A_{x_m}\}$ so that $Q\subseteq \cup_{i\in\intcc{1;m}}
	A_{x_i}$~\cite[Ch.~2.6]{GamelinGreen99}. Let
	$\mathcal{A}:=\{A_{x_1}\cap Q,\ldots, A_{x_m}\cap Q\}$ and define for every
	$i\in\intcc{1;m}$ the function $G(A_{x_i}):=u_{x_i}$.  Then
	$(\mathcal{A},G)$ is an invariant cover of $\Sigma$ and $Q$, and the assertion
	follows from Lemma~\ref{l:finite}.
\end{proof}

\begin{example}[continues=ex:notfinite]
	\normalfont
	Let $\varepsilon>0$, consider $\Sigma$ from Example~\ref{ex:notfinite} with the modified input set
	$U_\varepsilon:=\intcc{-1-\varepsilon,1+\varepsilon}$. Let
	$Q_\varepsilon:=\intcc{-1-\varepsilon,1+\varepsilon}$ then we see that $Q_\varepsilon$ is strongly controlled invariant.  We construct an invariant cover for $\Sigma$ and $Q_\varepsilon$ as follows.
	We define $n$ as the smallest integer larger than $\tfrac{1}{2\varepsilon}$ and introduce
	$\{x_{-n},\ldots,x_0,\ldots x_{n}\}$ with $x_i:=2i\varepsilon$ and set
	$A_i:=(x_i+\intcc{-\varepsilon,\varepsilon})\cap Q_\varepsilon$. For each
	$i\in\intcc{-n;n}$ we define $G(A_i):=-x_i$ so that
	$F(A_i,G(A_i)) {\color{black}\subseteq} Q_\varepsilon$.
	By definition of $n$ we have  $x_{-n}\le -1$ and $1\le x_n$ and we see that
	$(\mathcal{A},G)$ with $\mathcal{A}:=\{A_i\mid i\in\intcc{-n;n}\}$ is an
	invariant cover of $\Sigma$ and $Q_\varepsilon$. Hence, it follows from
	Lemma~\ref{l:finite} that $h_{\rm inv}{\color{black}(Q_\varepsilon)}$ is finite.~\qed
\end{example}

\subsection{Deterministic systems}

For deterministic systems we recover the notion of invariance feedback entropy
in~\cite{NairEvansMarrelsMoran04,ColoniusKawanNair13}.

%This
%fact follows indirectly from the data rate theorem (Theorem~\ref{t:datarate}).
%Nevertheless, we
%provide here a direct proof, which demonstrates in greater detail the
%relations of the various objects in the definition of invariance feedback
%entropy for nondeterministic and deterministic systems.

Let us consider the map $f:X\times U\to X$ representing a deterministic system 
\begin{equation}\label{e:dt:sys}
\xi(t+1) = f(\xi(t),\nu(t)).
\end{equation}
We can interpret~\eqref{e:dt:sys} as special instance
of~\eqref{e:sys}, where $F$ is given by $F(x,u):=\{f(x,u)\}$ for all $x\in X$
and $u\in U$ and the notions of a trajectory of~\eqref{e:sys} extend
to~\eqref{e:dt:sys} in the obvious way. 
Given an input $u\in U$, we introduce $f_u: X\to X$ by $f_{u}(x):=f(x,u)$
and extend this notation to sequences $\nu\in U^{\intcc{0;t}}$, $t\in\N$
by
\begin{equation*}
f_{\nu}(x):=f_{\nu(t)}\circ\cdots\circ f_{\nu(0)}(x).
\end{equation*}

We follow~\cite{ColoniusKawanNair13} to define the entropy
of~\eqref{e:dt:sys}.
Consider a nonempty set $Q\subseteq X$ and fix $\tau\in \N$. A
set $\mathcal{S}\subseteq U^{\intco{0;\tau}}$ is called
\emph{$(\tau,Q)$-spanning} for $f$ and $Q$, if for every $x\in Q$ there exists $\nu\in
\mathcal{S}$ so that the associated trajectory $(\xi,\nu)$  on $\intco{0;\tau}$ of~\eqref{e:dt:sys} with $\xi(0)=x$ satisfies $\xi(\intcc{0;\tau})\subseteq Q$.
We use $r_{\rm det}(\tau,Q)$ to
denote the number of elements of the smallest $(\tau,Q)$-spanning set 
\begin{equation}\label{e:dt:min:span}
r_{\rm det}(\tau,Q):= \inf\{ \no\mathcal{S}\mid \mathcal{S} \text{ is $(\tau,Q)$-spanning}\}.
\end{equation}
The \emph{(deterministic) invariance entropy} of $(X,U,f)$ and $Q$
is defined by
\begin{equation}\label{e:dt:inv}
h_{\rm det}{\color{black}(Q)}:=\lim_{\tau\to\infty}\frac{1}{\tau}\log_2 r_{\rm det}(\tau,Q).
\end{equation}
Again the function $\tau\mapsto\frac{1}{\tau}\log_2 r_{\rm det}(\tau,Q)$ is
subadditive~\cite[Prop.~2.2]{ColoniusKawanNair13} which ensures that the limit
in~\eqref{e:dt:inv} exists.

Now, we have the following theorem.
\begin{theorem}\label{t:invdet}
	Consider the system $\Sigma=(X,U,F)$ and a nonempty set
	$Q\subseteq X$. 
	Suppose $F$ satisfy $F(x,u)=\{f(x,u)\}$ for all $x\in X$, $u\in
	U$ for some $f:X\times U\to X$. Then the
	invariance feedback entropy of $\Sigma$ and $Q$ equals the deterministic
	invariance entropy of $(X,U,f)$ and $Q$, i.e.,
	\begin{equation}%\label{e:dt:equiv}
	h_{\rm inv}{\color{black}(Q)}=h_{\rm det}{\color{black}(Q)}.
	\end{equation}
\end{theorem}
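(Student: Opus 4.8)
The plan is to prove the two inequalities $h_{\rm inv}\le h_{\rm det}$ and $h_{\rm det}\le h_{\rm inv}$ separately, in each case transforming a spanning object of one type into a spanning object of the other type of the same (or smaller) ``size'' and passing to the limit. The key observation that makes the deterministic case special is that, since $F(x,u)=\{f(x,u)\}$ is single-valued, the set $P(\alpha|_{\intcc{0;t}})$ attached to a $(\tau,Q)$-spanning set $\mathcal{S}\subseteq\mathcal{A}^{\intco{0;\tau}}$ in an invariant cover $(\mathcal{A},G)$ is forced to contain whatever cover element one reaches, but there is no genuine branching: the successor state $f(\alpha(t),G(\alpha(t)))$ is a single point, so $\bigcup_{A'\in P(\alpha|_{\intcc{0;t}})}A'$ only needs to contain that one point. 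This decouples the combinatorial ``fan-out'' in the expansion number $N(\mathcal{S})$ from the actual dynamics.

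For the bound $h_{\rm inv}\le h_{\rm det}$, I would start from an arbitrary $(\tau,Q)$-spanning set $\mathcal{S}_{\rm det}\subseteq U^{\intco{0;\tau}}$ for $f$ with $\no\mathcal{S}_{\rm det}=r_{\rm det}(\tau,Q)$, and build an invariant cover together with a $(\tau,Q)$-spanning set in it whose expansion number is at most $\no\mathcal{S}_{\rm det}$. The natural construction: for each input sequence $\nu\in\mathcal{S}_{\rm det}$ and each initial $x\in Q$ for which the trajectory stays in $Q$, the set of states reached at each time $t$ along $(\xi,\nu)$ is determined; one groups initial conditions by which $\nu\in\mathcal{S}_{\rm det}$ ``witnesses'' them and forms cover elements as these reachable sets (intersected with $Q$), with $G$ on a cover element of the form $f_{\nu|_{\intcc{0;t-1}}}(\cdot)$-image given by $\nu(t)$. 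One then checks $F$ of each cover element under its assigned input lands in $Q$ (invariance), that the family is finite (bounded by $\tau\cdot\no\mathcal{S}_{\rm det}$), and that the sequences of cover elements induced by the trajectories form a $(\tau,Q)$-spanning set $\mathcal{S}$ whose branching at each node is controlled: at a node indexed by a prefix of some $\nu$, the continuations correspond only to the single next symbol $\nu(t)$, so $\no P(\alpha|_{\intcc{0;t}})=1$ except possibly at the root, where it is bounded by the number of distinct ``starting'' cover elements, i.e.\ at most $\no\mathcal{S}_{\rm det}$. Hence $N(\mathcal{S})\le r_{\rm det}(\tau,Q)$, giving $r_{\rm inv}(\tau,Q)\le r_{\rm det}(\tau,Q)$ and, after $\frac1\tau\log_2$ and $\tau\to\infty$, $h_{\rm inv}\le h_{\rm det}$. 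Here is where Lemma~\ref{l:nospanningset} and the identity~\eqref{e:ifb:equiv} will be useful to justify that we may reason with a single $\tau$ and that $\no\mathcal{S}\le N(\mathcal{S})$ keeps the objects finite.

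For the reverse bound $h_{\rm det}\le h_{\rm inv}$, I would take an invariant cover $(\mathcal{A},G)$ with $h(\mathcal{A},G)$ close to $h_{\rm inv}$ and a $(\tau,Q)$-spanning set $\mathcal{S}\subseteq\mathcal{A}^{\intco{0;\tau}}$ with $N(\mathcal{S})=r_{\rm inv}(\tau,Q)$, and read off an input sequence from each $\alpha\in\mathcal{S}$ by $\nu_\alpha(t):=G(\alpha(t))$. The set $\mathcal{S}_{\rm det}:=\{\nu_\alpha\mid\alpha\in\mathcal{S}\}$ has $\no\mathcal{S}_{\rm det}\le\no\mathcal{S}\le N(\mathcal{S})=r_{\rm inv}(\tau,Q)$ by Lemma~\ref{l:nospanningset}. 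It remains to verify it is $(\tau,Q)$-spanning for $f$: given $x\in Q$, since $P(\alpha)$ covers $Q$ there is $\alpha\in\mathcal{S}$ with $x\in\alpha(0)$; then $f(x,G(\alpha(0)))\in F(\alpha(0),G(\alpha(0)))\subseteq\bigcup_{A'\in P(\alpha|_{\intcc{0;0}})}A'$, so the successor lies in some cover element that is realized as $\hat\alpha(1)$ for a sequence $\hat\alpha\in\mathcal{S}$ agreeing with $\alpha$ at time $0$; iterating, one tracks the trajectory through a sequence of cover elements all contained in $Q$, concluding $\xi(\intcc{0;\tau})\subseteq Q$. Thus $r_{\rm det}(\tau,Q)\le r_{\rm inv}(\tau,Q)$ and $h_{\rm det}\le h_{\rm inv}$.

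The main obstacle I anticipate is the bookkeeping in the first direction: carefully defining the cover $\mathcal{A}$ (which reachable-set families to include, how to break ties when one state is witnessed by several $\nu$'s, whether to take intersections with $Q$), and then rigorously showing the induced $\mathcal{S}$ is $(\tau,Q)$-spanning with the claimed small expansion number. One has to be a little careful that the ``tree'' of cover elements genuinely has fan-out $1$ away from the root — this relies on indexing cover elements by the input-prefix that produced them, so that distinct continuations come only from distinct next input symbols, and on the single-valuedness of $f$ so that $F(\alpha(t),G(\alpha(t)))$ is a single point lying in exactly the intended continuation. The second direction is comparatively routine given Lemma~\ref{l:nospanningset}. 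Finally, in both directions one invokes subadditivity (Lemma~\ref{l:subadditivity}, both for $r_{\rm inv}$ and the cited subadditivity of $r_{\rm det}$) to pass cleanly from the per-$\tau$ inequalities to the inequality of limits.
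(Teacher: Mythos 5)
Your proposal is correct and follows essentially the same route as the paper: for $h_{\rm inv}\le h_{\rm det}$ the paper builds the cover from the sets $A_0(\nu)$ of initial states witnessed by each $\nu$ in a minimal deterministic spanning set and their forward images $A_{t+1}(\nu)=f(A_t(\nu),\nu(t))$, using minimality to make the initial elements distinct so that all branching is concentrated in the root factor $\no P(\alpha)=\no\mathcal{S}_{\rm det}$ and the internal fan-out is $1$, exactly as you describe; for $h_{\rm det}\le h_{\rm inv}$ the paper likewise reads inputs off $G\circ\alpha$, tracks the trajectory through the spanning set, and invokes Lemma~\ref{l:nospanningset} and subadditivity. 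The only caution is that the tie-breaking should be done so that distinct $\nu$ yield distinct time-$0$ cover elements (as minimality guarantees), rather than labelling elements by input prefixes alone, since spreading the branching over later times could inflate the expansion number beyond $\no\mathcal{S}_{\rm det}$.
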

\begin{proof}
	We begin with the inequality $h_{\rm det}{\color{black}(Q)}\ge h_{\rm inv}{\color{black}(Q)}$. If $h_{\rm det}{\color{black}(Q)}=\infty$ the
	inequality trivially holds and subsequently we assume that $h_{\rm det}{\color{black}(Q)}$ is
	finite.
	%which implies that $r_{\rm det}(\tau,Q)<\infty$ for all $\tau\in \N$,
	%see~\cite[Rem.~2.1]{Kawan13}.
	We fix $\varepsilon>0$ and pick $\tau\in \N$ so
	that $\frac{1}{\tau}\log_2 r_{\rm det}(\tau,Q)\le h_{\rm det}{\color{black}(Q)}+\varepsilon$. We
	chose a $(\tau,Q)$-spanning set $\mathcal{S}_{\rm det}$ for $f$ and $Q$ with $\no
	\mathcal{S}_{\rm det}=r_{\rm det}(\tau,Q)$. For every $\nu\in\mathcal{S}_{\rm det}$ 
	we define the sets
	\begin{equation*}
	A_0(\nu):=Q\cap\bigcap_{t=0}^{\tau-1}f_{\nu|_{\intcc{0;t}}}^{-1}(Q)
	\end{equation*}
	and for $t\in\intco{0;\tau-1}$ the sets
	$A_{t+1}(\nu):=f(A_t(\nu),\nu(t))$. The minimality of 
	$\mathcal{S}_{\rm det}$ implies that $A_0(\nu)\neq \emptyset$ and $A_0(\nu)\neq A_0(\nu')$ 
	for all $\nu,\nu'\in\mathcal{S}_{\rm det}$.
	Let $\mathcal{A}$ be the set
	of all sets $A_{t}(\nu)$ with $t \in\intco{0;\tau}$ and $\nu
	\in\mathcal{S}_{\rm det}$.
	\begin{color}{black}
		With each $A \in \mathcal{A}$ we associate a single pair $(\nu,t)$, where
		$\nu\in\mathcal{S}_{\rm det}$ and $t\in\intco{0;\tau}$, such that is satisfies
		$A=A_t(\nu)$ and the
		following condition: $\nu'\in\mathcal{S}_{\rm det}$ and
		$t'\in\intco{0;\tau}$ with $A=A_{t'}(\nu')$ implies $t\le t'$. Then we define
		the map $G:\mathcal{A}\to U$ by 
		$G(A)=\nu(t)$ where $(\nu,t)$ is associated with $A$.
	\end{color}
	By the definition of $A_t(\nu)$, it is easy to see that 
	$f(A_t(\nu),G(A_t(\nu)))\subseteq Q$ for all $t\in\intco{0;\tau}$ and
	{$\nu\in\mathcal{S}_{\rm det}$}. Moreover, since $\mathcal{S}_{\rm det}$ is $(\tau,Q)$-spanning, for
	every $x\in Q$ there is $\nu\in\mathcal{S}_{\rm det}$ so that for all
	$t\in\intco{0;\tau}$ we have $f_{\nu|_{\intcc{0;t}}}(x)\in Q$ which implies 
	$x\in A_0(\nu)$ and we see that $\{A_0(\nu)\mid \nu\in\mathcal{S}_{\rm det}\}$
	covers $Q$.  It follows that $(\mathcal{A},G)$ is an
	invariant cover of $(X,U,F)$ and $Q$.  
	{\color{black} Let $\mathcal{S}_{\text{inv}}$ be the set of sequences 
		$\alpha:\intco{0;\tau}\to \mathcal{A}$ defined iteratively as 
		$\alpha(0) \in \{A_0(\nu) \mid \nu \in \mathcal{S}_{det} \}$ and
		$\alpha(t+1) = f(\alpha(t), G(\alpha(t)))$. 
		Then $P(\alpha)$ covers $Q$ since 
		$\{A_0(\nu) \mid \nu \in \mathcal{S}_{det} \}$ covers $Q$ as discussed above.
		For any distinct $\alpha, \alpha' \in \mathcal{S}_{\text{inv}}$ we have
		$\alpha(0) \neq \alpha'(0)$ so for
		every $t \in \intco{0; \tau -1}$ we have 
		$\no P(\alpha|_{\intcc{0;t}})=1$, 
		$f(\alpha(t),G(\alpha(t)))= P(\alpha|_{\intcc{0;t}}) $ and thus
		$\mathcal{S}_{\text{inv}}$ satisfies~\eqref{e:icover:spanning}.
	}
	Therefore,
	$\mathcal{S}_{\rm inv}$ is $(\tau,Q)$-spanning in $(\mathcal{A},G)$.  
	Moreover, as $\nu\neq\nu'$ implies 
	{\color{black}$A_0(\nu)\neq A_0(\nu')$}, we have
	$\no P(\alpha)= \no \mathcal{S}_{\rm det}$, so that
	$r_{\rm inv}(\tau,Q)\le N(\mathcal{S}_{\rm inv})= \no
	\mathcal{S}_{\rm det}=r_{\rm det}(\tau,Q)$ follows. Due to
	Lemma~\ref{l:subadditivity}, we have $\log_2 r_{\rm inv}(n\tau,Q)\le n \log_2 r_{\rm
		inv}(\tau,Q)$ and we see that $\frac{1}{\tau}\log_2 r_{\rm inv}(\tau,Q)$ (and
	therefore $\frac{1}{\tau}\log_2 r_{\rm det}(\tau,Q)$) provides an
	upper bound for $h_{\rm}(\mathcal{A},G)$ so that 
	we obtain $h_{\rm inv}{\color{black}(Q)}\le $ $h_{\rm}(\mathcal{A},G)$ $ \le h_{\rm det}{\color{black}(Q)}+\varepsilon$.
	Since this holds for any $\varepsilon>0$ we obtain the desired inequality.\\
	We continue with the inequality $h_{\rm det}{\color{black}(Q)}\le h_{\rm inv}{\color{black}(Q)}$. 
	If $h_{\rm inv}{\color{black}(Q)}=\infty$ the
	inequality trivially holds and subsequently we assume $h_{\rm inv}{\color{black}(Q)}<\infty$. We
	fix $\varepsilon>0$ and pick an invariant cover $(\mathcal{A},G)$ of $\Sigma$
	and $Q$ so that $h(\mathcal{A},G)\le h_{\rm inv}{\color{black}(Q)}+\varepsilon$.  We fix
	$\tau\in \N$ and pick a $(\tau,Q)$-spanning set $\mathcal{S}_{\rm inv}$ in $(\mathcal{A},G)$ so that 
	$N(\mathcal{S}_{\rm inv})=r_{\rm inv}(\tau,Q)$. 
	We define for every $\alpha\in\mathcal{S}_{\rm inv}$ the input sequence
	$\nu_\alpha:\intco{0;\tau}\to U$ by $\nu_\alpha(t):=G(\alpha(t))$ and introduce the
	set $\mathcal{S}_{\rm det}:=\{\nu_\alpha\mid \alpha\in\mathcal{S}_{\rm
		inv}\}$.
	For $x\in Q$ we iteratively construct $\alpha\in
	\mathcal{A}^{\intco{0;\tau}}$ and
	$\nu\in U^{\intco{0;\tau}}$ as follows:
	for $t=0$ we pick $\alpha_0\in\mathcal{S}_{\rm inv}$ so that $x\in \alpha_0(0)$
	and set $\nu(0):=G(\alpha_0(0))$. For $t\in\intco{0;\tau-1}$ we
	pick $\alpha_{t+1}\in\mathcal{S}_{\rm inv}$ so that
	$\alpha_{t+1}|_{\intcc{0;t}}=\alpha_t$ and 
	$f_{\nu|_{\intcc{0;t}}}(x)\in \alpha_{t+1}(t+1)$ and set
	$\nu(t+1):=G(\alpha_{t+1}(t+1))$. Since $(\mathcal{A},G)$ is an
	invariant cover of $(X,U,F)$ and $Q$, it is easy to show that
	$f_{\nu|_{\intcc{0;t}}}(x)\in Q$ holds for all $t\in\intco{0;\tau}$, which
	implies that
	$\mathcal{S}_{\rm det}$ is $(\tau,Q)$-spanning for $f$ and $Q$.
	Thus, we obtain $r_{\rm det}(\tau,Q)\le \no\mathcal{S}_{\rm det} \le 
	\no\mathcal{S}_{\rm inv}\le N(\mathcal{S}_{\rm inv})=r_{\rm inv}(\tau,Q)$,
	where the inequality $\no\mathcal{S}_{\rm inv}\le N(\mathcal{S}_{\rm inv})$
	follows from Lemma~\ref{l:nospanningset}.
	Since this holds for any $\tau\in\N$, we obtain the inequality  $\varepsilon+h_{\rm inv}{\color{black}(Q)}\ge
	h(\mathcal{A},G)\ge h_{\rm det}{\color{black}(Q)}$ for arbitrary $\varepsilon>0$ which shows $h_{\rm inv}{\color{black}(Q)}\ge h_{\rm det}{\color{black}(Q)}$.
	\hfill{}
\end{proof}

\subsection{Invariant covers with closed elements}

We conclude this section with a result on the topological structure of the cover
elements for systems with {\color{black}topological state alphabet and} lower semicontinuous transition functions
and closed sets $Q$. The result is used
in Theorem~\ref{t:lb} but might be of interest on its own.

Let $A$ and $B$ be topological
spaces and $f:A\rightrightarrows B$.
We say that $f$ is \emph{lower semicontinuous} if $f^{-1}(V)$
is open whenever $V\subseteq B$ is open.

{\color{black}
	\begin{theorem}\label{t:closed}
		Consider a system $\Sigma=(X,U,F)$ with topological state alphabet and a nonempty closed set 
		$Q\subseteq X$.
		Assume that $F(\cdot,u)$ is lower semicontinuous for every $u \in U$.
		%Suppose that $F(\cdot,u)$ is lower semicontinuous for every
		%$u\in U$ and $Q$ is closed.
		Let $(\mathcal{A},G)$ be an invariant cover of $\Sigma$ and $Q$ and let
		$	\mathcal{C}  :=\{ \cl A\subseteq \cl X \mid A\in \mathcal{A}\}	$.
		Then there exists a map $H^*:\mathcal{C} \rightarrow U $ such that
		$(\mathcal{C},H^*)$ is an invariant cover of $\Sigma$ and $Q$ and 												
		\begin{equation}\label{e:hch_to_hag}
		h(\mathcal{C},H^*)\le h(\mathcal{A},G).
		\end{equation}
\end{theorem}}
In the proof of the theorem, we use the following lemma.
\begin{lemma}\label{l:lsc}
	Let $X$ be a topological space and $f:X\rightrightarrows X$. If $f$ is lower
	semicontinuous then $f(\cl \Omega)\subseteq \cl f(\Omega)$ holds for every
	nonempty subset $\Omega\subseteq X$.
\end{lemma}
\begin{proof}
	For the sake of contradiction, suppose there exists $x\in \cl \Omega$, $y\in
	f(x)$ and $y\not\in \cl f(\Omega)$. 
	{\color{black}Then the open set $V:=X \setminus \cl f(\Omega) $ contains $y$. 
		%Let $M$ be any subset of $X$, then 
		Let us define 
		$U:=f^{-1}(V)= \{x'\in X\mid f(x')\cap V\neq \emptyset\}$ and since 
		$f$ is lower semicontinuous and $V$ is open so $U$ is open. 
		As $V\cap f(x) \ni y $, thus nonempty, so $x\in U$. 
		By definition, $V\cap f(\Omega)=\emptyset$ so $U\cap\Omega=\emptyset$ 
		and since $U$ is open so $U\cap \cl \Omega=\emptyset$ which is in
		contradiction with $x\in U$ and $x\in \cl\Omega$.                
		%Then there exists an open set $V$ so that
		%$y\in V$ and $V\cap
		%\cl f(\Omega)=\emptyset$. Since $f$ is lower semicontinuous it follows that
		%$U:=f^{-1}(V)$ is open and from $V\cap f(\Omega)=\emptyset$ it follows that $U$ is
		%disjoint from $\Omega$. Hence, we reach a contradiction since $x\in U\cap \cl
		%\Omega$.
	}
\end{proof}

{\color{black}\begin{proof}[Proof of Theorem~\ref{t:closed}]
		In Lemma~\ref{lem:MPG} in the Appendix, let $M=\cl$, $\Sigma_1=\Sigma_2 = \Sigma$, 
		$Q_2 = Q_1 =Q$, $\mathcal{A}_2=\mathcal{A}$, $G_2 = G$,
		$\mathcal{A}_1 = \mathcal{C}$ and $r = \id$, then one can easily verify 
		that conditions \ref{lem:MPG:cover}) - \ref{lem:MPG:subset}) hold,
		while Lemma~\ref{l:lsc} implies that~\ref{lem:MPG:semconj}) is satisfied.
		Thus there exists a map
		$H^*:\mathcal{C}\to U$ such that $(\mathcal{C},H^*)$ is an 
		invariant cover of $\Sigma$ and $Q$, and
		$h(\mathcal{C}, H^*) \leq h(\mathcal{A},G)$.
		%				-----------------\\
		%				For $C \in \mathcal{C}$,  let 
		%				\begin{equation*}
		%				G_1(C) := \{ G(A) \mid \cl A = C, A \in \mathcal{A} \}.
		%				\end{equation*}
		%				Consider for each $C\in \mathcal{C}$ the set 
		%				\begin{equation*}
		%				\mathcal{V}(C):=\left\{ (V',u) \mid V' \subseteq \mathcal{C}, u \in G_1(C), F(C,u) \subseteq \cup_{C' \in V'}C' \right\}
		%				\end{equation*}
		%				For every $C \in \mathcal{C}$, $\mathcal{V}(C) \neq \emptyset$ owing to 
		%				$(\mathcal{A},G)$ being an invariant cover and 
		%				$F(\cl A, u) \subseteq \cl F(A, u)$ which follows from the lower semicontinuity of $F(\cdot,u)$ (see Lemma~\ref{l:lsc}).
		%
		%				Let $\tau \in \N$ and $\mathcal{S}_2$ be a 
		%				$(\tau,Q)$-spanning set in $(\mathcal{A},G)$ such that 
		%				$N(\mathcal{S}_2) = r_{\text{inv}}(\tau, \mathcal{A}, G, Q)$.
		%				For $\alpha\in\mathcal{S}_2$, $t \in\intco{0;\tau-1}$ if
		%				$V' = \{\cl A \mid A \in P({\alpha}|_{\intcc{0;t}} ) \}$ 
		%				then from Lemma~\ref{l:lsc} we get $F(\cl \alpha(t), G({\alpha}(t) )) \subseteq \cup_{C \in V' } C$.
		%
		%				Using Lemma~\ref{lem:MPG} with $\Sigma_1=\Sigma_2 = \Sigma$, 
		%				%		$\Sigma_1=(\cl X, U, F)$,
		%				$Q_2 = Q_1 =Q$, $\mathcal{A}_2=\mathcal{A}$, $G_2 = G$,
		%				$\mathcal{A}_1 = \mathcal{C}$, $r = Identity$, $M(A) = \cl A$ 
		%				for $A \subseteq Q_2$, we have that there exists a map
		%				$H^*:\mathcal{C}\to U$ such that $(\mathcal{C},H^*)$ is an 
		%				invariant cover of $\Sigma$ and $Q$, and
		%				$h(\mathcal{C}, H^*) \leq h(\mathcal{A},G)$.
\end{proof} }

\section{Data-Rate-Limited Feedback}
\label{s:data-rate}

We present the data rate theorem associated with the invariance feedback entropy
of uncertain control systems. It shows that the invariance feedback entropy
is a tight lower bound of the data rate of any coder-controller scheme that
renders the set of interest invariant.  

We introduce a history-dependent definition of data rates of coder-controllers
with which we extend previously used time-invariant
\cite{NairFagniniZampieriEvans07} and time-varying
\cite{NairEvansMarrelsMoran04,ColoniusKawan09} notions.  We interpret the
history-dependent definition of data rate as a nonstochastic variant of the
notion of data rate used e.g.~in~\cite[Def.~4.1]{SilvaDerpichOstergaard11} for
noisy linear systems, defined as the average of the expected length of the
transmitted symbols in the closed loop. We motivate the particular notion of
data rate by two examples; one which illustrates that the time-varying
definition~\cite{NairEvansMarrelsMoran04} results in too large data rates and
one which shows that the notion of data rate based on the framework of
nonstochastic information theory, used in~\cite{Nair12,Nair13} for
estimation~\cite{Nair13} and control~\cite{Nair12} of linear systems, leads to
too small data rates.

\subsection{The coder-controller}

We assume that a coder for the system \eqref{e:sys} is located at the sensor side (see Fig.~\ref{f:1}),
which at every time step, encodes the current state of the system using the
finite \emph{coding alphabet} $S$. It transmits a symbol $s_t\in S$ via the discrete
noiseless channel to the controller.
The transmitted symbol $s_t\in S$ might depend on all past states and is
determined by the \emph{coder function}
\begin{equation*}
\textstyle
\gamma:\bigcup_{t\in\Z_{\ge0}}X^{\intcc{0;t}}\to S.
\end{equation*}
At time $t\in\Z_{\ge0}$, the controller received $t+1$ symbols $s_0\ldots s_t$, which
are used to determine the control input given by the
\emph{controller function} 
\begin{equation*}
\textstyle
\delta:\bigcup_{t\in\Z_{\ge0}}S^{\intcc{0;t}}\to U.
\end{equation*}
A \emph{coder-controller} for~\eqref{e:sys} is a triple $H:=(S,\gamma,\delta)$,
where $S$ is a coding alphabet and $\gamma$ and $\delta$ are a compatible coder
function and controller function, respectively. 

Given a coder-controller $(S,\gamma,\delta)$ for~\eqref{e:sys} and $\xi\in X^{\intcc{0;t}}$ with
$t\in\Z_{\ge0}$, let us use the mapping 
\begin{equation*}
\Gamma_t:X^{\intcc{0;t}}\to S^{\intcc{0;t}}
\end{equation*}
to denote the sequence $\zeta=\Gamma_t(\xi)$ of coder symbols 
generated by $\xi$, i.e., $\zeta(t')=\gamma(\xi|_{\intcc{0;t'}})$ holds for all $t'\in\intcc{0;t}$.
Subsequently, for $\zeta\in S^{\intco{0;t}}$ with $t\in\N$, we use
{\color{black}
	\begin{equation}\label{e:cc:post}
	Z(\zeta):= \{s\in S\mid \exists_{(\xi,\nu)\in \mathcal{B}(\Sigma)}\;
	\zeta s=\Gamma(\xi|_{\intcc{0;t}}) 
	\wedge \forall_{t' \in \intco{0;t}}  \nu(t')=\delta(\zeta|_{\intco{0;t'}}) \}
	\end{equation}
}to denote the possible successor coder symbols $s$ of the symbol sequence
$\zeta$ in the closed loop illustrated in Fig.~\ref{f:1}.
For notational convenience,  let us use the
convention $Z(\emptyset):=S$, so that $Z(\zeta|_{\intco{0;0}})=S$ for any
sequence $\zeta$ in $S$.
For $\tau\in \N\cup\{\infty\}$, we introduce the set 
\begin{equation*}
\mathcal{Z}_\tau:=\{ \zeta\in S^{\intco{0;\tau}}
\mid
\zeta(0)\in \gamma(X)\wedge
\forall_{t\in
	\intoo{0;\tau}}\; \zeta(t)\in
Z(\zeta|_{\intco{0;t}})\}
\end{equation*}
and define the \emph{transmission data rate} of a
coder-controller $H$ by
\begin{equation}\label{e:datarate}
R(H):=\limsup_{\tau\to\infty}
\max_{\zeta\in\mathcal{Z}_\tau }\frac{1}{\tau}
\sum_{t=0}^{\tau-1}
\log_2\no
Z(\zeta|_{\intco{0;t}})
\end{equation}
as the asymptotic average numbers of symbols in~$Z(\zeta)$ considering the worst-case
of possible symbol sequences  $\zeta\in \mathcal{Z}_\tau$.

A coder-controller
$H=(S,\gamma,\delta)$ for~\eqref{e:sys} is called \emph{$Q$-admis\-sib\-le} where $Q$ is a
nonempty subset of $X$, if for every
trajectory $(\xi,\nu)$ on $\intco{0;\infty}$ of~\eqref{e:sys} that satisfies
\begin{equation}\label{e:cc:admissible}
\xi(0)\in Q \text{ and }
\forall_{t\in\Z_{\ge0}}\;\nu(t)=\delta(\Gamma_t(\xi|_{\intcc{0;t}}))
\end{equation}
we have $\xi(\Z_{\ge0})\subseteq Q$. Let us use $\mathcal{B}_Q(H)$ to denote the
set of all trajectories $(\xi,\nu)$ on $\intco{0;\infty}$ of~\eqref{e:sys} that satisfy~\eqref{e:cc:admissible}.

\subsubsection{Time-varying data rate definition}

We follow~\cite{NairEvansMarrelsMoran04} and introduce a time-varying notion of
data rate for a coder-controller $H=(S,\gamma,\delta)$ for~\eqref{e:sys}.
Let $(S_t)_{t\ge0}$ be the sequence in 
{\color{black} the power set of}
$S$ that for each $t\in Z_{\ge0}$
contains the smallest number of symbols so that  $\gamma(\xi)\in S_t$ holds for
all $\xi\in X^{\intcc{0;t}}$. Then the time-varying data rate  of $H$ follows by 
\begin{equation*}
R_{\rm tv}(H):= \liminf_{\tau\to\infty}\frac{1}{\tau}\sum_{t=0}^{\tau-1}\log_2\no S_t.
\end{equation*}
In the following we use an example to show that there exists a $Q$-admissible
coder-controller $H$, which satisfies $R(H)<R_{\rm tv}(\bar H)$ for any 
$Q$-admissible coder-control\-ler $\bar H$.
Note that this inequality is purely a nondeterministic pheno\-me\-non: if the control system is
deterministic, it follows from the deterministic and the nondeterministic data
rate theorem (\cite[Thm.~1]{NairEvansMarrelsMoran04} and
Theorem~\ref{t:datarate} below) and  the
equivalence $h_{\rm det}{\color{black}(Q)}=h_{\rm inv}{\color{black}(Q)}$ (Theorem~\ref{t:invdet}) that the 
different notions of data rates coincide in the sense that $\inf_{H}
R(H)=\inf_{H} R_{\rm tv}(H)$ (at least if the strong invariance condition 
in \cite[Thm.~1]{NairEvansMarrelsMoran04} holds).

%i.e., the set $Q$ in the
%example can be rendered invariant with a strictly lower data rates then 
\begin{example}[label=ex:timevaryingdatarate]
	\normalfont
	Consider an instance of~\eqref{e:sys} with $U:=\{a,b\}$,  $X:=\{0,1,2,3\}$ and
	$F$ is illustrated by
	\begin{center} 
		\begin{tikzpicture}[->,thick,shorten >=1pt]
		\tikzstyle{state} = [draw, circle, minimum height=1.5em, minimum width=1.5em]
		
		\node[state] (0) at (-2,0) {$0$};
		\node[state] (1) at (0,0) {$1$};
		\node[state] (2) at (2,0) {$2$};
		\node[thin,dashed,state] (3) at (0,-1) {$3$};
		
		\path (0) edge [bend left] node[below] {$a$} (1);
		\path (1) edge [bend left] node[above] {$a$} (0);
		
		\path (2) edge [bend left] node[above] {$b$} (1);
		\path (1) edge [bend left] node[below] {$a$} (2);

		\path[thin,dashed] (3) edge [in=30,out=60,loop] node[right,near end] {$b$} (3);
		\path[thin,dashed] (3) edge [in=150,out=120,loop] node[left,near end] {$a$} (3);
		%\path (3) edge [loop left ] node {$a$} (3);
		
		\path[thin,dashed] (0) edge[bend right]  node[near start,below] {$b$} (3);
		\path[thin,dashed] (2) edge[bend left ]  node[near start,below] {$a$} (3);
		\path[thin,dashed] (1) edge  node[near start,right] {$b$} (3);
		\end{tikzpicture}
	\end{center}
	Let $Q:=\{0,1,2\}$.  The transitions that lead outside $Q$ and the
	states that are outside $Q$ are marked by dashed lines.
	Consider the coder-cont\-roller
	$H=(S,\gamma,\delta)$  with
	$S:=X$ and $\gamma$ and $\delta$ are given
	for $\xi\in X^{\intcc{0;t}}$, $t\in\Z_{\ge0}$, by $\gamma(\xi):=\xi(t)$
	and $\delta(\xi):=a$ if $\xi(t)\in\{0,1,3\}$  and $\delta(\xi):=b$ if
	$\xi(t)=2$. We compute the number of possible successor symbols $Z(\xi)$ for $\xi\in
	X^{\intcc{0;t}}$, $t\in\Z_{\ge0}$, by $\no Z(\xi)=1$ if $\xi(t)\in\{0,2,3\}$
	and  $\no Z(\xi)=2$ if $\xi(t)=1$. It is easy to verify that $H$ is
	$Q$-admissible. Since the state $\xi(t)=1$ occurs only
	every other time step for any element $(\xi,\nu)$ of the closed loop, we compute the data rate to
	$R(H)=\nicefrac{1}{2}$. 
	Consider a time-varying $Q$-admissible coder-controller $\bar H=(\bar
	S,\bar\gamma,\bar\delta)$. Initially, the states $\{0,1\}$ and $\{2\}$ need to
	be distinguishable at the
	controller side in order to confine the system to $Q$ so that $\no \bar S_0\ge 2$
	follows. At time $t=1$,
	the system is possibly again in any of the states $\{0,1,2\}$ (depending on the
	initial condition) and we have $\no \bar S_1\ge 2$. By continuing this
	argument we see that $\no\bar  S_t\ge 2$ for all $t\in\Z_{\ge0}$ and 
	$R_{\rm tv}(\bar H)\ge 1$ follows.
	\qed
\end{example}

\subsubsection{Zero-error capacity of uncertain channels}

Alternatively to the definition of the data rate of a coder-controller
in~\eqref{e:datarate} we could follow~\cite{Nair12,Nair13} and define the data
rate of a coder-controller as the zero-error capacity $C_0$ of an \emph{ideal
	stationary memoryless uncertain channel} (SMUC) in the nonstochastic information theory
framework presented in~\cite[Def.~4.1]{Nair13}. The input alphabet of the SMUC equals
the output alphabet and is given by $S$. The channel is ideal and does not
introduce any error in the transmission. Hence, the transition function is the identity, i.e., $T(s)=s$ holds for all $s\in S$. The input function space
$\mathcal{Z}_\infty\subseteq S^{\intco{0;\infty}}$ is the set of all possible
symbol sequences that are generated by the closed loop, which represents the
total amount of information that needs to be
transmitted by the channel. For the ideal SMUC, the zero-error
capacity~\cite[Eq.~(25)]{Nair13}, for a
coder-controller $H$ results in
\begin{equation*}\label{eq:ZeroErrorCapacity}
C_0(H):=\lim_{\tau\to\infty} \frac{1}{\tau} \log_2\no\mathcal{Z}_\tau.
\end{equation*}
We use the following example to demonstrate that the zero-error capacity is too
low, i.e., $C_0(H)=0$ while $R(H)\ge 1$.

\begin{example}[label=ex:zeroerrorcap]
	\normalfont
	Consider an instance of~\eqref{e:sys} with $U:=\{a,b\}$,  $X:=\{0,1,2,3\}$ and
	$F$ is illustrated by
	\begin{center} 
		\begin{tikzpicture}[->,thick,shorten >=1pt]
		\tikzstyle{state} = [draw, circle, minimum height=1.5em, minimum width=1.5em]
		
		\node[state] (0) at (-2,0) {$0$};
		\node[state] (1) at (0,0) {$1$};
		\node[state] (2) at (2,0) {$2$};
		\node[thin,dashed,state] (3) at (4,0) {$3$};
		
		\path[thin,dashed] (0) edge [bend left] node[very near start,above] {$b,c$} (3);
		\path[thin,dashed] (1) edge [bend left] node[very near start,above] {$a,c$} (3);
		\path[thin,dashed] (2) edge             node[near start,above] {$a,b$} (3);
		
		\path (2) edge [loop below] node[near start,right] {$c$} (2);
		
		\path (0) edge[bend right] node[very near end,below] {$a$} (1);
		\path (0) edge[bend right] node[very near end,below] {$a$} (2);
		
		\path (1) edge node[above] {$b$} (0);
		\path (1) edge node[above] {$b$} (2);
		
		\path[thin,dashed] (3) edge [loop right] node {$a,b,c$} (3);
		\end{tikzpicture}
	\end{center}
	The transitions and states that lead, respectively, are outside the set of
	interest $Q:=\{0,1,2\}$ are dashed.
	Consider the $Q$-admissible coder-cont\-roller
	$H=(S,\gamma,\delta)$  with
	$S:=X$ and $\gamma$ and $\delta$ are given
	for $\xi\in X^{\intcc{0;t}}$, $t\in\Z_{\ge0}$ by $\gamma(\xi):=\xi(t)$
	and 
	\begin{equation*}
	\delta(\xi):=
	\begin{cases}
	a & \text{if } \xi(t)\in\{0,3\}\\
	b & \text{if } \xi(t)=1\\
	c & \text{if } \xi(t)=2.
	\end{cases}
	\end{equation*}
	We pick the trajectory $(\xi,\nu)\in \mathcal{B}_{Q}(H)$ given for
	$t\in\Z_{\ge0}$ by $\xi(2t)=0$ and $\xi(2t+1)=1$. We obtain
	$Z(\xi|_{\intcc{0;t}})=\{1,2\}$ if $\xi(t)=0$ and 
	$Z(\xi|_{\intcc{0;t}})=\{0,2\}$ if $\xi(t)=1$. Since $\no F(x,u)\le 2$ for all
	$x\in X$ and $u\in U$, it is straightforward to see that $\sum_{t=0}^{\tau-1} \log_2\no
	Z(\xi|_{\intco{0;t}})=
	\max_{\zeta\in\mathcal{Z}_\tau }
	\sum_{t=0}^{\tau-1}
	\log_2\no
	Z(\zeta|_{\intco{0;t}})$ holds for all $\tau\in\N$. Hence, we obtain
	$R(H)=1$.\\
	We are going to derive $C_0(H)$. Consider the set $\mathcal{Z}_\tau\subseteq X^{\intco{0;\tau}}$
	and the hypothesis for $\tau\in\N$: there exists at most one $\xi\in\mathcal{Z}_\tau$ with
	$\xi(\tau-1)=1$ and there exists at most one $\xi\in\mathcal{Z}_\tau$
	with $\xi(\tau-1)=0$. For $\tau=1$ we have $\mathcal{Z}_1=X$ and the
	hypothesis holds. Suppose the hypothesis holds for $\tau\in \N$ and let
	$\xi\in\mathcal{Z}_\tau$. We have $Z(\xi)=\{0,2\}$ if $\xi(t)=1$, 
	$Z(\xi)=\{1,2\}$ if $\xi(t)=0$, 
	$Z(\xi)=\{2\}$ if $\xi(t)=2$ and
	$Z(\xi)=\{3\}$ if $\xi(t)=3$, so that the hypothesis holds for $\tau+1$, which
	shows that the hypothesis holds for every $\tau\in\N$. Therefore, we obtain a
	bound of the number of elements in $\mathcal{Z}_\tau$ by $4+2(\tau-1)$ and
	the zero-error capacity of $H$ follows by $C_0(H)=0$.
	\qed
\end{example}

Example~\ref{ex:zeroerrorcap} shows that even though, the asymptotic
average of the total amount of information that needs to be transmitted (= symbol
sequences generated by the closed loop) via the
channel is zero, the necessary (and sufficient) data rate to confine the system
$\Sigma$ within $Q$ is one. The discrepancy results from the causality
constraints that are imposed on the coder-controller structure by the invariance
condition, i.e., at each instant in time the controller needs to be able to
produce a control input so that all successor states are inside $Q$ see
e.g.~\cite{SilvaDerpichOstergaard11}. Contrary to this observation, the zero-error capacity is an
adequate measure for data rate constraints for deterministic linear systems (without
disturbances)~\cite{Nair12,Nair13}.
%as well as for nonlinear, deterministic
%control systems, given that we interpret the  $(\tau,Q)$-spanning set
%$\mathcal{S}$ with minimal cardinality as the total amount of information that needs to be transmitted until time $\tau$, see~\eqref{e:dt:min:span}.

\subsubsection{Periodic coder-controllers}
In the proof of the data rate theorem, we work with periodic coder-controllers.
Given $\tau\in\N$ and a coder-controller $H=(S,\gamma,\delta)$, we say that $H$
is \emph{$\tau$-periodic} if for all $t\in\Z_{\ge0}$, 
$\zeta\in S^{\intcc{0;t}}$ and 
$\xi\in X^{\intcc{0;t}}$ we have 
\begin{equation}\label{e:tauperiodic}
\begin{split}
\gamma(\xi)&=\gamma(\xi|_{\intcc{\tau\lfloor t/\tau \rfloor;t}}),\\
\delta(\zeta)&=\delta(\zeta|_{\intcc{\tau\lfloor t/\tau \rfloor;t}}). 
\end{split}
\end{equation}
\begin{lemma}\label{l:tauperiodic:rate}
	The transmission data rate of a $\tau$-periodic co\-der-controller 
	$H=(S,\gamma, \delta)$ for~\eqref{e:sys} is given by
	\begin{equation}\label{e:cc:tauperiodic}
	R( H)= \max_{\zeta\in\mathcal{Z}_\tau}\frac{1}{\tau}\sum_{t=0}^{\tau-1}\log_2\no Z(\zeta|_{\intcc{0;t}}).
	\end{equation}
\end{lemma}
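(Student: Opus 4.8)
The plan is to show that for a $\tau$-periodic coder-controller the ``worst-case average transmission'' over a window of length $n\tau$ is exactly $n$ copies of the worst-case average over a single period of length $\tau$, and then to take the $\limsup$ in~\eqref{e:datarate}. The key observation is that $\tau$-periodicity~\eqref{e:tauperiodic} forces the set $Z(\zeta|_{\intco{0;t}})$ of possible successor symbols to depend only on the position of $t$ within its period and on the symbols received since the last multiple of $\tau$. More precisely, I would first prove the following structural claim: for every $\zeta\in\mathcal{Z}_{n\tau}$ and every $t\in\intco{0;n\tau}$, writing $t=k\tau+s$ with $k=\lfloor t/\tau\rfloor$ and $s\in\intco{0;\tau}$, we have
\begin{equation*}
  \no Z(\zeta|_{\intco{0;t}}) = \no Z\big(\zeta|_{\intco{k\tau;t}}\big),
\end{equation*}
where the right-hand side is computed with respect to the shifted trajectory. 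This follows because a trajectory $(\xi,\nu)\in\mathcal{B}(\Sigma)$ realizing $\zeta s=\Gamma(\xi|_{\intcc{0;t}})$ with $\nu|_{\intco{0;t}}=\delta\circ\zeta$ can be replaced, using~\eqref{e:tauperiodic} and the fact that $\Sigma$'s transition relation is time-invariant, by the time-shifted pair on $\intcc{k\tau;t}$, and conversely any such trajectory on the short window extends (again by time-invariance of $F$) to a full trajectory. I would then note that the restriction map $\zeta\mapsto(\zeta|_{\intco{k\tau;(k+1)\tau}})_{k=0}^{n-1}$ identifies $\mathcal{Z}_{n\tau}$ with the $n$-fold product $(\mathcal{Z}_\tau)^n$: membership in $\mathcal{Z}_{n\tau}$ imposes the constraint $\zeta(t)\in Z(\zeta|_{\intco{0;t}})$, which by the structural claim is a constraint involving only the symbols in the current block, so each block independently ranges over $\mathcal{Z}_\tau$.

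Given these two facts, the sum over a window of length $n\tau$ splits as
\begin{equation*}
  \max_{\zeta\in\mathcal{Z}_{n\tau}}\sum_{t=0}^{n\tau-1}\log_2\no Z(\zeta|_{\intco{0;t}})
  = \sum_{k=0}^{n-1}\max_{\zeta^{(k)}\in\mathcal{Z}_\tau}\sum_{s=0}^{\tau-1}\log_2\no Z(\zeta^{(k)}|_{\intcc{0;s}})
  = n\,\tau\, R_0,
\end{equation*}
where $R_0$ denotes the right-hand side of~\eqref{e:cc:tauperiodic}; here I use that the maximum of a sum of independently varying blocks is the sum of the block maxima. Dividing by $n\tau$ gives $\frac{1}{n\tau}\max_{\zeta\in\mathcal{Z}_{n\tau}}(\cdots)=R_0$ for every $n\in\N$, so along the subsequence $\tau_n:=n\tau$ the quantity inside the $\limsup$ in~\eqref{e:datarate} is constantly $R_0$. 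It remains to handle arbitrary $\tau'\in\N$, not just multiples of $\tau$: for $\tau'=n\tau+r$ with $0\le r<\tau$, I would bound the average over the length-$\tau'$ window by peeling off the initial full blocks and estimating the trailing partial block by $\frac{r}{\tau'}\log_2\no S$ (which is bounded since $S$ is finite), obtaining that the length-$\tau'$ average lies within $O(1/n)$ of $R_0$; hence the full $\limsup$ equals $R_0$.

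The main obstacle I anticipate is the structural claim, i.e.\ making rigorous that $\tau$-periodicity of the coder and controller functions really does render the successor-symbol sets $Z(\zeta|_{\intco{0;t}})$ ``memoryless across period boundaries.'' The subtlety is that $Z(\cdot)$ is defined via existence of an entire trajectory $(\xi,\nu)\in\mathcal{B}(\Sigma)$ consistent with the symbol sequence, and trajectories carry history through the state $\xi$; one must argue that the state reached at time $k\tau$ can be \emph{any} state consistent with the block $\zeta|_{\intco{k\tau;\,\cdot\,}}$ read in isolation, using that $F$ is time-invariant and that the $Q$-admissibility / closed-loop constraints only couple through $\delta\circ\zeta$, which by~\eqref{e:tauperiodic} also resets at period boundaries. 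Once this ``cut-and-paste'' of trajectories at multiples of $\tau$ is justified, the remaining combinatorial and asymptotic steps are routine.
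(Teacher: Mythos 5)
Your first half is sound and is essentially the paper's argument: for $\zeta\in\mathcal{Z}_T$ you cut $\zeta$ into length-$\tau$ blocks, use \eqref{e:tauperiodic} and time-invariance of $F$ to see that each complete block, read in isolation, lies in $\mathcal{Z}_\tau$ and that the full-history successor sets are no larger than the block-local ones (a trajectory realizing the full history can be truncated and shifted, and extended forward since $F$ is strict), and you absorb the trailing partial block into a term bounded by $\tau\log_2\no S$; this gives exactly the paper's estimate $\frac{1}{T}\sum_{t=0}^{T-1}\log_2\no Z(\zeta|_{\intcc{0;t}})\le L+\nicefrac{C}{T}$ with $C=\tau\log_2\no S$, hence $R(H)\le L$, which is all the paper's proof carries out explicitly.

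The gap is in your route to the reverse inequality. Both the claimed equality $\no Z(\zeta|_{\intco{0;t}})=\no Z(\zeta|_{\intco{k\tau;t}})$ and the identification of $\mathcal{Z}_{n\tau}$ with $(\mathcal{Z}_\tau)^n$ hold only in one direction: the shift argument gives $Z(\zeta|_{\intco{0;t}})\subseteq Z(\zeta|_{\intco{k\tau;t}})$ and an \emph{embedding} of $\mathcal{Z}_{n\tau}$ into the product, not equality. The reverse containment would require the ``paste'' step you flag as the main obstacle, namely extending a block-local trajectory \emph{backwards} to a trajectory whose symbols reproduce the earlier blocks of $\zeta$ and which reaches the block's initial state at time $k\tau$. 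This is false in general: the set of states reachable at a period boundary in the closed loop, given a fixed symbol history, can be a strict subset of $X$, so the blocks do not range independently over $\mathcal{Z}_\tau$ and the worst-case block need not recur. A minimal illustration is $X=S=\{0,1\}$, $F(x,u)=\{0\}$ for all $x,u$, and the $\tau$-periodic coder $\gamma(\xi)=\xi(t)$: block-locally every period looks like the initial one, but after time $0$ only the state $0$ is reachable, so $\no Z(\zeta|_{\intco{0;t}})=1$ for all $t\ge 1$ and the maximum over $\mathcal{Z}_{n\tau}$ is strictly smaller than $n$ times the block maximum. Consequently your identity ``$\max$ over $\mathcal{Z}_{n\tau}$ of the sum $=$ sum of block maxima $=n\tau R_0$'' is unjustified (and fails in such examples), so your argument, as written, only delivers the same ``$\le$'' direction that the cut-and-bound computation already gives; the lower bound $R(H)\ge R_0$ needs a different justification (e.g., exhibiting admissible symbol sequences in which the maximizing block behaviour recurs), which neither your sketch nor the block-independence claim provides.
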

\begin{proof}
	Let $L$ denote the right-hand-side of~\eqref{e:cc:tauperiodic}.
	Consider $T\in\N$, $\zeta\in \mathcal{Z}_T$ and set $a:=\lfloor T/\tau \rfloor$
	and $\bar \tau :=T-\tau a$.
	We define $\zeta_i:=\zeta|_{\intco{i\tau;(i+1)\tau}}$ for $i\in\intco{0;a}$
	and $\zeta_a:=\zeta|_{\intco{a\tau;T}}$. 
	Since $\gamma$ is
	$\tau$-periodic, we see that each $\zeta_i$ with $i\in\intco{0;a}$ is an
	element of $\mathcal{Z}_\tau$, and we obtain for $N_i:=\sum_{t=0}^{\tau-1} \log_2\no
	Z(\zeta_i|_{\intcc{0;t}})$ the bound $N_i\le L\tau$ for all
	$i\in\intco{0;a}$. We define $N_a:=\sum_{t=0}^{\bar\tau-1} \log_2\no
	Z(\zeta_a|_{\intcc{0;t}})$ which is bounded by $N_a\le \tau\log_2\no S$.
	Note that $a\tau+\bar \tau=T$, so that for $C:=\tau\log_2\no
	S$ we have 
	\begin{equation*}
	\textstyle
	\frac{1}{T}\sum_{t=0}^{T-1} \log_2\no Z(\zeta|_{\intcc{0;t}})
	=
	\textstyle
	\frac{1}{T}(\sum_{i=0}^{a-1}N_i+N_a) 
	\le 
	\textstyle
	\frac{1}{T}(aL\tau + L\bar\tau +C)=L+\frac{C}{T}.
	\end{equation*}
	Since $C$ is independent of $T$, the assertion follows.
\end{proof}

\begin{lemma}\label{l:tauperiodic:exists}
	For every coder-controller $H=(S,\delta,\gamma)$ for~\eqref{e:sys} and
	$\varepsilon>0$, there
	exists a $\tau$-periodic coder-controller $\hat H=(S,\hat\delta,\hat\gamma)$
	that satisfies
	\begin{equation*}
	R(\hat H) \le R(H) +\varepsilon.
	\end{equation*}
\end{lemma}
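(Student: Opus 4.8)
The plan is to build $\hat H$ by \emph{truncating the memory} of $H$ every $\tau$ steps, for a suitably large $\tau$ chosen up front. If $R(H)=\infty$ there is nothing to prove, so assume $R(H)<\infty$. By the definition of the $\limsup$ in~\eqref{e:datarate} we may fix $\tau\in\N$ with
\[
  \frac{1}{\tau}\max_{\zeta\in\mathcal{Z}_\tau}\sum_{t=0}^{\tau-1}\log_2\no Z(\zeta|_{\intco{0;t}})\ \le\ R(H)+\varepsilon ,
\]
where $\mathcal{Z}_\tau$ and $Z(\cdot)$ are the objects associated with $H$. With this $\tau$ I would set $\hat H:=(S,\hat\gamma,\hat\delta)$ with $\hat\gamma(\xi):=\gamma(\xi|_{\intcc{\tau\lfloor t/\tau\rfloor;t}})$ for $\xi\in X^{\intcc{0;t}}$ and $\hat\delta(\zeta):=\delta(\zeta|_{\intcc{\tau\lfloor t/\tau\rfloor;t}})$ for $\zeta\in S^{\intcc{0;t}}$. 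Then $\hat H$ is $\tau$-periodic by construction, so Lemma~\ref{l:tauperiodic:rate} gives $R(\hat H)=\tfrac{1}{\tau}\max_{\zeta\in\hat{\mathcal Z}_\tau}\sum_{t=0}^{\tau-1}\log_2\no\hat Z(\zeta|_{\intcc{0;t}})$, where $\hat{\mathcal Z}_\tau$ and $\hat Z$ are built from $\hat H$.

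The key observation is that on a single block $\hat H$ coincides with $H$: whenever a history has length at most $\tau$, the window $\intcc{\tau\lfloor t/\tau\rfloor;t}$ equals the whole history, so $\hat\gamma=\gamma$ and $\hat\delta=\delta$ on all histories of length $\le\tau$. Unwinding~\eqref{e:cc:post} this gives $\hat Z(\zeta)=Z(\zeta)$ for every symbol sequence $\zeta$ of length $\le\tau-1$, together with $\hat\gamma(X)=\gamma(X)$; since the definition of $\hat{\mathcal Z}_\tau$ only inspects prefixes $\zeta|_{\intco{0;t}}$ of length $t<\tau$, it follows that $\hat{\mathcal Z}_\tau=\mathcal Z_\tau$.

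It remains to compare the two sums term by term for a fixed $\zeta\in\mathcal Z_\tau$. For $t\in\intco{0;\tau-1}$ the prefix $\zeta|_{\intcc{0;t}}=\zeta|_{\intco{0;t+1}}$ has length $t+1\le\tau-1$, so $\hat Z(\zeta|_{\intcc{0;t}})=Z(\zeta|_{\intco{0;t+1}})$. The only remaining index is $t=\tau-1$, where $\zeta|_{\intcc{0;\tau-1}}$ has length $\tau$ and $\hat Z(\zeta|_{\intcc{0;\tau-1}})$ is the set of symbols $\hat H$ may emit at the beginning of the next block; by $\tau$-periodicity such a symbol is of the form $\gamma(x)$ for a single state $x$, whence $\no\hat Z(\zeta|_{\intcc{0;\tau-1}})\le\no\gamma(X)\le\no S=\no Z(\zeta|_{\intco{0;0}})$. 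Summing these bounds,
\[
  \sum_{t=0}^{\tau-1}\log_2\no\hat Z(\zeta|_{\intcc{0;t}})\ \le\ \sum_{t=0}^{\tau-1}\log_2\no Z(\zeta|_{\intco{0;t}}),
\]
and together with $\hat{\mathcal Z}_\tau=\mathcal Z_\tau$ and the choice of $\tau$ this yields $R(\hat H)\le R(H)+\varepsilon$.

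The step I expect to need the most care is this last one: the sum in~\eqref{e:datarate} ranges over half-open prefixes $\zeta|_{\intco{0;t}}$ (the channel branching at offsets $0,\dots,\tau-1$ within a block) whereas Lemma~\ref{l:tauperiodic:rate} uses closed prefixes $\zeta|_{\intcc{0;t}}$ (offsets $1,\dots,\tau$), and the two reconcile only because for the periodic $\hat H$ the offset $\tau$ plays the role of offset $0$ and its branching is bounded by $\no S$. For later use one should also record that $\hat H$ is $Q$-admissible whenever $H$ is---the property actually invoked in the proof of the data rate theorem---which follows by induction over blocks: starting from a state in $Q$, admissibility of $H$ confines the closed loop to $Q$ throughout the block, in particular at its final step, which is the initial state of the next block.
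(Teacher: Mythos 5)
Your proposal is correct and follows essentially the same route as the paper: the identical memory-truncated construction $\hat\gamma(\xi)=\gamma(\xi|_{\intcc{\tau\lfloor t/\tau\rfloor;t}})$, $\hat\delta(\zeta)=\delta(\zeta|_{\intcc{\tau\lfloor t/\tau\rfloor;t}})$, combined with Lemma~\ref{l:tauperiodic:rate} and the observation that $\hat Z$ agrees with $Z$ on all prefixes of length at most $\tau-1$ while the wrap-around term at $t=\tau-1$ is bounded by the number of possible initial symbols. The only (harmless) difference is bookkeeping: the paper splits $\varepsilon$ into two halves and bounds the wrap-around contribution by $\tfrac{1}{\tau}\log_2$ of the initial-symbol count, whereas you absorb it into the $t=0$ term of the original rate sum via the convention $Z(\zeta|_{\intco{0;0}})=S$, which is equally valid.
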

\begin{proof}
	For $\varepsilon>0$, we pick $\tau\in\N$ so that
	$\log_2\no\mathcal{Z}_0/\tau\le\nicefrac{\varepsilon}{2}$ and 
	\begin{equation*}
	\max_{\zeta\in\mathcal{Z}_\tau}
	\frac{1}{\tau}\sum_{t=0}^{\tau-2}\log_2\no Z(\zeta|_{\intcc{0;t}})\le R(H)+\nicefrac{\varepsilon}{2}. 
	\end{equation*}
	We define
	$\hat \gamma$ and $\hat\delta$ for all $\xi\in X^{\intcc{0;t}}$, $\zeta\in
	S^{\intcc{0;t}}$ with $t\in\Z_{\ge0}$ by 
	\begin{equation*}
	\hat\gamma(\xi):=\gamma(\xi|_{\intcc{\tau\lfloor t/\tau \rfloor;t}})
	\;\text{ and }\;
	\hat\delta(\zeta):=\delta(\zeta|_{\intcc{\tau\lfloor t/\tau \rfloor;t}}). 
	\end{equation*}
	Let $\hat Z$ be defined in \eqref{e:cc:post} w.r.t.~$\hat\gamma$. Then we have
	for all $\zeta\in S^{\intcc{0;t}}$ with $t\in\intco{0;\tau-1}$ the equality
	$Z(\zeta)=\hat Z(\zeta)$ and for every $\zeta\in S^{\intco{0;\tau}}$ we have $\hat
	Z(\zeta)=\mathcal{Z}_0$ which follows from the fact that $\hat\gamma$ is
	$\tau$-periodic. The transmission data rate of $\hat H$ follows
	by~\eqref{e:cc:tauperiodic} which is bounded by
	\begin{equation*}
	\max_{\zeta\in\hat{\mathcal{Z}}_{\tau}}
	\frac{1}{\tau}(\sum_{t=0}^{\tau-2}\log_2\no \hat
	Z(\zeta|_{\intcc{0;t}})+\log_2\no \mathcal{Z}_0)\le
	R(H)+\varepsilon.\qedhere
	\end{equation*}
\end{proof}

\subsection{The data rate theorem}
The next result establishes the main data rate theorem in this work.

\begin{theorem}\label{t:datarate}
	Consider the system $\Sigma=(X,U,F)$ and a nonempty set
	$Q\subseteq X$. The invariance feedback entropy of $\Sigma$ and $Q$ satisfies
	\begin{equation}\label{e:datarate:equiv}
	h_{\rm inv}{\color{black}(Q)} = \inf_{H\in \mathcal{H}} R(H)
	\end{equation}
	where $\mathcal{H}$ is the set of all $Q$-admissible coder-controllers
	for~$\Sigma$.
\end{theorem}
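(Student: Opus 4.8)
The plan is to prove the two inequalities $h_{\rm inv}\le R(H)$ for every $Q$-admissible coder-controller $H$, and $\inf_{H}R(H)\le h_{\rm inv}$, separately. In both directions the link between entropy and data rate is a $\tau$-periodic scheme: by Lemma~\ref{l:tauperiodic:rate} the transmission data rate of such a scheme is a finite maximum over $\mathcal{Z}_\tau$ that matches the expansion-number formula term by term, and by Lemma~\ref{l:tauperiodic:exists} periodicity costs nothing asymptotically.

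\emph{Entropy lower-bounds the data rate.} Given $H$ and $\varepsilon>0$, first replace $H$ by a $\tau$-periodic $\hat H$ with $R(\hat H)\le R(H)+\varepsilon$ (Lemma~\ref{l:tauperiodic:exists}), and then pass to a canonical version whose coding alphabet $S$ is a finite collection of subsets of $Q$ and whose transmitted symbol at each instant is the coder's current information set; this reduction does not increase $R(\cdot)$ and is the technical crux (see below). For a within-period symbol history $\zeta\in S^{\intcc{0;t}}$ with $t\in\intcc{0;\tau-1}$, let $A(\zeta)$ collect the states $\xi(k\tau+t)$ occurring, for some $k\in\Z_{\ge0}$, in a trajectory of $\mathcal{B}_Q(\hat H)$ whose symbols over period $k$ coincide with $\zeta$, and set $\mathcal{A}:=\{A(\zeta)\mid t\in\intcc{0;\tau-1},\ \zeta\in S^{\intcc{0;t}},\ A(\zeta)\neq\emptyset\}$ and $G(A(\zeta)):=\delta(\zeta)$. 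Since $F$ is strict, any finite trajectory extends to an infinite one, so $Q$-admissibility forces every successor of a state in $A(\zeta)$ under $\delta(\zeta)$ to stay in $Q$; hence $F(A(\zeta),G(A(\zeta)))\subseteq Q$, and because the length-one histories already cover $Q$ and $S$ is finite, $(\mathcal{A},G)$ is an invariant cover. Put $\alpha_\zeta(t):=A(\zeta|_{\intcc{0;t}})$ and $\mathcal{S}:=\{\alpha_\zeta\mid\zeta\in\mathcal{Z}_\tau\}$. A successor of a state in $A(\zeta|_{\intcc{0;t}})$ under $\delta(\zeta|_{\intcc{0;t}})$ lies in $A(\zeta|_{\intcc{0;t}}s)$ for some $s\in Z(\zeta|_{\intcc{0;t}})$, which by canonicity of the coder is one of the elements of $P(\alpha_\zeta|_{\intcc{0;t}})$; this verifies~\eqref{e:icover:spanning} and gives $\no P(\alpha_\zeta|_{\intcc{0;t}})\le\no Z(\zeta|_{\intcc{0;t}})$, hence $r_{\rm inv}(\tau,Q)\le N(\mathcal{S})\le 2^{\tau R(\hat H)}$ by Lemma~\ref{l:tauperiodic:rate}. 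Dividing by $\tau$ and using Lemma~\ref{l:subadditivity}, $h_{\rm inv}\le h(\mathcal{A},G)\le\tfrac{1}{\tau}\log_2 r_{\rm inv}(\tau,Q)\le R(\hat H)\le R(H)+\varepsilon$, and letting $\varepsilon\downarrow0$ yields $h_{\rm inv}\le R(H)$.

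\emph{Achievability.} Fix $\varepsilon>0$, choose an invariant cover $(\mathcal{A},G)$ of $\Sigma$ and $Q$ with $h(\mathcal{A},G)\le h_{\rm inv}+\varepsilon$, and, using Lemma~\ref{l:subadditivity}, choose $\tau\in\N$ together with a $(\tau,Q)$-spanning set $\mathcal{S}$ such that $N(\mathcal{S})=r_{\rm inv}(\tau,Q)$ and $\tfrac{1}{\tau}\log_2 r_{\rm inv}(\tau,Q)\le h(\mathcal{A},G)+\varepsilon$. Build the $\tau$-periodic coder-controller that reuses $\mathcal{S}$ on every period: at the start of a period the coder transmits an initial cover element $A\in P(\alpha)$ containing the current state (these cover $Q$); at step $t+1$ of the period it transmits an element of $P(\alpha|_{\intcc{0;t}})$ containing the current state, which exists by~\eqref{e:icover:spanning}; and the controller applies $G$ to the element just received. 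Every cover element $A$ satisfies $F(A,G(A))\subseteq Q$, so a closed-loop trajectory started in $Q$ never leaves $Q$, i.e.\ this $H$ is $Q$-admissible; and along any symbol history compatible with some $\alpha$, at step $t$ at most $\no P(\alpha|_{\intcc{0;t}})$ symbols can occur --- the period-boundary symbol being accounted for by the convention $P(\alpha|_{\intcc{0;\tau-1}})=P(\alpha)$ --- so $R(H)\le\tfrac{1}{\tau}\log_2 N(\mathcal{S})=\tfrac{1}{\tau}\log_2 r_{\rm inv}(\tau,Q)$ by Lemma~\ref{l:tauperiodic:rate}. Consequently $\inf_{H}R(H)\le h(\mathcal{A},G)+\varepsilon\le h_{\rm inv}+2\varepsilon$, and $\varepsilon\downarrow0$ finishes the proof.

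\emph{Expected main obstacle.} The delicate step is the reduction to a canonical coder in the lower-bound direction. For an arbitrary coder, distinct symbol histories may induce the same information set $A$ --- forcing a non-canonical choice of $G(A)$ --- while a third history with that same information set may branch to successor sets absent from every $P(\alpha_\zeta|_{\intcc{0;t}})$, which would break $\no P\le\no Z$ and the match with the expansion number. Establishing that one may assume the coder's alphabet to be the family of information sets, with no increase in transmission data rate (equivalently, choosing $\mathcal{S}$ as a prefix-consistent section of $\mathcal{Z}_\tau$), is the heart of the argument. The remaining ingredients --- period-boundary bookkeeping, the convention $Z(\emptyset)=S$, and the use of strictness of $F$ to complete finite trajectories --- are routine.
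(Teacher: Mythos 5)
Your overall architecture is the same as the paper's: both inequalities, reduction to $\tau$-periodic coder-controllers via Lemmas~\ref{l:tauperiodic:rate} and~\ref{l:tauperiodic:exists}, an information-set invariant cover with a spanning set indexed by $\mathcal{Z}_\tau$ and the termwise bound $\no P(\alpha_\zeta|_{\intcc{0;t}})\le \no Z(\zeta|_{\intcc{0;t}})$ for the converse, and a $\tau$-periodic coder-controller built from a minimal $(\tau,Q)$-spanning set for achievability. Your achievability half is essentially the paper's Lemma~\ref{l:datarate:upperbound} (the paper additionally fixes default symbols/inputs for histories not matching any $\alpha\in\mathcal{S}$ and proves $Q$-admissibility by induction, which is the routine part you anticipate), and the period-boundary bookkeeping you describe is exactly how the paper handles $t=\tau-1$.

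The gap is in the converse direction, at precisely the step you yourself label ``the technical crux'': the reduction to a \emph{canonical} coder whose alphabet is the family of information sets, with no increase in transmission data rate. You do not prove it, and it is not a routine reduction: the information sets are defined through the closed loop $\mathcal{B}_Q(H)$, so redefining $\gamma$ or $\delta$ to depend on them changes the closed loop and hence the information sets themselves (the construction is circular as stated), and even with set-valued symbols two distinct symbol histories can still share the same current information set while the history-dependent $\delta$ prescribes different inputs --- so canonicity of the alphabet alone does not deliver the property you actually use, namely $G(A(\zeta))=\delta(\zeta)$ for \emph{every} history $\zeta$. The paper does not take this detour at all: Lemma~\ref{l:datarate:lowerbound} works directly with the given $\tau$-periodic $Q$-admissible coder-controller, defines $A(\zeta)$ per symbol history, sets $G(A(\zeta)):=\delta(\zeta)$, verifies \eqref{e:icover:spanning} by extending finite closed-loop trajectories (strictness of $F$ plus \eqref{e:cc:admissible}), and obtains $\no P\le\no Z$ from the injectivity of the map sending distinct elements of $P(\alpha_\zeta|_{\intcc{0;t}})$ to distinct successor symbols in $Z(\zeta|_{\intcc{0;t}})$. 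Your underlying worry --- that $G$ may be overdetermined when distinct histories yield identical information sets --- is a fair observation about well-definedness, which the paper resolves only implicitly by its choice of $G$; but in your write-up the entire converse rests on an unproven reduction that, as formulated, does not obviously close that issue, so the proposal as it stands does not constitute a complete proof of \eqref{e:datarate:equiv}.
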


We use the following two technical lemmas to show the theorem.

\begin{lemma}\label{l:datarate:lowerbound}
	Let $H=(S,\gamma,\delta)$ be a $Q$-admissible $\tau$-periodic coder-controller
	for~$\Sigma=(X,U,F)$. Then there exists an invariant cover $(\mathcal{A},G)$ of
	$\Sigma$ and $Q$ and a $(\tau,Q)$-spanning set $\mathcal{S}$ in $(\mathcal{A},G)$ so that
	%the associated expansion number equals the transmission rate of $H$, i.e., 
	\begin{equation*}
	\frac{1}{\tau}\log_2 N(\mathcal{S})\le  R(H).
	\end{equation*}
\end{lemma}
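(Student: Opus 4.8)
The plan is to extract, from the $\tau$-periodic coder-controller $H=(S,\gamma,\delta)$, an invariant cover whose elements are the state sets the closed loop can occupy at a given instant of a period given the symbols transmitted so far. For $t\in\intco{0;\tau}$ and a symbol string $\zeta\in S^{\intcc{0;t}}$ that can be generated from $Q$, let $A_\zeta$ be the set of all $\xi(t)$ with $(\xi,\nu)\in\mathcal B_Q(H)$ and $\Gamma_t(\xi|_{\intcc{0;t}})=\zeta$, and set $G(A_\zeta):=\delta(\zeta)$. Using $Q$-admissibility of $H$ together with strictness of $F$ (so that every one-step successor of a state in $A_\zeta$ is realised by an admissible trajectory), one obtains: (i) the $A_\zeta$ with $|\zeta|=1$ cover $Q$; (ii) $F(A_\zeta,\delta(\zeta))\subseteq Q$, so that $\mathcal A:=\{A_\zeta\}$ (finite, since $S$ is finite and $|\zeta|\le\tau$) together with $G$ is an invariant cover of $\Sigma$ and $Q$; and (iii) $F(A_\zeta,\delta(\zeta))\subseteq\bigcup_{s\in Z(\zeta)}A_{\zeta s}$ for $|\zeta|\le\tau-1$, because a successor state carries the trajectory into some cell indexed by an admissible continuation symbol $s\in Z(\zeta)$.

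To build the spanning set I would not take all length-$\tau$ symbol strings but rather a tree of symbol strings in which, at each node $\zeta$, exactly one continuation symbol is retained per \emph{distinct} successor cell $A_{\zeta s}$; reading off $\alpha(t):=A_{\zeta|_{\intcc{0;t}}}$ along the branches yields a set $\mathcal S\subseteq\mathcal A^{\intco{0;\tau}}$. This choice forces a cover-element string in $\mathcal S$ to determine its symbol string along the tree, so that $\no P(\alpha|_{\intcc{0;t}})$ is just the number of distinct successor cells at the corresponding node, hence $\le\no Z(\zeta|_{\intcc{0;t}})$ for $t\in\intco{0;\tau-1}$ by (iii), while $P(\alpha)$ is the set of initial cells and covers $Q$ by (i). Spanning of $\mathcal S$ then follows from (iii), and multiplying the per-step bounds gives
\[
  N(\mathcal S)=\max_{\alpha\in\mathcal S}\prod_{t=0}^{\tau-1}\no P(\alpha|_{\intcc{0;t}})
  \le
  \max_{\zeta\in\mathcal Z_\tau}\prod_{t=0}^{\tau-1}\no Z(\zeta|_{\intcc{0;t}}),
\]
which after applying $\tfrac1\tau\log_2$ is exactly $R(H)$ by Lemma~\ref{l:tauperiodic:rate}; this is the claimed inequality, and it feeds the data rate theorem through $h_{\rm inv}\le h(\mathcal A,G)\le\tfrac1\tau\log_2 r_{\rm inv}(\tau,Q)\le\tfrac1\tau\log_2 N(\mathcal S)$ via Lemma~\ref{l:subadditivity}.

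The step I expect to be the main obstacle is precisely the matching of $\prod_t\no P(\alpha|_{\intcc{0;t}})$ against $\prod_t\no Z(\zeta|_{\intcc{0;t}})$, for two reasons. First, $\zeta\mapsto A_\zeta$ need not be injective, so cells arising from different symbol strings (even at different times of the period) may coincide; this jeopardises both the well-definedness of $G$ and the naive ``one symbol per successor cell'' count, and is exactly what the tree construction with a fixed choice of representative symbol string per cover-element string is meant to repair. Second, the $t=\tau-1$ factor of $N(\mathcal S)$ is the number of initial cells, which must be absorbed into the term $\log_2\no Z(\zeta|_{\intcc{0;\tau-1}})$ of the rate; here $\tau$-periodicity of $H$ is essential, because the admissible successors of a full-period symbol string are the possible first symbols of the next period, and the cover must be chosen (coarsening reachable-state cells where a common control allows it) so that the number of initial cells does not exceed this quantity.
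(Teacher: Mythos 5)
Your construction is essentially the paper's: the cells $A_\zeta$ of states reachable at time $t$ by $Q$-admissible closed-loop trajectories whose symbol history is $\zeta$, the controls $G(A_\zeta):=\delta(\zeta)$, and a $(\tau,Q)$-spanning set read off from the length-$\tau$ admissible symbol strings, with the rate comparison finished through Lemma~\ref{l:tauperiodic:rate}. Your facts (i)--(iii) are exactly what the paper verifies (invariance of $F(A_\zeta,\delta(\zeta))$ is obtained there by a contradiction argument using strictness of $F$ and $Q$-admissibility). Your pruned tree, retaining one continuation symbol per distinct successor cell, is a legitimate and in fact more careful way to get the per-step bound $\no P(\alpha|_{\intcc{0;t}})\le \no Z(\zeta|_{\intcc{0;t}})$ for $t\in\intco{0;\tau-1}$; the paper obtains the same bound by arguing that distinct successor cells attached to a common symbol prefix must arise from distinct symbols. (The possible ambiguity in defining $G$ when the same cell arises from symbol strings with different $\delta$-values is glossed over in the paper as well, so I do not count it against you.)

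The genuine gap is the step you yourself single out and do not close: the factor at $t=\tau-1$, i.e.\ $\no P(\alpha)$, the number of initial cells, versus $\no Z(\zeta)$ for a full-period string $\zeta\in\mathcal{Z}_\tau$. You assert that the cover ``must be chosen (coarsening reachable-state cells where a common control allows it)'' so that the number of initial cells does not exceed $\no Z(\zeta)$, but you neither construct such a coarsening nor show that one exists, and this route does not work in general: cells of an invariant cover can only be merged when a single control keeps the union inside $Q$, whereas $Z(\zeta)$ counts the symbols reachable at the end of one closed-loop period and can be far smaller than the number of cells any invariant cover needs in order to cover $Q$ initially. The paper closes this step by a different argument in which no modification of the cover is made: the initial cells are indexed injectively by the initial symbols, so $\no P(\alpha_\zeta)\le \no\gamma(X)$, and it then invokes $\tau$-periodicity to identify $Z(\zeta)$ for a full-period string with $\gamma(X)$, yielding $\no P(\alpha_\zeta)\le\no Z(\zeta)$ directly. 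So your proposal reproduces the paper's construction up to, but not including, the one point where $\tau$-periodicity actually carries the argument, and the repair you sketch for that point is not the paper's and, as described, would fail.
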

\begin{proof}
	For every $t\in\intco{0;\tau}$ and every $\zeta\in \mathcal{Z}_{t+1}$ we define
	$A(\zeta)
	:=
	\{x\in Q
	\mid 
	\exists_{(\xi,\nu)\in \mathcal{B}_Q(H)}\;
	\zeta=\Gamma_t(\xi|_{\intcc{0;t}})
	\wedge
	\xi(t)=x\}$,
	$G(A(\zeta)):=\delta(\zeta)$ and
	$\mathcal{A}:=\{A(\zeta)\mid \zeta\in \mathcal{Z}_{t+1}\wedge
	t\in\intco{0;\tau} \}$. We show that $(\mathcal{A},G)$ is an
	invariant cover of $\Sigma$ and $Q$. Clearly, $\mathcal{A}$ is finite and every element of $\mathcal{A}$ is a
	subset of $Q$.  Since $H$ is $Q$-admissible, for every $x\in Q$ there
	exists $(\xi,\nu)\in\mathcal{B}_Q(H)$ so that $\xi(0)=x$. Hence, $\{ A(s)\mid
	s\in \mathcal{Z}_1\}$ covers $Q$ and we see that $\mathcal{A}$ covers $Q$. Let $A\in \mathcal{A}$ and
	suppose that there exists $x\in A$ so that $F(x,G(A))\not\subseteq Q$. 
	Since $A\in\mathcal{A}$, there exists $t\in\intco{0;\tau}$,
	$\zeta\in \mathcal{Z}_{t+1}$ and $(\xi,\nu)\in\mathcal{B}_Q(H)$ so that
	$A=A(\zeta)$, $\zeta=\Gamma_t(\xi|_{\intcc{0;t}})$ and $x=\xi(t)$. Note that $\nu$
	satisfies \eqref{e:cc:admissible} so that $\nu(t)=G(A(\zeta))$ holds.
	We fix $x'\in F(x,G(A))\smallsetminus Q$ and pick a trajectory
	$(\xi',\nu')$ of~$\Sigma$ on $\intco{0;\infty}$ such that $\xi'(0)=x'$ and 
	$\nu'(t')=\delta(\Gamma_t((\xi|_{\intcc{0;t}}\xi')|_{\intcc{t;t+t'+1}}))$ holds
	for all $t'\in\Z_{\ge0}$. 
	We define $(\bar\xi,\bar\nu)$ by $\bar\xi:=\xi|_{\intcc{0;t}}\xi'$
	and $\bar\nu:=\nu|_{\intcc{0;t}}\nu'$, which by construction is a trajectory
	of~$\Sigma$ on $\intco{0;\infty}$
	which satisfies \eqref{e:cc:admissible} but
	$\bar\xi(\intco{0;\infty})\not\subseteq Q$. This contradicts the
	$Q$-admissibility of $H$ and we can deduce that $F(A,G(A))\subseteq Q$ for all
	$A\in\mathcal{A}$, which shows that $(\mathcal{A},G)$ is an invariant cover of $\Sigma$ and
	$Q$. \\
	% 
	%Let us show that $\no\mathcal{Z}_\tau\le \max_{\zeta\in\mathcal{Z}_\tau}\prod_{t=0}^{\tau-1}\no Z(\zeta|_{\intcc{0;t}})$,
	%\\[2cm]
	%
	%  Let us use $\mathcal{Z}$ to denote the set of all sequences $\zeta\in
	%  S^{\intco{0;\tau}}$ that satisfy 
	%  $\zeta(t+1)\in Z(\zeta|_{\intcc{0;t}})$ 
	% % $F(A(\zeta|_{\intcc{0;t}}),G(A(\zeta|_{\intcc{0;t}}))\cap A(\zeta|_{\intcc{0;t+1}})\neq \emptyset$
	%  for all $t\in\intco{0;\tau-1}$. For each $s\in S$, the number of sequences in
	%  $\zeta\in\mathcal{Z}$ with $\zeta(0)=s$ is bounded by $\max_{\zeta\in
	%  S^{\intco{0;\tau}}}\prod_{t=1}^{\tau-1} \no Z(\zeta|_{\intco{0;t}})$ and we
	%  can bound $\no\mathcal{Z}$ by $\max_{\zeta\in
	%  S^{\intco{0;\tau}}}\prod_{t=0}^{\tau-1} \no Z(\zeta|_{\intco{0;t}})$.
	We are going to construct a $(\tau,Q)$-spanning set $\mathcal{S}\subseteq
	\mathcal{A}^{\intco{0;\tau}}$ with the help of $\mathcal{Z}_\tau$.
	For each
	$\zeta\in \mathcal{Z}_\tau$ we define a sequence $\alpha_\zeta:\intco{0;\tau}\to
	\mathcal{A}$ by $\alpha_\zeta(t):=A(\zeta|_{\intcc{0;t}})$ for all
	$t\in\intco{0;\tau}$ and use $\mathcal{S}$ to denote the set of all such
	sequences $\{\alpha_\zeta\mid \zeta\in\mathcal{Z}_\tau\}$.
	Note that $P(\alpha_\zeta)=\{A(s)\mid s\in\mathcal{Z}_1\}$ holds for all
	$\alpha_\zeta\in\mathcal{S}$, and we see that $P(\alpha_\zeta)$ covers $Q$.
	Let us show~\eqref{e:icover:spanning}. Let $\alpha_\zeta\in\mathcal{S}$,
	$t\in\intco{0;\tau-1}$ so that $\alpha_\zeta(t)= A(\zeta|_{\intcc{0;t}})$. We define
	$\zeta_{t}:=\zeta|_{\intcc{0;t}}$ and fix  $x_0\in A(\zeta_t)$ and $x_1\in
	F(x_0,G(A(\zeta_{t})))$. Since $x_0\in A(\zeta_t)$ there exists
	$(\xi,\nu)\in\mathcal{B}_Q(H)$ so that $\zeta_t=\Gamma_t(\xi|_{\intcc{0;t}})$
	with $\xi(t)=x_0$ and we use~\eqref{e:cc:admissible} to see that
	$G(A(\zeta_t))=\delta(\zeta_t)=\nu(t)$. Therefore, $(\xi,\nu)|_{\intcc{0;t}}$
	can be extended to a trajectory in $(\bar\xi,\bar\nu)\in\mathcal{B}_Q(H)$ with $\bar\xi(t+1)=x_1$.
	Let $s=\gamma(\bar\xi|_{\intcc{0;t+1}})$, then we have $s\in Z(\zeta_t)$ and $\zeta_{t+1}:=\zeta_t s\in\mathcal{Z}_{t+2}$ holds. Moreover, 
	$\zeta_{t+1}=\Gamma_{t+1}(\bar\xi|_{\intcc{0;t+1}})$ and we conclude 
	that $x_1\in A(\zeta_{t+1})$. We repeat this process for $x_i\in
	F(A(\zeta_{t+i}),G(A(\zeta_{t+i}))$, $i\in\intcc{0;k}$ until $t+k=\tau-1$ at which
	point we arrive at $\zeta_{t+k}\in \mathcal{Z}_\tau$ and we see that the
	associated sequence $\alpha_{\zeta_{t+k}}$ is an element of $\mathcal{S}$ that satisfies $x_1\in \alpha_{\zeta_{t+k}}(t+1)$ and $\alpha_{\zeta_{t+k}}|_{\intcc{0;t}}=\alpha_\zeta|_{\intcc{0;t}}$. Since such a
	sequence can be constructed for every $x_1\in F(x_0,G(A(\zeta_{t})))$ and $x_0\in
	A(\zeta_{t})$,
	we see that~\eqref{e:icover:spanning} holds and it follows that $\mathcal{S}$
	is $(\tau,Q)$-spanning in $(\mathcal{A},G)$.\\
	We claim that $\no
	P(\alpha_\zeta|_{\intcc{0;t}})\le \no Z(\zeta|_{\intcc{0;t}})$
	for every $\alpha_\zeta\in\mathcal{S}$
	and $t\in\intco{0;\tau-1}$. Let $A\in
	P(\alpha_\zeta|_{\intcc{0;t}})$, then there exists
	$\alpha_{\zeta'}\in\mathcal{S}$ such that
	$A=\alpha_{\zeta'}(t+1)$ and
	$\zeta'|_{\intcc{0;t}}=\zeta|_{\intcc{0;t}}$. Hence $\zeta'(t+1)\in
	Z(\zeta|_{\intcc{0;t}})$. Moreover, for $A,\bar A\in
	P(\alpha_\zeta|_{\intcc{0;t}})$ with $A\neq \bar A$ there exists
	$\alpha_{\zeta'},\alpha_{\bar\zeta'}\in\mathcal{S}$ such that
	$A=A(\zeta'|_{\intcc{0;t+1}})$ and $\bar A=A(\bar\zeta'|_{\intcc{0;t+1}})$,
	which shows that $\zeta'(t+1)\neq \bar\zeta'(t+1)$ and $\zeta'(t+1), \bar\zeta'(t+1)\in Z(\zeta|_{\intcc{0;t}})$
	and we obtain $\no
	P(\alpha_\zeta|_{\intcc{0;t}})\le \no Z(\zeta|_{\intcc{0;t}})$ for all
	$t\in\intco{0;\tau-1}$ and $\zeta\in\mathcal{Z}_\tau$. For $t=\tau-1$ we have
	$P(\zeta)=\{A(s)\mid s\in \mathcal{Z}_1\}$. For $Z(\zeta)$ we have
	$Z(\zeta)=\gamma(X)$, since $H$ is $\tau$-periodic and we obtain $\no P(\alpha_\zeta)\le \no
	Z(\zeta)$ for every $\zeta\in\mathcal{Z}_\tau$. Hence,  $N(\mathcal{S})\le \max_{\zeta\in \mathcal{Z}_\tau}\prod_{t=0}^{\tau-1}\no
	Z(\zeta|_{\intcc{0;t}})$ follows and we obtain $\frac{1}{\tau}\log_2
	N(\mathcal{S})\le R(H)$.
\end{proof}

In the proof of the following lemma, we use an \emph{enumeration} of a finite
set $A$, which is a function $e:\intcc{1;\no A}\to A$
such that $e(\intcc{1;\no A})=A$.

\begin{lemma}\label{l:datarate:upperbound}
	Consider an invariant cover $(\mathcal{A},G)$ of~$\Sigma=(X,U,F)$ and some
	nonempty set $Q\subseteq X$. Let $\mathcal{S}$ be a $(\tau,Q)$-spanning set
	in $(\mathcal{A},G)$. Then there exists a $Q$-admissible $\tau$-periodic
	coder-controller $H=(S,\gamma,\delta)$ for~$\Sigma$ so that 
	%the associated expansion number equals the transmission rate of $H$, i.e., 
	\begin{equation*}
	\frac{1}{\tau}\log_2 N(\mathcal{S})\ge R(H).
	\end{equation*}
\end{lemma}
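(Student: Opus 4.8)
The plan is to build $H$ so that, over each block of $\tau$ consecutive time steps, it \emph{tracks} one of the sequences of cover elements in the given $(\tau,Q)$-spanning set $\mathcal{S}$, using the channel only to transmit the refinement of that choice which the nondeterminism of $F$ forces. Concretely, I would first fix the bookkeeping: let $m_0:=\no P(\alpha)$ be the number of initial cover elements occurring in $\mathcal{S}$ (independent of $\alpha$), let $k:=\max\{\no P(\alpha|_{\intcc{0;t}})\mid\alpha\in\mathcal{S},\,t\in\intco{0;\tau}\}$, which is finite since $\mathcal{A}$ is finite, and take the coding alphabet $S:=\intcc{1;k}$. Fix an enumeration $e_{\emptyset}$ of $P(\alpha)$ and, for each prefix $\beta$ occurring as $\alpha|_{\intcc{0;t}}$ with $\alpha\in\mathcal{S}$ and $t\in\intco{0;\tau-1}$, an enumeration $e_{\beta}$ of $P(\beta)$. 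I would then define $\gamma$ and $\delta$ $\tau$-periodically, so it suffices to specify them on one block, i.e.\ for $\xi\in X^{\intcc{0;t}}$, $\zeta\in S^{\intcc{0;t}}$ with $t\in\intco{0;\tau}$: the coder reconstructs a sequence $\alpha\in\mathcal{A}^{\intcc{0;t}}$ recursively -- $\alpha(0)$ the $e_{\emptyset}$-first element of $P(\alpha)$ containing $\xi(0)$, and $\alpha(s+1)$ the $e_{\alpha|_{\intcc{0;s}}}$-first element of $P(\alpha|_{\intcc{0;s}})$ containing $\xi(s+1)$ -- and outputs the index of $\alpha(t)$ under the relevant enumeration; if at some step no such element exists (only possible off $Q$) it emits a fixed default symbol for the rest of the block. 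The controller inverts this to recover $\alpha$ from $\zeta$ and outputs $\delta(\zeta):=G(\alpha(t))$.

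Next I would check $Q$-admissibility. For a trajectory with $\xi(0)\in Q$ and inputs given by \eqref{e:cc:admissible}, an induction over blocks and, inside a block, over local time $s$ shows that the coder's $\alpha$ equals the controller's decoded sequence and that $\xi(i\tau+s)\in\alpha(s)\subseteq Q$: the base case uses that $P(\alpha)$ covers $Q$; the step uses that $\xi(i\tau+s+1)\in F(\alpha(s),G(\alpha(s)))\subseteq Q$ because $(\mathcal{A},G)$ is an invariant cover, and, for $s\in\intco{0;\tau-1}$, that $\xi(i\tau+s+1)\in\bigcup_{A'\in P(\alpha|_{\intcc{0;s}})}A'$ by \eqref{e:icover:spanning}, so a valid $\alpha(s+1)$ exists. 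In particular $\xi((i+1)\tau)\in Q$, which restarts the argument.

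Then I would bound the data rate using Lemma~\ref{l:tauperiodic:rate}, which gives $R(H)=\max_{\zeta\in\mathcal{Z}_\tau}\tfrac1\tau\sum_{t=0}^{\tau-1}\log_2\no Z(\zeta|_{\intco{0;t}})$. For $\zeta\in\mathcal{Z}_\tau$ with decoded sequence $\alpha$, the construction makes every admissible successor symbol of $\zeta|_{\intco{0;t}}$ (for $t\in\intoo{0;\tau}$) an index of an element of $P(\alpha|_{\intcc{0;t-1}})$, so $\no Z(\zeta|_{\intco{0;t}})\le\no P(\alpha|_{\intcc{0;t-1}})$, while the first symbol ranges over only the $m_0$ indices coding the initial elements; hence
\begin{equation*}
  \textstyle
  \sum_{t=0}^{\tau-1}\log_2\no Z(\zeta|_{\intco{0;t}})
  \le \log_2 m_0+\sum_{t=1}^{\tau-1}\log_2\no P(\alpha|_{\intcc{0;t-1}})
  = \log_2\prod_{t=0}^{\tau-1}\no P(\alpha|_{\intcc{0;t}})
  \le \log_2 N(\mathcal{S}),
\end{equation*}
using $\no P(\alpha|_{\intcc{0;\tau-1}})=\no P(\alpha)=m_0$; dividing by $\tau$ and maximizing over $\zeta$ yields $\tfrac1\tau\log_2 N(\mathcal{S})\ge R(H)$.

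The hard part will be this last estimate. On one hand, $Z(\zeta|_{\intco{0;t}})$ ranges over \emph{all} trajectories of $\Sigma$ whose inputs match $\delta\circ\zeta$, not only those staying in $Q$, so one has to argue that trajectories with $\xi(0)\notin Q$ contribute no extra realizable successor symbols -- this is exactly what the default symbol is for. On the other hand, the $t=0$ summand must be handled in view of the convention $Z(\emptyset)=S$, i.e.\ one must verify that in the closed loop only the $m_0$ initial indices actually occur as first symbols, so that the effective count there is $m_0$ rather than $\no S$. A secondary point, routine but needed, is that the coder -- which, being $\tau$-periodic, sees only the current block's states -- and the controller reconstruct the \emph{same} sequence $\alpha$; this holds because at each step the choice is a deterministic function of the past states, resp.\ symbols, tied together through the fixed enumerations.
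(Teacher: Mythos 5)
Your construction and argument are essentially the paper's own proof: the same enumeration-based coder that transmits the index of the tracked element of $P(\alpha|_{\intcc{0;t}})$ (with a default symbol off $Q$), the same induction for $Q$-admissibility via \eqref{e:icover:spanning} and the invariant-cover property, and the same rate bound through Lemma~\ref{l:tauperiodic:rate} with the telescoping $\prod_{t=0}^{\tau-1}\no P(\alpha|_{\intcc{0;t}})\le N(\mathcal{S})$. The only slip is that you quote \eqref{e:cc:tauperiodic} with the restrictions $\zeta|_{\intco{0;t}}$ instead of $\zeta|_{\intcc{0;t}}$; with the correct indexing the term $Z(\emptyset)=S$ you flag as the hard part never appears, and the block-boundary term is $\no Z(\zeta)$ for the full block, which by $\tau$-periodicity is the set of block-initial symbols and hence at most $m_0=\no P(\alpha)$ --- precisely the count your final estimate uses.
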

\begin{proof}
	We define 
	$\mathcal{S}_t:=
	\{
	\alpha\in\mathcal{A}^{\intcc{0;t}}
	\mid 
	\exists_{\hat\alpha\in \mathcal{S}}\;
	\hat\alpha|_{\intcc{0;t}}=\alpha
	\}$ for $t\in\intco{0;\tau}$ and observe that $\mathcal{S}_{\tau-1}=\mathcal{S}$ and for
	every $\alpha\in\mathcal{S}$ we have $P(\alpha)=\mathcal{S}_0$.
	For $\alpha\in \mathcal{S}_t$ with $t\in\intco{0;\tau-1}$ let
	$e(\alpha)$ be an enumeration of $P(\alpha)$. We slightly abuse the notation,
	and use $e(\emptyset)$ to denote an enumeration
	of $\mathcal{S}_0$ so that  $e(\alpha|_{\intco{0;0}})=e(\emptyset)$ for all 
	$\alpha\in\mathcal{S}$.
	Let $m\in\N$ be the smallest number so that every
	co-domain of $e(\alpha)$ is a subset of $\intcc{1;m}$. We use this interval to define
	the set of symbols $S:=\intcc{1;m}$.
	We are going to define $\gamma(\xi)$ and $\delta(\zeta)$ for
	all sequences $\xi\in X^{\intcc{0;t}}$, respectively, $\zeta\in S^{\intcc{0;t}}$
	with $t\in\intco{0;\tau}$, which determines $\gamma$ and $\delta$ for all
	elements in their domain, since $\gamma$ and $\delta$ are $\tau$-periodic.
	We begin with $\gamma$, which we define iteratively. For $t=0$ and $x\in X$ we
	set $\gamma(x):=e(\emptyset)(A)$ if there exists $A\in \mathcal{S}_0$ with $x\in A$.
	If there are several $A\in \mathcal{S}_0$ that
	contain $x$ we simply pick one. If there does not exist any
	$A\in\mathcal{S}_0$ with $x\in A$ we set $\gamma(x):=1$. For
	$t\in\intoo{0;\tau}$ and $\xi\in X^{\intcc{0;t}}$ we define
	$\gamma(\xi):=e(\alpha|_{\intco{0;t}})(\alpha(t))$ for 
	$\alpha\in\mathcal{S}_t$ that satisfies
	i) $\xi(t)\in \alpha(t)$ and ii)
	$\gamma(\xi|_{\intcc{0;t'}})=e(\alpha|_{\intco{0;t'}})(\alpha(t'))$ holds for
	all $t'\in\intco{0;t}$. Again, if there are several such
	$\alpha\in\mathcal{S}_t$ we simply pick one. If there does not
	exist any $\alpha$ in $\mathcal{S}_t$ that satisfies i) and ii), we set
	$\gamma(\xi):=1$. We define $\delta$ for $t\in\intco{0;\tau}$ and
	$\zeta\in S^{\intcc{0;t}}$ as follows: if there exists
	$\alpha\in\mathcal{S}_t$ that satisfies
	$e(\alpha|_{\intco{0;t'}})(\alpha(t'))=\zeta(t')$ for all $t'\in\intcc{0;t}$,
	we set $\delta(\zeta):=G(\alpha(t))$, otherwise we set
	$\delta(\zeta):=u$ for some $u\in U$. 
	Let us show that the
	coder-controller is $Q$-admissible. We fix
	$(\xi,\nu)\in\mathcal{B}_Q(H)$ and proceed by induction with the hypothesis
	parameterized by $t\in\intco{0;\tau}$ : there exists $\alpha\in\mathcal{S}_t$ so that 
	$\xi(t)\in\alpha(t)$, $\gamma(\xi|_{\intcc{0;t'}})=e(\alpha|_{\intco{0;t'}})(\alpha(t'))$ 
	and $\nu(t')=G(\alpha(t'))$ hold for all
	$t'\in\intcc{0;t}$. For $t=0$, we know that $\mathcal{S}_0$ covers $Q$ so that
	for $\xi(0)\in Q$ there exists $A\in\mathcal{S}_0$
	with $x\in A$ and it follows from the definition of $\gamma$ and $\delta$ that $\gamma(\xi(0))=e(\emptyset)(\bar A)$ for some $\bar A\in\mathcal{S}_0$ with $\xi(0)\in \bar A$ and $\nu(0)=\delta(\gamma(\bar
	A))=G(\bar A)$. 
	Now suppose that the induction hypothesis holds for
	$t\in\intoo{0;\tau-1}$. Since $\xi(t)\in\alpha(t)$ and $\nu(t)=G(\alpha(t))$ for some
	$\alpha\in\mathcal{S}_t$, we use~\eqref{e:icover:spanning} to see that there exists $\bar\alpha\in\mathcal{S}$
	so that $\bar\alpha|_{\intcc{0;t}}=\alpha$ and $\xi(t+1)\in\bar\alpha(t+1)$,
	so that $\bar \alpha$ satisfies i) and ii) in the definition of $\gamma$ and
	we have $\gamma(\xi|_{\intcc{0;t+1}})=e(\alpha)(\hat\alpha(t+1))$ for some
	$\hat\alpha\in\mathcal{S}_{t+1}$ with $\xi(t+1)\in \hat \alpha(t+1)$ and
	$\hat\alpha|_{\intcc{0;t}}=\alpha$.
	Since $\hat\alpha$ is uniquely determined
	by the symbol sequence $\zeta\in S^{\intcc{0;t+1}}$ given by $\zeta(t')=e(\hat\alpha|_{\intco{0;t'}})(\hat\alpha(t'))$ for all
	$t'\in\intcc{0;t+1}$, we have $\nu(t+1)=\delta(\zeta)=G(\hat\alpha(t+1))$, which completes
	the induction. Note that the induction hypothesis implies that
	$F(\xi(t),\nu(t))\subseteq Q$  for all  $t\in\intco{0;\tau}$, since $\xi(t)\in\alpha(t)$ and
	$\nu(t)=G(\alpha(t))$. We obtain $\xi(\intco{0;\infty})\subseteq Q$ from the
	$\tau$-periodicity of $H$ and the $Q$-admissibility follows. \\
	We derive a bound for
	$R(H)$. Since $H$ is $\tau$-periodic, we have for any $\zeta\in
	\mathcal{Z}_\tau$ the equality $Z(\zeta)=e(\emptyset)(\mathcal{S}_0)$ and we see that $\no
	Z(\zeta)=\no e(\emptyset)(\mathcal{S}_0)=\no P(\alpha)$ for any
	$\alpha\in\mathcal{S}$. We fix $\zeta\in\mathcal{Z}_\tau$ and pick $\alpha\in\mathcal{S}$
	so that $\alpha(t)=e^{-1}(\alpha|_{\intco{0;t}})(\zeta(t))$ holds for all
	$t\in\intco{0;\tau}$. By definition, the set $Z(\zeta|_{\intcc{0;t}})$ is the
	co-domain of an enumeration of $P(\alpha|_{\intcc{0;t}})$, which shows $\no
	Z(\zeta|_{\intcc{0;t}})=\no P(\alpha|_{\intcc{0;t}})$. Therefore, we have
	$\max_{\zeta\in \mathcal{Z}_\tau}\prod_{t=0}^{\tau-1}\no Z(\zeta|_{\intcc{0;t}})\le
	\max_{\alpha\in\mathcal{S}} \prod_{t=0}^{\tau-1}\no P(\alpha|_{\intcc{0;t}})$
	and the assertion follows by~\eqref{e:cc:tauperiodic}.
\end{proof}

We continue with the proof of Theorem~\ref{t:datarate}.

\begin{proof}[Proof of Theorem~\ref{t:datarate}]
	Let us first prove the inequality  $h_{\rm inv}{\color{black}(Q)} \le \inf_{H\in \mathcal{H}} R(H)$. If the right-hand-side
	of~\eqref{e:datarate:equiv} equals infinity the inequality trivially holds and subsequently we
	assume the right-hand-side of~\eqref{e:datarate:equiv} is finite.  We fix
	$\varepsilon>0$ and pick a coder-controller $\bar H=(S,\bar\gamma,\bar \delta)$ so that
	$R(\bar H)\le \inf_{H\in\mathcal{H}} R(H)+\varepsilon$. According to
	Lemma~\ref{l:tauperiodic:exists} there exists a $\tau$-periodic coder-controller
	$H=(S,\gamma, \delta)$ so that $R(H)\le R(\bar H)+\varepsilon$.
	It is straightforward to see that for every $(\xi,\nu)\in\mathcal{B}_Q(H)$ and
	$\xi_i:=\xi|_{\intco{i\tau;(i+1)\tau}}$, $i\in\Z_{\ge0}$, there exists
	$(\bar\xi,\bar\nu)\in\mathcal{B}_Q(\bar H)$, so that
	$\xi_i=\bar\xi|_{\intco{0;\tau}}$, which shows that $H$ is $Q$-admissible.
	From Lemma~\ref{l:datarate:lowerbound} it follows that there exists an
	$(\mathcal{A},G)$ of
	$\Sigma$ and $Q$ and a $(\tau,Q)$-spanning set in $(\mathcal{A},G)$ so that
	$ \frac{1}{\tau}\log_2 N(\mathcal{S})\le R(H)$. 
	We use Lemma~\ref{l:subadditivity} to see that $r_{\rm
		inv}(n\tau,Q)\le n r_{\rm inv}(\tau,Q)$ so that
	$h(\mathcal{A},G)
	=
	\lim_{n\to \infty} \frac{1}{n\tau} \log_2 r_{\rm inv}(n\tau,Q)
	\le
	\frac{1}{\tau}\log_2 r_{\rm inv}(\tau,Q)
	\le
	\frac{1}{\tau}\log_2N(\mathcal{S})$.   By the choice of $H$ we
	obtain $2\varepsilon+\inf_{H\in\mathcal{H}} R(H)\ge R(H)\ge h_{\rm inv}{\color{black}(Q)}$.
	Since this holds for arbitrary $\varepsilon>0$ we arrive at the desired
	inequality.
	
	We continue with the inequality $h_{\rm inv}{\color{black}(Q)} \ge \inf_{H\in \mathcal{H}} R(H)$. 
	If $h_{\rm inv}{\color{black}(Q)}=\infty$ the inequality trivially holds and subsequently we
	consider $h_{\rm inv}{\color{black}(Q)}<\infty$.  We fix
	$\varepsilon>0$ and pick an invariant cover $(\mathcal{A},G)$ of $\Sigma$ and
	$Q$ so that $h(\mathcal{A},G)<h_{\rm inv}{\color{black}(Q)}+\varepsilon$. We pick $\tau\in \N$
	so that $\frac{1}{\tau}\log_2 r_{\rm
		inv}(\tau,Q)<h(\mathcal{A},G)+\varepsilon$.
	Let $\mathcal{S}$ be
	$(\tau,Q)$-spanning set that satisfies $r_{\rm inv}(\tau,Q)=N(\mathcal{S})$.
	It follows from Lemma~\ref{l:datarate:upperbound} that there exists a
	$Q$-admissible coder-controller $H$ so that $\frac{1}{\tau} \log_2
	N(\mathcal{S})\ge R(H)$ holds, and hence, we obtain $2\varepsilon+h_{\rm inv}{\color{black}(Q)}\ge
	R(H)$. This inequality holds for any $\varepsilon>0$, which implies that
	$h_{\rm inv}{\color{black}(Q)}\ge \inf_{H\in \mathcal{H}} R(H)$.
\end{proof}

\section{Uncertain Linear  Control Systems}
\label{s:lin}

We derive a lower bound of the invariance feedback entropy of uncertain linear
control systems \eqref{e:sys:lin} and compact sets $Q$. In this setting, we also derive
a lower bound of the data rate of any \emph{static} or \emph{memoryless}
coder-controller.
{\color{black} Similar to \cite[Section II]{NairFagniniZampieriEvans07} we employ the Brunn-Minkowsky
	inequality to obtain a lower bound on the growth of the size of the uncertainty
	set of the state at the controller side in one time step. For the
	general case, we use this inequality to derive a lower bound on the expansion
	number, which in turn leads to the entropy. For static coder-controllers the
	derivation of the lower bound is substantially simpler, see the proof of
	\cite[Thm~1]{NairFagniniZampieriEvans07} and the proof of Theorem \ref{t:lb:static}.}

\subsection{Universal lower bound}

\begin{theorem}\label{t:lb}
	Consider the matrices $A\in\R^{n\times n}$, $B\in\R^{n\times m}$ and two nonempty
	sets  $W,Q\subseteq \R^n$ with $W\subseteq Q$ and 
	%{$\mu(W) < \mu(Q)$ and} 
	suppose that $W$ is measurable and $Q$ is compact.
	Let
	%~\eqref{e:sys}
	$\Sigma$
	be given by $X=\R^n$, $U\subseteq \R^m$ with $U\neq
	\emptyset$
	and $F$ according to
	%\eqref{e:lb:lin}.
	\begin{equation}\label{e:lb:lin}
	\forall_{x\in X}\forall_{u\in U}\qquad
	F(x,u)=Ax+Bu+W.
	\end{equation}
	%Let $Q$ be controlled invariant with respect to $(X,U,F)$.
	{\color{black}Let $\R^n = \mathbb{E}_1 \oplus \mathbb{E}_2$, where 
		$\E_1$ is an $A$ invariant subspace of $\R^n$ with $\E_1 \neq \{0\}$, and $\oplus$ stands for the direct sum. 
		Let $\pi_1 : \R^n \to \E_1$ be the projection onto 
		$\E_1$ along $\E_2$, and\footnote{Since map $\pi_1$ is linear, we use notation $\pi_1 A$ instead of $\pi_1(A)$, $\forall A\subseteq\R^n$, for the sake of simpler presentation.} $\mu_1(\pi_1 W) < \mu_1(\pi_1Q)$, also let
		$n_1 = \texttt{dim}(\E_1)$ and $\mu_1$ denote the $n_1$-dimensional Lebesgue measure.
		%		 on $\E_1$. 
	}
	Then, the invariance feedback entropy of 
	%\eqref{e:sys} 
	$\Sigma$
	and $Q$ satisfies
	\begin{equation}\label{e:lb}
	\log_2\left( \vert\det
	A{{\color{black}|_{\E_1}}}\vert\frac{\mu_{\color{black}1}({\color{black}\pi_1} Q)}{(\mu_{\color{black}1}({\color{black}\pi_1} Q)^{\nicefrac{1}{n_{\color{black}1}}}-\mu_{\color{black}1}({\color{black}\pi_1} W)^{\nicefrac{1}{n_{\color{black}1}}})^{n_{\color{black}1}}}\right) \\
	\le
	h_{\rm inv}{\color{black}(Q)}.
	\end{equation}
\end{theorem}
\begin{proof}%[Proof of Theorem~\ref{t:lb}]
	
	Let us first point out that every compact set has finite Lebesgue measure.
	% and we
	%have $0\le\mu(W)^{\nicefrac{1}{n}}$  ${\color{black}<} \mu(Q)^{\nicefrac{1}{n}}$.
	%Therefore,
	%$1\le \mu(Q)^{\nicefrac{1}{n}}/(\mu(Q)^{\nicefrac{1}{n}}-\mu(W)^{\nicefrac{1}{n}})$ 
	%${\color{black}<}\infty$
	%and the left-hand-side
	%of~\eqref{e:lb} is well-defined. 
	
	If $|\det A{\color{black}|_{\E_1}}|=0$ the left-hand-side is $-\infty$
	and~\eqref{e:lb} holds.
	In the remainder we consider the case $|\det A{\color{black}|_{\E_1}}|>0$.
	If $h_{\rm inv}{\color{black}(Q)}=\infty$ the inequality~\eqref{e:lb} holds independent of the
	left-hand-side and subsequently we assume that $h_{\rm inv}{\color{black}(Q)}<\infty$. We
	pick $\varepsilon\in\R_{>0}$ and an invariant cover $(\mathcal{C},H)$ of
	%~\eqref{e:sys} 
	$\Sigma$
	and $Q$, so that
	$h(\mathcal{C},H)\le h_{\rm inv}{\color{black}(Q)}+\varepsilon$. Given Theorem~\ref{t:closed}, we
	can assume that the cover elements of $\mathcal{C}$ are closed, which yields by
	the compactness of $Q$ that the cover elements are compact and therefore
	Lebesgue measurable.
	
	We fix $\tau\in\N$ and pick a $(\tau,Q)$-spanning set $\mathcal{S}$ so that
	$r_{\rm inv}(\tau,Q)=N(\mathcal{S})$, which exists, since for fixed $\tau$, the number of $(\tau,Q)$-spanning set is finite.
	
	We are going to show that there exists
	$\alpha\in \mathcal{S}$  that satisfies
	\begin{equation}\label{e:lb:proof:1}
	\left(|\det
	A{\color{black}|_{\E_1}}|\frac{\mu_{\color{black}1}({\color{black}\pi_1} Q)}{(\mu_{\color{black}1}({\color{black}\pi_1} Q)^{\nicefrac{1}{n_1}}-\mu_{\color{black}1}({\color{black}\pi_1} W)^{\nicefrac{1}{n_1}})^{n_1}}\right)^\tau 
	\le 
	\prod_{t=0}^{\tau-1} \no P(\alpha|_{\intcc{0;t}}).
	\end{equation}
	We construct $\alpha\in\mathcal{S}$ iteratively over $t\in\intco{0;\tau}$. For
	$t=0$ we introduce $S_0:=\{\alpha(0)\mid \alpha\in\mathcal{S}\}$ and define 
	\begin{equation*}
	m_0:=\max\{\mu_{\color{black}1}({\color{black}\pi_1}\alpha(0))^{\nicefrac{1}{n_1}}\mid \alpha\in \mathcal{S}\}.
	\end{equation*}
	We pick
	$\Omega_0\in S_0$ so that $m_0=\mu_{\color{black}1}({\color{black}\pi_1} \Omega_0)^{\nicefrac{1}{n_1}}$. For
	$t\in\intco{1;\tau-1}$ we set $\alpha_{t'}:=\Omega_0\cdots \Omega_{t'}$ for
	$t'\in\intcc{0;t}$ and assume
	that 
	$\Omega_{t'}\in P(\alpha|_{\intco{0;t'}})$
	and
	$\mu_{\color{black}1}({\color{black}\pi_1} \Omega_{t'})^{\nicefrac{1}{n_1}}=m_{t'}$
	holds for all $t'\in\intcc{1;t}$ where
	$$m_{t'}:=\max\{\mu_{\color{black}1}({\color{black}\pi_1} \Omega)^{\nicefrac{1}{n_1}}\mid \Omega\in P(\alpha|_{\intco{0;t'}})\}.$$ 
	Then we set 
	$m_{t+1}:=\max\{\mu_{\color{black}1}({\color{black}\pi_1} \Omega)^{\nicefrac{1}{n_1}}\mid \Omega\in
	P(\alpha|_{\intco{0;t+1}})\}$ and pick $\Omega_{t+1}\in
	P(\alpha|_{\intco{0;t+1}})$ so that $m_{t+1}=\mu_{\color{black}1}({\color{black}\pi_1} \Omega_{t+1})^{\nicefrac{1}{n_1}}$. For
	$t=\tau-1$ we obtain a sequence $\alpha:=\Omega_0\cdots\Omega_{\tau-1}$ that is
	an element of $\mathcal{S}$. Hence, it follows from~\eqref{e:icover:spanning} that $\alpha$ satisfies for all
	$t\in\intco{0;\tau}$ the inclusion 
	\begin{equation}\label{e:lb:proof:11}
	{\color{black}\pi_1} \big( \textstyle
	A\alpha(t)+BH(\alpha(t))+W \big) \subseteq {\color{black}\pi_1} \left( \bigcup_{\Omega\in
		P(\alpha|_{\intcc{0;t}})}\Omega \right).
	\end{equation}
	For $t\in\intco{0;\tau-1}$, we 
	use the  Brunn-Minkowsky inequality for compact, measurable
	sets~\cite{henstock1953measure} 
	\begin{equation*}
	\begin{split}
	\mu_{\color{black}1}({\color{black}\pi_1} A\alpha(t))^{\nicefrac{1}{n_1}}&+\mu_{\color{black}1}({\color{black}\pi_1} W)^{\nicefrac{1}{n_1}} 
	\le
	\mu_{\color{black}1}({\color{black}\pi_1} A\alpha(t)+{\color{black}\pi_1} BH(\alpha(t))+{\color{black}\pi_1} W)^{\nicefrac{1}{n_1}}
	\end{split}
	\end{equation*}
	and the equality \cite{tao2011introduction}
	\begin{equation*}
	\mu( A\alpha(t))^{\nicefrac{1}{n}}=|\det A|^{\nicefrac{1}{n}}\mu(\alpha(t))^{\nicefrac{1}{n}}
	\end{equation*}
	together with $\mu_{\color{black}1}({\color{black}\pi_1}\alpha(t))^{\nicefrac{1}{n_1}}=m_t$ and~\eqref{e:lb:proof:11}, to
	derive 
	\begin{equation}\label{e:lb:proof:7}
	|\det A{\color{black}|_{\E_1}}|^{\nicefrac{1}{n_1}}m_t+\mu_{\color{black}1}({\color{black}\pi_1} W)^{\nicefrac{1}{n_1}}
	\le
	m_{t+1}(\no P(\alpha|_{\intco{0;t+1}}))^{\nicefrac{1}{n_1}}
	\end{equation}
	for all $t\in\intco{0;\tau-1}$. Note that we also used the fact that $\E_1$ is $A$ invariant to show inequality \eqref{e:lb:proof:7}. Also, for every $t\in\intco{0;\tau}$ we have
	\begin{equation}\label{e:lb:proof:6}
	|\det A{\color{black}|_{\E_1}}|^{\nicefrac{1}{n_1}}m_t+\mu_{\color{black}1}({\color{black}\pi_1} W)^{\nicefrac{1}{n_1}}\le \mu_{\color{black}1}({\color{black}\pi_1} Q)^{\nicefrac{1}{n_1}}
	\end{equation}
	since $A\alpha(t)+BH(\alpha(t))+W\subseteq Q$ which follows from the
	fact that $\alpha(t)\in \mathcal{C}$ and $(\mathcal{C},H)$ is an invariant
	cover. To ease the notation, let us introduce $N_0:=(\no
	P(\alpha))^{\nicefrac{1}{n_1}}$ and 
	$N_t:=(\no P(\alpha|_{\intco{0;t}}))^{\nicefrac{1}{n_1}}$ for $t\in\intco{1;\tau}$.
	We use induction over $\tau'\in\intco{0;\tau}$ to show 
	\begin{equation}\label{e:lb:proof:5}
	\left(|\det A{\color{black}|_{\E_1}}|^{\nicefrac{1}{n_1}}\frac{\mu_{\color{black}1}({\color{black}\pi_1} Q)^{\nicefrac{1}{n_1}}}{\mu_{\color{black}1}({\color{black}\pi_1} Q)^{\nicefrac{1}{n_1}}-\mu_{\color{black}1}({\color{black}\pi_1} W)^{\nicefrac{1}{n_1}}}\right)^{\tau'+1}
	\le
	\prod_{t=0}^{\tau'} N_t.
	\end{equation}
	Let us show~\eqref{e:lb:proof:5} for {\color{black}$\tau'=0$}.
	Since $P(\alpha)$ is a cover of $Q$ and $\no P(\alpha)^{\nicefrac{1}{n}}=N_0$ we obtain
	\begin{equation}\label{e:lb:proof:4}
	\mu_{\color{black}1}({\color{black}\pi_1} Q)^{\nicefrac{1}{n_1}}\le m_0N_0.
	\end{equation}
	From \eqref{e:lb:proof:6} we obtain $m_0\le
	(\mu_{\color{black}1}({\color{black}\pi_1} Q)^{\nicefrac{1}{n_1}}-\mu_{\color{black}1}({\color{black}\pi_1} W)^{\nicefrac{1}{n_1}})/|\det A{\color{black}|_{\E_1}}|^{\nicefrac{1}{n_1}}$ and \eqref{e:lb:proof:5} follows for $\tau'=1$. 
	
	If $\tau=1$ we have
	shown~\eqref{e:lb:proof:5} and subsequently we consider $\tau>1$.
	We fix {\color{black}$\tau''\in\intco{1;\tau}$} and
	assume that \eqref{e:lb:proof:5} holds for all $\tau'\in \intco{0;\tau''}$. We
	use \eqref{e:lb:proof:7} {\color{black}recursively} to derive
	\begin{equation}\label{e:lb:proof:8}
	m_0
	\le
	\frac{m_{\tau''}}{|\det A{\color{black}|_{\E_1}}|^{\tau''/n_1}}\bigg(\prod_{t=1}^{\tau''}N_t\bigg)
	-
	\sum_{t=1}^{\tau''}\frac{\mu_{\color{black}1}({\color{black}\pi_1} W)^{\nicefrac{1}{n_1}}}{|\det A{\color{black}|_{\E_1}}|^{t/n_1}}\prod_{t'=1}^{t-1}N_{t'}
	\end{equation}
	with the convention that $\prod_{t=a}^b x_t = 1$ for $b<a$.
	Using \eqref{e:lb:proof:4} and rearranging the terms in \eqref{e:lb:proof:8} we obtain
	\begin{equation}
	\mu_{\color{black}1}({\color{black}\pi_1} Q)^{\nicefrac{1}{n_1}}
	+
	\sum_{t=1}^{\tau''}\frac{\mu_{\color{black}1}({\color{black}\pi_1} W)^{\nicefrac{1}{n_1}}}{|\det A{\color{black}|_{\E_1}}|^{t/n_1}}\prod_{t'=0}^{t-1}N_{t'} 
	\le
	\frac{m_{\tau''}}{|\det A{\color{black}|_{\E_1}}|^{\tau''/n_1}}\prod_{t=0}^{\tau''}N_t.
	\end{equation}
	We invoke the induction hypothesis and use the inequality \\
	$\prod_{t'=0}^{t-1}N_{t'} \ge ((|\det A{\color{black}|_{\E_1}}|\mu_{\color{black}1}({\color{black}\pi_1} Q))^{\nicefrac{1}{n_1}}/(\mu_{\color{black}1}({\color{black}\pi_1} Q)^{\nicefrac{1}{n_1}}-\mu_{\color{black}1}({\color{black}\pi_1} W)^{\nicefrac{1}{n_1}}))^t$ to derive
	\begin{equation}\label{e:lb:proof:10}
	\mu_{\color{black}1}({\color{black}\pi_1} Q)^{\nicefrac{1}{n_1}}
	+
	\sum_{t=1}^{\tau''}\frac{\mu_{\color{black}1}({\color{black}\pi_1} W)^{\nicefrac{1}{n_1}}\mu_{\color{black}1}({\color{black}\pi_1} Q)^{\nicefrac{t}{n_1}}}{(\mu_{\color{black}1}({\color{black}\pi_1} Q)^{\nicefrac{1}{n_1}}-\mu_{\color{black}1}({\color{black}\pi_1} W)^{\nicefrac{1}{n_1}})^t} 
	\le
	\frac{m_{\tau''}}{|\det A{\color{black}|_{\E_1}}|^{\tau''/n_1}}\prod_{t=0}^{\tau''}N_t.
	\end{equation}
	From Lemma~\ref{l:1} (given in the Appendix) it follows that
	the left-hand-side of \eqref{e:lb:proof:10} evaluates to
	%\begin{multline}\label{e:lb:proof:9}
	\begin{equation}\label{e:lb:proof:9}
	\mu_{\color{black}1}({\color{black}\pi_1} Q)^{\nicefrac{1}{n_1}}
	+
	\sum_{t=1}^{\tau''}\frac{\mu_{\color{black}1}({\color{black}\pi_1} W)^{\nicefrac{1}{n_1}}\mu_{\color{black}1}({\color{black}\pi_1} Q)^{\nicefrac{t}{n_1}}}{(\mu_{\color{black}1}({\color{black}\pi_1} Q)^{\nicefrac{1}{n_1}}-\mu_{\color{black}1}({\color{black}\pi_1} W)^{\nicefrac{1}{n_1}})^t} 
	=
	\frac{\mu_{\color{black}1}({\color{black}\pi_1} Q)^{\nicefrac{(\tau''+1)}{n_1}}}{(\mu_{\color{black}1}({\color{black}\pi_1} Q)^{\nicefrac{1}{n_1}}-\mu_{\color{black}1}({\color{black}\pi_1} W)^{\nicefrac{1}{n_1}})^{\tau''}}.
	\end{equation}
	%\end{multline}
	We combine $m_{\tau''}\le (\mu_{\color{black}1}({\color{black}\pi_1} Q)^{\nicefrac{1}{n_1}}-\mu_{\color{black}1}({\color{black}\pi_1} W)^{\nicefrac{1}{n_1}})/|\det
	A{\color{black}|_{\E_1}}|^{\nicefrac{1}{n_1}}$ (that follows from \eqref{e:lb:proof:6})
	with \eqref{e:lb:proof:10} and \eqref{e:lb:proof:9} to get 
	\begin{equation}
	\frac{\mu_{\color{black}1}({\color{black}\pi_1} Q)^{\nicefrac{(\tau''+1)}{n_1}}}{(\mu_{\color{black}1}({\color{black}\pi_1} Q)^{\nicefrac{1}{n_1}}-\mu_{\color{black}1}({\color{black}\pi_1} W)^{\nicefrac{1}{n_1}})^{\tau''}} 
	\le
	\frac{\mu_{\color{black}1}({\color{black}\pi_1} Q)^{\nicefrac{1}{n_1}}-\mu_{\color{black}1}({\color{black}\pi_1} W)^{\nicefrac{1}{n_1}}}{|\det A{\color{black}|_{\E_1}}|^{(\tau''+1)/n_1}}\prod_{t=0}^{\tau''}N_t
	\end{equation}
	which shows that~\eqref{e:lb:proof:5} holds for {\color{black}$\tau'=\tau''$}. Hence,
	\eqref{e:lb:proof:5} holds for all  $\tau'\in\intco{0;\tau}$. In particular, for
	$\tau'=\tau-1$ and we conclude that 
	\eqref{e:lb:proof:1} holds.
	
	Inequality \eqref{e:lb:proof:1} together with the definition of
	$N(\mathcal{S})$ yields
	\begin{equation*}\label{e:lb:proof:2} 
	\begin{split}
	\left(|\det
	A{\color{black}|_{\E_1}}|\frac{\mu_{\color{black}1}({\color{black}\pi_1} Q)}{(\mu_{\color{black}1}({\color{black}\pi_1} Q)^{\nicefrac{1}{n_1}}-\mu_{\color{black}1}({\color{black}\pi_1} W)^{\nicefrac{1}{n_1}})^{n_1}}\right)^\tau\le
	N(\mathcal{S}) 
	=r_{\rm inv}(\tau,Q)
	\end{split}
	\end{equation*}
	where the equality follows by our choice of $\mathcal{S}$.
	From~\eqref{e:icover:entropy} we get 
	\begin{equation}\label{e:lb:proof:3} 
	\begin{split}
	\log_2
	\left(
	|\det A{\color{black}|_{\E_1}}|\frac{\mu_{\color{black}1}({\color{black}\pi_1} Q)}{(\mu_{\color{black}1}({\color{black}\pi_1} Q)^{\nicefrac{1}{n_1}}-\mu_{\color{black}1}({\color{black}\pi_1} W)^{\nicefrac{1}{n_1}})^{n_1}}
	\right) 
	\le
	h(\mathcal{C},H)\le h_{\rm inv}{\color{black}(Q)}+\varepsilon
	\end{split}
	\end{equation}
	which implies \eqref{e:lb} since \eqref{e:lb:proof:3} holds for every $\varepsilon>0$.
\end{proof}
{\color{black}
	\begin{remark}
		Let $\texttt{spec}(A)$ denote the spectrum of $A$, $\E^{\lambda}$ denote the eigenspace of $A$ associated with 
		$\lambda \in \texttt{spec}(A)$ and $B \subseteq \texttt{spec}(A)$.
		In Theorem~\ref{t:lb} if $\E_1 = \bigoplus_{\lambda \in B} \E^{\lambda}$, then a good choice of
		$\E_1$ will be the one that gives the largest lower bound in~\eqref{e:lb}. 
\end{remark} }

\begin{remark}
	Note that the lower bound, i.e., the left-hand-side of inequality~\eqref{e:lb}, is invariant under coordinate transformation. Let
	$z=Tx$ for some invertible matrix $T\in\R^{n\times n}$ so that the
	transition function $\bar F$ of the system in the new coordinates is 
	\begin{equation}\label{e:sys:lin1}
	\bar F(z,u)=TAT^{-1}z+TBu+TW
	\end{equation}
	and $\bar Q=TQ$. {\color{black}Let $\bar{\E}_i =T \E_i$, $i\in\{1,2\}$, $\bar{\pi}_1:\R^n\to \bar{\E}_1$ be the projection on $\bar{\E}_1$ along $\bar{\E}_2$. } Then we obtain %\MZ{the left-hand-side of inequality~\eqref{e:lb} is written based on projection on $\E_1$, whereas the one below is based on $\E_1=\R^n$!}
	\begin{align*}
	& |\det (TAT^{-1}){\color{black}|_{\bar{\E}_1}}|\frac{\mu_{\color{black}1}({\color{black}\bar{\pi}_1}TQ)}{(\mu_{\color{black}1}({\color{black}\bar{\pi}_1}TQ)^{\nicefrac{1}{n_{\color{black}1}}}-\mu_{\color{black}1}({\color{black}\bar{\pi}_1}TW)^{\nicefrac{1}{n_{\color{black}1}}})^{n_{\color{black}1}}}=\\
	& {\color{black}|\det A|_{{\E}_1}|\frac{\mu_1(T{\pi}_1Q)}{(\mu_1(T{\pi}_1Q)^{\nicefrac{1}{n_1}}-\mu_1(T{\pi}_1W)^{\nicefrac{1}{n_1}})^{n_1}}=}\\
	%  & |\det A{\color{black}|_{{\E}_1}}|\frac{|\det T{\color{black}|_{{\E}_1}}|\mu_{\color{black}1}({\color{black}\pi_1}Q)}{|\det T{\color{black}|_{{\E}_1}}|(\mu_{\color{black}1}({\color{black}\pi_1}Q)^{\nicefrac{1}{n_{\color{black}1}}}-\mu_{\color{black}1}({\color{black}\pi_1}W)^{\nicefrac{1}{n_{\color{black}1}}})^{n_{\color{black}1}}}=\\
	& |\det A{\color{black}|_{{\E}_1}}|\frac{\mu_{\color{black}1}({\color{black}\pi_1}Q)}{(\mu_{\color{black}1}({\color{black}\pi_1}Q)^{\nicefrac{1}{n_{\color{black}1}}}-\mu_{\color{black}1}({\color{black}\pi_1}W)^{\nicefrac{1}{n_{\color{black}1}}})^{n_{\color{black}1}}}.
	\end{align*}
\end{remark}

When 
%all eigenvalues of  $A$ are larger than one, and
$W$ is a singleton set,  {\color{black}by taking $\E_1$ as the unstable subspace, we get the largest lower bound in \eqref{e:lb} which} recovers the well-known
value of the invariance entropy \cite[Th.~3.1]{Kawan13} for deterministic linear control
systems, i.e., the invariance entropy equals $\log_2|\det A{\color{black}|_{\E_1}}|$. This matches
also other results known from stabilization with rate limited feedback~\cite{tatikonda2004control}.

\subsection{Static coder-controllers}

We restrict our attention to static coder-controllers and derive a lower bound
of the data rate of such coder-controllers.

Let $(\mathcal{C},H)$ be an invariant cover of \eqref{e:sys} and a nonempty set
$Q\subseteq X$.
We define the \emph{data rate} of $(\mathcal{C},H)$ by
\begin{equation}\label{s:static:dr}
R(\mathcal{C},H):=\log_2 \no\mathcal{C}.
\end{equation}
The definition is motivated by the fact that any invariant cover
$(\mathcal{C},H)$ immediately provides a \emph{static} or \emph{memoryless}
coder-controller scheme: given $x\in Q$ at the coder side, 
it is sufficient that the coder transmits one of the cover elements
$C\in\mathcal{C}$ that
contains the current state $x\in C$, to ensure that the controller is able to
confine the successor states of $x$ to $Q$, i.e.,
\begin{equation}
Ax+BH(C)+W\subseteq Q.
\end{equation}
%Given that the cover element that contains the initial state
%of~\eqref{e:sys:lin} is known to the controller at time zero,
The number of different cover elements that need to be transmitted via the
digital, noiseless channel at any time $t>0$ is bounded by $\no \mathcal{C}$.
Neither the coder nor the controller requires any past information for a correct
functioning.  Hence, we speak of $(\mathcal{C},H)$ as static or memoryless coder-controller for $(X,U,F)$.

The next result provides a lower bound on the data rate of any static
coder-controller.

\begin{theorem}\label{t:lb:static}
	Consider the matrices $A\in\R^{n\times n}$, $B\in\R^{n\times m}$ and two nonempty
	sets  $W,Q\subseteq \R^n$ with $W\subseteq Q$ and suppose that $W$ is measurable and $Q$ is compact.
	Let~\eqref{e:sys} be given by $X=\R^n$, $U\subseteq \R^m$ with $U\neq\emptyset$, 
	$F$ according to \eqref{e:lb:lin}, {\color{black}$\E_1$, $\E_2$, $\mu_1$, $n_1$ and $\pi_1$ as in Theorem~\ref{t:lb} and $\mu_1(\pi_1W) < \mu_1(\pi_1Q)$}.
	%Let $Q$ be controlled invariant with respect to $(X,U,F)$.
	Then, we have
	%the static invariance feedback entropy of \eqref{e:sys} and $Q$ satisfies
	\begin{equation}\label{e:lb:static}
	\log_2\left\lceil |\det
	A{\color{black}|_{\E_1}}|\frac{\mu_{\color{black}1}({\color{black}\pi_1}Q)}{(\mu_{\color{black}1}({\color{black}\pi_1}Q)^{\nicefrac{1}{n_{\color{black}1}}}-\mu_{\color{black}1}({\color{black}\pi_1}W)^{\nicefrac{1}{n_{\color{black}1}}})^{n_{\color{black}1}}}\right\rceil 
	\le   \inf_{(\mathcal{C},H)} R(\mathcal{C},H)
	\end{equation}
	where we take the infimum over all invariant covers $(\mathcal{C},H)$ of
	\eqref{e:sys} and $Q$. %\MZ{can we modify this result also based on the projection idea in the previous theorem?}
\end{theorem}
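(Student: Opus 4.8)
The plan is to reduce everything to a single volume inequality. Since $R(\mathcal{C},H)=\log_2\no\mathcal{C}$ and $\no\mathcal{C}\in\N$, it suffices to prove that every invariant cover $(\mathcal{C},H)$ of \eqref{e:sys} and $Q$ satisfies
\[
\no\mathcal{C}\ \ge\ |\det A|\,\frac{\mu(Q)}{(\mu(Q)^{\nicefrac{1}{n}}-\mu(W)^{\nicefrac{1}{n}})^{n}};
\]
taking ceilings, then $\log_2$, then the infimum over all invariant covers yields \eqref{e:lb:static}. The case $|\det A|=0$ is trivial exactly as in the proof of Theorem~\ref{t:lb}, and if no invariant cover exists the right-hand side of \eqref{e:lb:static} is $\infty$ by the convention $\inf\emptyset=\infty$ (cf.\ Lemma~\ref{l:finite}). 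So I fix an arbitrary invariant cover $(\mathcal{C},H)$ and assume $|\det A|>0$.

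First I would reduce to closed cover elements, so that the pieces are measurable. The set-valued map $F(\cdot,u)\colon x\mapsto Ax+Bu+W$ is lower semicontinuous, because for open $V$ the set $F(\cdot,u)^{-1}(V)=\{x\mid Ax\in V-Bu-W\}$ is the preimage under the continuous map $x\mapsto Ax$ of the open set $V-Bu-W$ (a union of translates of $V$), hence open. Since $Q$ is compact, hence closed, Theorem~\ref{t:closed} applies and we may replace $\mathcal{C}$ by $\{\cl C\mid C\in\mathcal{C}\}$ (and $H$ accordingly) without increasing $\no\mathcal{C}$; each $\cl C$ is a closed subset of the compact set $Q$, hence compact and Lebesgue measurable. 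From now on the elements of $\mathcal{C}$ are compact and measurable.

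The heart of the argument is a one-step estimate, combining the two ingredients already used in the proof of Theorem~\ref{t:lb}. Put $m:=\max_{C\in\mathcal{C}}\mu(C)^{\nicefrac{1}{n}}$, which is attained at some $C^\ast\in\mathcal{C}$ since $\mathcal{C}$ is finite. Because $\mathcal{C}$ covers $Q$, finite subadditivity of Lebesgue measure gives $\mu(Q)\le\sum_{C\in\mathcal{C}}\mu(C)\le\no\mathcal{C}\cdot m^{n}$, i.e.\ $\mu(Q)^{\nicefrac{1}{n}}\le(\no\mathcal{C})^{\nicefrac{1}{n}}\,m$. On the other hand, invariance of $(\mathcal{C},H)$ gives $AC^\ast+BH(C^\ast)+W=F(C^\ast,H(C^\ast))\subseteq Q$; combining the Brunn--Minkowski inequality for compact sets~\cite{henstock1953measure}, translation invariance of $\mu$, and the identity $\mu(AC^\ast)^{\nicefrac{1}{n}}=|\det A|^{\nicefrac{1}{n}}\mu(C^\ast)^{\nicefrac{1}{n}}$~\cite{tao2011introduction}, exactly as in Theorem~\ref{t:lb}, yields $|\det A|^{\nicefrac{1}{n}}m+\mu(W)^{\nicefrac{1}{n}}\le\mu(Q)^{\nicefrac{1}{n}}$, hence $m\le(\mu(Q)^{\nicefrac{1}{n}}-\mu(W)^{\nicefrac{1}{n}})/|\det A|^{\nicefrac{1}{n}}$. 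Chaining the two inequalities and rearranging (valid when $\mu(Q)^{\nicefrac{1}{n}}>\mu(W)^{\nicefrac{1}{n}}$; the degenerate case is handled as in the opening remarks of the proof of Theorem~\ref{t:lb}, where these same inequalities force $m=0$ and hence preclude a measurable invariant cover) gives $(\no\mathcal{C})^{\nicefrac{1}{n}}\ge|\det A|^{\nicefrac{1}{n}}\mu(Q)^{\nicefrac{1}{n}}/(\mu(Q)^{\nicefrac{1}{n}}-\mu(W)^{\nicefrac{1}{n}})$, which is the displayed bound after raising to the $n$-th power.

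I do not expect a genuine obstacle here. Unlike Theorem~\ref{t:lb}, which requires the iterated construction of a worst-case sequence in the spanning set and an induction over the time horizon, the static setting carries no memory, so the bound is read off from a single cover element of maximal volume, and all of the analytic machinery (Brunn--Minkowski, behaviour of Lebesgue measure under linear maps, reduction to closed cover elements) is already isolated in the proofs of Theorems~\ref{t:lb} and~\ref{t:closed}. The only points needing a little care are the bookkeeping of the degenerate cases ($|\det A|=0$, $\mu(Q)=\mu(W)$, the empty set of invariant covers) and confirming lower semicontinuity of the linear transition function so that Theorem~\ref{t:closed} may be invoked; both are routine.
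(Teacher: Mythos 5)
Your proposal is correct and follows essentially the same route as the paper's proof: reduce to closed (hence compact, measurable) cover elements via Theorem~\ref{t:closed}, bound the measure of each cover element through the Brunn--Minkowski inequality and the identity $\mu(A\Omega)^{\nicefrac{1}{n}}=|\det A|^{\nicefrac{1}{n}}\mu(\Omega)^{\nicefrac{1}{n}}$, and combine this with the covering bound $\mu(Q)\le \no\mathcal{C}\,\max_{C\in\mathcal{C}}\mu(C)$ before rearranging and taking ceilings and logarithms. Your write-up is in fact slightly tidier on two minor points the paper glosses over — the explicit check that $F(\cdot,u)$ is lower semicontinuous so Theorem~\ref{t:closed} applies, and the consistent use of $\no\mathcal{C}$ where the paper's display conflates it with $R(\mathcal{C},H)$ — but the argument is the same.
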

\begin{proof}
	%Every compact set has finite Lebesgue measure and from 
	% $W\subseteq Q$ it follows
	% $0\le\mu(W)^{\nicefrac{1}{n}}$ {\color{black}$<$}$\mu(Q)^{\nicefrac{1}{n}}$. Hence,
	%$1\le
	%\mu(Q)^{\nicefrac{1}{n}}/(\mu(Q)^{\nicefrac{1}{n}}-\mu(W)^{\nicefrac{1}{n}})${\color{black}$<$}$\infty$
	%and the left-hand-side
	%of~\eqref{e:lb:static} is well-defined. 
	If $|\det A{\color{black}|_{\E_1}}|=0$ the left-hand-side
	of~\eqref{e:lb:static} evaluates to $-\infty$ so that \eqref{e:lb:static} holds.
	Let us consider $|\det A{\color{black}|_{\E_1}}|>0$. 
	If the right-hand-side of~\eqref{e:lb:static} evaluates to $\infty$ nothing needs to be
	shown and we consider $\inf_{(\mathcal{C},H)} R(\mathcal{C},H)<\infty$. Since
	$\inf_{(\mathcal{C},H)} R(\mathcal{C},H)$ is finite, there exists an invariant cover
	$(\mathcal{D},G)$ of $(X,U,F)$ and $Q$. Let $(\mathcal{C},H)$ be the invariant cover with
	closed cover elements as constructed from $(\mathcal{D},G)$ in 
	{\color{black}Theorem~\ref{t:closed}}. 
	% ~\ref{t:closed} 
	{\color{black}Then} $(\mathcal{C},H)$ is an invariant cover 
	of $(X,U,F)$ and $Q$ and we have $R(\mathcal{C},H)$ {\color{black}$\leq$} $R(\mathcal D,G)$.
	
	As $(\mathcal{C},H)$ is an invariant cover of $(X,U,F)$ and $Q$, we have for
	every $\Omega\in \mathcal{C}$ the inclusion
	\begin{equation}
	{\color{black}\pi_1}(A\Omega+BH(\Omega)+W)\subseteq {\color{black}\pi_1}Q.
	\end{equation}
	We use the Brunn-Minkowsky inequality for compact, measurable
	sets {\color{black}(see proof of Theorem~\ref{t:lb})}
	%~\cite{henstock1953measure} 
	%$\mu(A\Omega)^{\nicefrac{1}{n}}+\mu(W)^{\nicefrac{1}{n}}\le 
	%\mu(A \Omega+BH(\Omega)+W)^{\nicefrac{1}{n}}$
	together with the identity \cite{tao2011introduction}
	$\mu(A\Omega)^{\nicefrac{1}{n}}=|\det A|^{\nicefrac{1}{n}}\mu(\Omega)^{\nicefrac{1}{n}}$
	to derive $|\det A{\color{black}|_{\E_1}}|^{\nicefrac{1}{n_{\color{black}1}}} \mu_{\color{black}1}({\color{black}\pi_1\Omega})^{\nicefrac{1}{n_{\color{black}1}}} +\mu_{\color{black}1}({\color{black}\pi_1}W)^{\nicefrac{1}{n_{\color{black}1}}}
	\le\mu_{\color{black}1}({\color{black}\pi_1}Q)^{\nicefrac{1}{n_{\color{black}1}}}$ which yields the bound 
	\begin{equation}\label{e:lb:static:proof:1}
	\mu_{\color{black}1}({\color{black}\pi_1}\Omega)^{\nicefrac{1}{n_{\color{black}1}}}\le
	\frac{\mu_{\color{black}1}({\color{black}\pi_1}Q)^{\nicefrac{1}{n_{\color{black}1}}}-\mu_{\color{black}1}({\color{black}\pi_1}W)^{\nicefrac{1}{n_{\color{black}1}}}}{|\det
		A{\color{black}|_{\E_1}}|^{\nicefrac{1}{n_{\color{black}1}}}}.
	\end{equation}
	As $\no \mathcal{C}$ is an upper bound
	on the number of cover elements needed to cover $F(\Omega,H(\Omega))$, we have 
	\begin{equation}\label{e:lb:static:proof:2}
	\mu_{\color{black}1}({\color{black}\pi_1}Q)^{\nicefrac{1}{n_{\color{black}1}}}\le 
	{\color{black}(\no \mathcal{C})^{\nicefrac{1}{n_{\color{black}1}}}}  
	\max \{\mu_{\color{black}1}({\color{black}\pi_1}\Omega)^{\nicefrac{1}{n_{\color{black}1}}}\mid \Omega\in \mathcal{C}\}.
	\end{equation}
	% \text{\color{black}$(\no \mathcal{C})^{\nicefrac{1}{n}}$}
	We use \eqref{e:lb:static:proof:1} (which holds for every
	$\Omega\in\mathcal{C}$) in \eqref{e:lb:static:proof:2} and rearrange the result
	to obtain
	\begin{equation*}
	|\det A{\color{black}|_{\E_1}}|^{\nicefrac{1}{n_{\color{black}1}}}\frac{\mu_{\color{black}1}({\color{black}\pi_1}Q)^{\nicefrac{1}{n_{\color{black}1}}}}
	{\mu_{\color{black}1}({\color{black}\pi_1}Q)^{\nicefrac{1}{n_{\color{black}1}}}-\mu_{\color{black}1}({\color{black}\pi_1}W)^{\nicefrac{1}{n_{\color{black}1}}}} 
	\le \color{black}(\no \mathcal{C})^{\nicefrac{1}{n_1}}.
	\end{equation*}
	Since this inequality holds for every invariant cover $(\mathcal{C},H)$, we
	obtain~\eqref{e:lb:static}.
	%\begin{multline*}
	%\log_2\left\lceil |\det A|_{\E_1}|\frac{\mu_1(\pi_1Q)}
	%{(\mu_1(\pi_1Q)^{\nicefrac{1}{n_1}}-\mu_1(\pi_1W)^{\nicefrac{1}{n_1}})^n_1}
	%\right\rceil \\
	%\le  \inf_{(\mathcal{C},H)}
	%R(\mathcal{C},H).
	%\qedhere
	%\end{multline*}
\end{proof}

It is easy to bound the difference between the universal lower bound in \eqref{e:lb} and the lower
bound of data rates for static coder-controllers in \eqref{e:lb:static} so that
we arrive at the following corollary, which allows us to quantify the
performance loss due to the restriction to static coder-controllers.

\begin{corollary}\label{c:loss}
	In the context and under the assumptions of Theorem~\ref{t:lb:static},
	let $a\in \R_{\ge0}$ be given by
	\begin{equation*}
	a:= |\det A{\color{black}|_{\E_1}}|\frac{\mu_{\color{black}1}({\color{black}\pi_1}Q)}{(\mu_{\color{black}1}({\color{black}\pi_1}Q)^{\nicefrac{1}{n_{\color{black}1}}}-\mu_{\color{black}1}({\color{black}\pi_1}W)^{\nicefrac{1}{n_{\color{black}1}}})^{n_{\color{black}1}}}.
	\end{equation*}
	%{\color{black}where $\E_1$ is such an $A$ invariant subspace that gives the largest value for the left hand side in~\eqref{e:lb:static}. }
	Suppose
	that $a<\infty$ and there exists an invariant cover $(\mathcal{C},H)$ of \eqref{e:sys} and $Q$ with
	$R(\mathcal{C},H)= \log_2 \ceil{ a }$. Then, the data rate $R$ of $(\mathcal{C},H)$ satisfies
	\begin{equation}\label{e:loss}
	R \le h_{\rm inv}{\color{black}(Q)}+1.
	\end{equation}
\end{corollary}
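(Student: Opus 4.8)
The plan is to read the corollary off the two lower bounds already at hand. Theorem~\ref{t:lb} gives $\log_2 a\le h_{\rm inv}$, since $a$ is precisely the argument of $\log_2$ in \eqref{e:lb}, whereas by hypothesis $R=R(\mathcal C,H)=\log_2\ceil{a}$ (this is the lower bound \eqref{e:lb:static} being attained). So the claim reduces to the elementary observation that replacing $a$ by $\ceil{a}$ costs at most one bit, which holds because $h_{\rm inv}$ is nonnegative and therefore $2^{h_{\rm inv}}\ge 1$.

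First I would record that $h_{\rm inv}$ is finite and nonnegative: finiteness follows from Lemma~\ref{l:finite}, because an invariant cover of \eqref{e:sys} and $Q$ exists by assumption, and nonnegativity from $r_{\rm inv}(\tau,Q)\ge 1$ for all $\tau$, which makes $h(\mathcal A,G)\ge 0$ for every invariant cover and hence $h_{\rm inv}\ge 0$; in particular $2^{h_{\rm inv}}\ge 1$. Next, the hypotheses of Theorem~\ref{t:lb} ($W$ measurable, $Q$ compact, $W\subseteq Q$) are exactly those inherited from Theorem~\ref{t:lb:static}, so Theorem~\ref{t:lb} applies and yields $\log_2 a\le h_{\rm inv}$, i.e.\ $a\le 2^{h_{\rm inv}}$. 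Now, using monotonicity of the ceiling and the bound $\lceil x\rceil\le x+1$ followed by $2^{h_{\rm inv}}\ge 1$, I get
\[
\ceil{a}\ \le\ 2^{h_{\rm inv}}+1\ \le\ 2^{h_{\rm inv}}+2^{h_{\rm inv}}\ =\ 2^{h_{\rm inv}+1},
\]
so that $R=\log_2\ceil{a}\le h_{\rm inv}+1$, which is \eqref{e:loss}.

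I do not expect a genuine obstacle, as the whole substance of the estimate is already contained in Theorem~\ref{t:lb}; what remains is purely arithmetic. The points needing a little care are (i) justifying $2^{h_{\rm inv}}\ge 1$, for which one needs $h_{\rm inv}\ge 0$ as above, and (ii) the degenerate case $a=0$ (equivalently $|\det A|=0$): then $\log_2\ceil{a}=-\infty$, and since a cover of the nonempty set $Q$ satisfies $\no\mathcal C\ge 1$, the hypothesis $R(\mathcal C,H)=\log_2\ceil{a}$ cannot hold, so the statement is vacuous there. Finally, I would remark that the same chain in fact gives the sharper bound $R\le h_{\rm inv}+\log_2\!\bigl(1+2^{-h_{\rm inv}}\bigr)$, because $\log_2(2^{h_{\rm inv}}+1)=h_{\rm inv}+\log_2(1+2^{-h_{\rm inv}})$ and $2^{-h_{\rm inv}}\le 1$; inequality \eqref{e:loss} is the weaker but cleaner consequence stated in the corollary.
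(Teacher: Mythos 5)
Your proposal is correct and follows essentially the same route as the paper: it combines $a\le 2^{h_{\rm inv}}$ from Theorem~\ref{t:lb} with $\lceil a\rceil\le a+1$ and $h_{\rm inv}\ge 0$, exactly the ingredients the paper uses (the paper writes $\lceil a\rceil=a+b$, $b\in\intco{0,1}$, and passes through the intermediate bound $h_{\rm inv}+\log_2(1+2^{-h_{\rm inv}})$, which you also note). The extra remarks on finiteness, nonnegativity of $h_{\rm inv}$, and the degenerate case $a=0$ are fine but not needed beyond what the paper records.
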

\begin{proof}
	Let $b\in \intco{0,1}$ be so that $a+b=\lceil a\rceil$.
	%Then we derive $R= \log_2(a+b)\le
	%\log_2(a(1+1/a))=\log_2(a)+\log_2(1+1/a)=h_{\rm inv}+\log_2(1+1/a)$. Since
	%$a\ge1$, we have $1+1/a\in \intcc{1,2}$ and $R\le h_{\rm inv}+1$ follows.
	We use $a\le
	2^{h_{\rm inv}{\color{black}(Q)}}$ and $0\le h_{\rm inv}{\color{black}(Q)}$ to derive 
	\begin{equation*}
	R=\log_2(a+b)\le\log_2(2^{h_{\rm inv}{\color{black}(Q)}}+b) 
	\le h_{\rm inv}{\color{black}(Q)}+\log_2(1+2^{-h_{\rm inv}{\color{black}(Q)}})	
	\le h_{\rm inv}{\color{black}(Q)}+1.\qedhere
	\end{equation*}
\end{proof}

\subsection{Tightness of the lower bounds}

We show for a particular class of scalar linear difference inclusions of the form
\begin{equation}\label{e:sys:lin:scalar}
\xi(t+1)\in a \xi(t)+\nu(t)+\intcc{w_1,w_2}
\end{equation}
with $a\in \R_{\neq 0}$, $w_1,w_2\in\R$ and $w_1\le w_2$ that the lower bounds
established in the previous subsections are tight.

Subsequently, we assume that $Q$ is given as
an interval containing $\intcc{w_1,w_2}$
\begin{equation*}
Q:=\intcc{q_1,q_2}, \qquad q_1,q_2\in\R, q_1<w_1,w_2<q_2.
\end{equation*}
We are going to construct a static coder-controller $(\mathcal{C},H)$ and show that its
data rate equals the lower bound in Theorem~\ref{t:lb:static}.%
\begin{subequations}
	\label{e:icover:scalar}
	To this end, we introduce
	\begin{equation}
	\begin{aligned}
	\Delta q &:=q_2-q_1,& &  &\Delta w&:=w_2-w_1,  \\
	q_c &:=(q_2 + q_1)/2&&\text{and}  &w_c&:=(w_2 + w_1)/2
	\end{aligned}
	\end{equation}
	and consider
	\begin{equation}\label{e:scalar:md}
	m:=\left\lceil |a|\frac{\Delta q}{ \Delta q-\Delta w}\right\rceil \text{ and } d:=\frac{\Delta q}{m}.
	\end{equation}
	Given $q_c$ and $d$, we introduce the intervals $\Lambda_i\subseteq\R$, $i\in\Z$
	\begin{equation}\label{e:icover:scalar:elements}
	\Lambda_i:= 
	\begin{cases}
	q_c+\intcc{id,(i+1)d}&\text{if $m$ is even} \\
	q_c+\intcc{(i-\tfrac{1}{2})d,(i+\tfrac{1}{2})d}&\text{if $m$ is odd}
	\end{cases}
	\end{equation}
	which we use to define 
	\begin{equation} 
	\mathcal{C}:=\{ \Lambda_i\cap Q\mid \Lambda_i\cap (\mathrm{int}Q)\neq\emptyset \}.
	\end{equation}
	The control function  follows for every
	$C_i\in\mathcal{C}$ by
	\begin{equation}\label{e:icover:scalar:input}
	H(C_i):=q_c-aq_c-w_c-
	\begin{cases}
	ad(i+\tfrac{1}{2})&\text{if $m$ is even} \\
	adi&\text{if $m$ is odd}.
	\end{cases}
	\end{equation}
\end{subequations}
For this construction of $(\mathcal{C},H)$, we have the following result.
\begin{theorem}\label{t:scalar:ub}
	Consider the scalars $a\in \R_{\neq 0}$, $w_1,q_1,w_2,q_2\in \R$ with $q_1<w_1\le w_2<q_2$.
	Let~\eqref{e:sys} be given by $X=U=\R$ and $F$ by
	$F(x,u)=ax+u+\intcc{w_1,w_2}$. Then, $(\mathcal{C},H)$ defined in
	\eqref{e:icover:scalar} is an invariant cover of \eqref{e:sys} and
	$\intcc{q_1,q_2}$ and we have
	\begin{equation}\label{e:scalar:datarate}
	\log_2\left\lceil |a|\frac{\Delta q}{ \Delta q-\Delta
		w}\right\rceil=R(\mathcal{C},H).
	\end{equation}
\end{theorem}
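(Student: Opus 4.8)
The plan is to verify directly that $(\mathcal{C},H)$ from \eqref{e:icover:scalar} is an invariant cover of \eqref{e:sys} and $\intcc{q_1,q_2}$, and then simply to count $\no\mathcal{C}$; the rate formula \eqref{e:scalar:datarate} is then immediate from $R(\mathcal{C},H)=\log_2\no\mathcal{C}$.

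First I would record the elementary facts behind the choices of $m$ and $d$ in \eqref{e:scalar:md}. Since $q_1<w_1\le w_2<q_2$ we have $0\le\Delta w<\Delta q$, and since $a\neq0$ the number $m=\ceil{|a|\Delta q/(\Delta q-\Delta w)}$ is a well-defined integer with $m\ge1$; hence $d=\Delta q/m$ satisfies $0<d\le\Delta q$ and $\Delta q=md$. From $m\ge|a|\Delta q/(\Delta q-\Delta w)$ one obtains the estimate $|a|d\le\Delta q-\Delta w$, equivalently $\tfrac12|a|d+\tfrac12\Delta w\le\tfrac12\Delta q$, which is the inequality that makes invariance work.

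Next I would analyse the geometry of the family $(\Lambda_i)_{i\in\Z}$ in \eqref{e:icover:scalar:elements}. Each $\Lambda_i$ is a closed interval of length $d$ and consecutive members meet only at an endpoint, so the family tiles $\R$. Writing $Q$ as the interval of half-width $\tfrac12\Delta q=\tfrac m2 d$ centred at its midpoint $q_c$, the point of the parity distinction in \eqref{e:icover:scalar:elements} becomes clear: for $m$ even the breakpoints of the tiling form $q_c+d\Z$ and $\tfrac m2\in\Z$, whereas for $m$ odd they form $q_c+d(\Z+\tfrac12)$ and $\tfrac m2\in\Z+\tfrac12$; in both cases the endpoints $q_1,q_2$ of $Q$ are breakpoints. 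Consequently exactly $m$ of the $\Lambda_i$ are contained in $Q$, those $m$ intervals tile $Q$, and every remaining $\Lambda_i$ is disjoint from $\mathrm{int}\,Q$. Hence $\mathcal{C}$ is precisely this collection of $m$ intervals, it is a finite cover of $Q$, and $\no\mathcal{C}=m$.

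Then I would check invariance for a fixed $C_i\in\mathcal{C}$. The set $F(C_i,H(C_i))=aC_i+H(C_i)+\intcc{w_1,w_2}$ is a closed interval: $aC_i$ has half-width $\tfrac12|a|d$ (its endpoints are interchanged when $a<0$, but the half-width is unaffected), and adding $\intcc{w_1,w_2}$ contributes a further half-width $\tfrac12\Delta w$, so the total half-width is $\tfrac12(|a|d+\Delta w)\le\tfrac12\Delta q$ by the estimate above. The additive constant $H(C_i)$ in \eqref{e:icover:scalar:input} is exactly the value that places the midpoint of $F(C_i,H(C_i))$ at $q_c$ — a one-line linear computation using the midpoint of $C_i$, which is $q_c+(i+\tfrac12)d$ for $m$ even and $q_c+id$ for $m$ odd. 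An interval with midpoint $q_c$ and half-width at most $\tfrac12\Delta q$ lies in $Q$, so $F(C_i,H(C_i))\subseteq Q$ for every $C_i\in\mathcal{C}$, and $(\mathcal{C},H)$ is an invariant cover. Combining this with $\no\mathcal{C}=m$ and \eqref{e:scalar:md} gives $R(\mathcal{C},H)=\log_2\no\mathcal{C}=\log_2 m=\log_2\ceil{|a|\Delta q/(\Delta q-\Delta w)}$, which is \eqref{e:scalar:datarate}. The midpoint/half-width bookkeeping is routine; the step that needs genuine care is the parity argument of the third paragraph, namely that the breakpoints of the $\Lambda_i$ hit $\partial Q$ so that $\no\mathcal{C}$ equals exactly $m$ and not $m+1$, together with correctly tracking how the sign of $a$ acts on the interval $aC_i$.
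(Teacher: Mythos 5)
Your proof is correct, but it closes the argument by a genuinely different route than the paper. The invariance part is essentially the paper's computation: the additive constant in \eqref{e:icover:scalar:input} centres $F(C_i,H(C_i))$ at $q_c$, and the half-width $\tfrac12(|a|d+\Delta w)\le\tfrac12\Delta q$ follows from $d\le(\Delta q-\Delta w)/|a|$; the paper carries this out only for odd $m$ and leaves even $m$ to the reader, whereas you treat both parities uniformly. The difference lies in how the equality \eqref{e:scalar:datarate} is obtained. The paper proves only the upper bound $\no\mathcal{C}\le m$ (by checking that $\Lambda_i\cap\mathrm{int}\,Q=\emptyset$ once $|i|$ is too large), concludes $R(\mathcal{C},H)\le\log_2 m$, and then invokes the lower bound of Theorem~\ref{t:lb:static} (with $n=1$, $\mu(Q)=\Delta q$, $\mu(W)=\Delta w$, $|\det A|=|a|$) to force equality. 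You instead establish $\no\mathcal{C}=m$ exactly, via the tiling/parity observation that $q_1$ and $q_2$ are breakpoints of the grid $(\Lambda_i)_{i\in\Z}$, so precisely $m$ pairwise distinct tiles meet $\mathrm{int}\,Q$; equality then follows directly from $R(\mathcal{C},H)=\log_2\no\mathcal{C}$ and the definition of $m$ in \eqref{e:scalar:md}. Your version is self-contained (it does not rely on Theorem~\ref{t:lb:static} and hence not on the Brunn--Minkowski argument behind it), at the cost of the extra bookkeeping needed to pin down the exact count, while the paper gets the matching lower bound for free from the general theorem. One cosmetic point: you read $q_c$ and $w_c$ as the midpoints of $Q$ and $\intcc{w_1,w_2}$; the formulas in \eqref{e:icover:scalar} literally define half-widths, an apparent typo, and the paper's own proof uses the midpoint reading, so your interpretation is the intended one.
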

\begin{proof}
	We show the theorem for odd $m$. The case for even $m$, follows along the
	same arguments. It is rather straightforward to show that $\mathcal{C}$ is a
	cover of $Q$ and subsequently we show that $\no\mathcal{C}=m$. Note that
	$i>m/2-1/2$ implies that the left limit of $\Lambda_i$
	satisfies  $q_c+(i-\tfrac{1}{2})d \ge q_c+m/2d=q_2$, which shows that
	$i>m/2-1/2$ implies $\Lambda_i\cap(\mathrm{int}Q)=\emptyset$. Similarly,
	$i<-m/2+1/2$ implies $\Lambda_i\cap(\mathrm{int}Q)=\emptyset$, and we see that
	$\Lambda_i\cap(\mathrm{int}Q)\neq\emptyset$ implies $-m/2+1/2\le i\le
	m/2-1/2$ so that $\no\mathcal{C}\le m$ holds.

	We continue to show that 
	$F(C_i,H(C_i))\subseteq \intcc{q_1,q_2}$ holds for every $C_i\in\mathcal{C}$. Given
	\eqref{e:icover:scalar:input} we obtain for $F(C_i,H(C_i))$ the interval
	\begin{equation*}
	a((q_c+d\intcc{i-\tfrac{1}{2},i+\tfrac{1}{2}})\cap Q)+q_c-aq_c-w_c-adi+\intcc{w_1,w_2}\\
	\end{equation*}
	which is a subset of
	$I:=q_c+|a|\tfrac{d}{2}\intcc{-1,1}+\tfrac{\Delta w}{2}\intcc{-1,1}$.
	Let us show that  $I\subseteq Q$. Since $I$ is centered at $q_c$, it is
	sufficient to show $\nicefrac{|a|d}{2}+\nicefrac{\Delta w}{2}\le\nicefrac{\Delta
		q}{2}$. Note that $m\ge |a|\Delta q/(\Delta q-\Delta w)$ so that $d\le (\Delta
	q-\Delta w)/|a|$ follows and we obtain the desired inequality $\nicefrac{|a|d}{2}+\nicefrac{\Delta
		w}{2}\le \nicefrac{\Delta q}{2}$ which shows $F(C_i,H(C_i))\subseteq
	\intcc{q_1,q_2}$. Hence $(\mathcal{C},H)$ is an invariant cover with
	$R(\mathcal{C},H)\le \log_2 m$, which together with the inequality in
	Theorem~\ref{t:lb:static} shows the assertion.
\end{proof}

\begin{example}[continues=ex:linear]
	\normalfont
	Let us recall the linear system in Example~\ref{ex:linear} with
	$a=\nicefrac{1}{2}$, $W=\intcc{-3,3}$ and $Q=\intcc{-4,4}$. For this case,
	$m=2$ and $d=4$. The cover elements of $\mathcal{C}$ are given  according
	to~\eqref{e:icover:scalar:elements} by
	\begin{equation*}
	C_{-1}=\intcc{-4,0} \text{ and } C_{0}=\intcc{0,4}.
	\end{equation*}
	The inputs follow according to~\eqref{e:icover:scalar:input} by
	\begin{equation*}
	H(C_{-1})=1
	\text{ and }
	H(C_{0})=-1.
	\end{equation*}
	The data rate of $(\mathcal{C},H)$ is given by $\log_2
	2=1\;$bits per time unit.
\end{example}

We can use Corollary~\ref{c:loss} to conclude that the
performance loss due to the restriction to static coder-controllers in
Example~\ref{ex:linear} is no larger than $1$ bit/time unit. However, for this
example, and in general for scalar systems of the form~\eqref{e:sys:lin:scalar}
for which $|a|\Delta q/(\Delta q-\Delta w)$ is in $\N$, we see that the data rate of the proposed
static  coder-controller matches the best possible data rate $h_{\rm inv}{\color{black}(Q)}$ since in this case $R(\mathcal{C},H)$ equals the lower bound in Theorem~\ref{t:lb}.

The construction of static coder-controllers whose data rate  achieves the lower
bound in Theorem \ref{t:lb:static} in a more general setting is currently under investigation.

%\section{Conclusion}
%
%In this work we introduced a novel notion of invariance feedback entropy for
%uncertain control systems to characterize the critical data rate to achieve
%invariance. We used invariance feedback entropy to derive a universal
%  lower bound of the critical data rate of uncertain linear control systems. The
%  universal lower bound is intimately related to a lower bound of the data rate
%  of static, memoryless coder-controllers.
%  We showed that for certain linear control  systems the lower bounds are tight.
% 

{\color{black}
	\section*{Acknowledgement}
	The authors would like to thank Christoph Kawan for helpful discussions and
	suggestions to improve the manuscript, especially, for the suggestion of the subspace projection utilized in Theorem~\ref{t:lb}.
}

\printbibliography

\appendix
\section{}

{\color{black}
	\subsection{Mean-Payoff Games:} 
	A \emph{mean-payoff game} (MPG)~\cite{EhrenfeuchtMycielski79} is played by
	two players, player~$1$ and player~$2$, on a finite, directed, edge-weighted
	graph $G=(V,E,w)$, where
	$V:=V_1\cup V_2$, $V_1\cap V_2=\emptyset$ with $V_i$, $i\in \{1,2\}$ being two
	nonempty sets, $E\subseteq V\times V$, $w:E\to \Z$
	and for every $v\in V$ there exists $v'\in V$ so that $(v,v')\in E$. The
	vertices $V$ are also referred to as \emph{positions} of the game. Starting
	from an initial position $v_0\in V$, player~$1$ and player~$2$ take turns in picking the
	next position depending on the current position of the game: given $v_0\in V_i$ for $i\in\{1,2\}$ player~$i$ picks the
	successor vertex $v_1\in V$ so that $(v_0,v_1)\in E$ and the play
	continues with $v_1$. The infinite sequence of edges $e = (e_k)_{k\in \intco{0;\infty}}$ with 
	$e_k = (v_k,v_{k+1}) \in E$ is called a play.
	Player $1$ wants to minimize the payoff 
	\begin{equation*}
	\nu_{\mathrm{min}}(e_0e_1e_2\ldots):=\limsup_{k\to
		\infty}\frac{1}{k}\sum_{j=0}^{k-1} w(e_j)
	\end{equation*}
	while player $2$ wants to maximize the payoff 
	\begin{equation*}
	\nu_{\mathrm{max}}(e_0e_1e_2\ldots):=\liminf_{k\to
		\infty}\frac{1}{k}\sum_{j=0}^{k-1} w(e_j).
	\end{equation*}
	A \emph{positional strategy} for player $i$ is a function $\sigma_i:V_i\to V$ 
	so that $(v,\sigma_i(v))\in E$ holds for all $v\in V_i$.  
	By $\mathcal{P}_i(v,\sigma_i) \subseteq E^\intco{0;\infty}$ 
	we denote the set of all plays that start from the position $v$ 
	and wherein the player $i$ follows the positional strategy $\sigma_i$.
	
	As it turns out, there exist \emph{optimal
		positional strategies} $\sigma^*_i$ for each player~$i$ and a
	function $\nu:V\to \R$ so that player~$1$ is able to secure a payoff of $\nu(v)$
	against any other strategy of player~$2$ and vice versa, i.e., for all sequences
	$\check e\in\mathcal{P}_1(v,\sigma_1^*)$ and  $\hat
	e\in\mathcal{P}_2(v,\sigma_2^*)$ we have
	\begin{equation}\label{e:mpg:ineq}
	\nu_{\mathrm{min}}(\check e)
	\le 
	\nu(v)
	\le
	\nu_{\mathrm{max}}(\hat e).
	\end{equation}
	We call $\nu$ the \emph{value function} of the
	MPG $(V,E,w)$, see
	e.g.~\cite{EhrenfeuchtMycielski79} for details. Note
	that $\sigma_1^*$ is optimal in the sense that any deviation of player~$1$ from
	$\sigma_1^*$ can only lead to a larger or equal payoff than $\nu(v)$ considering
	the worst case with respect to possible strategies of player~$2$. Similarly,
	a deviation of player~$2$ from $\sigma_2^*$ may only lead to suboptimal payoff.
	{We exploit the following fact, which follows}
	from the proof of~\cite[Lemma.~1]{EhrenfeuchtMycielski79}: there exist constants $c_1$ and $c_2$, so that for every $\tau \in \N$, 
	$\check{e} \in \mathcal{P}_1(v,\sigma_1^*)$ and 
	$\hat{e} \in \mathcal{P}_2(v,\sigma_2^*)$ we have 
	\begin{equation}\label{eq:P1finiteSum}
	\frac{1}{\tau} \sum_{j=0}^{\tau-1}w(\check{e}_j) \leq \nu(v) + \frac{c_1}{\tau} 
	\end{equation}
	and 
	\begin{equation}\label{eq:P2finiteSum}
	\frac{1}{\tau} \sum_{j=0}^{\tau-1}w(\hat{e}_j) \geq \nu(v) + \frac{c_2}{\tau}.
	\end{equation}
	
	{We use the following lemma in Theorem \ref{t:frr} and Theorem
		\ref{t:closed}.}

	\begin{lemma}\label{lem:MPG}
		Consider two systems $\Sigma_i = (X_i,U_i,F_i)$, $i\in \{1,2\}$, a map $r: U_2
		\to U_1$ and let $Q_i$ be nonempty subsets of~$X_i$. 
		Suppose that
		$M:\wp(X_2)\to \wp(X_1) $
		maps subsets of $X_2$ to subsets
		of $X_1$ and 
		satisfies for every $u\in U_2$ and $A_2,A_2'\subseteq Q_2$ the
		following conditions
		\begin{enumerate}
			\item
			\label{lem:MPG:cover}
			$M(Q_2) = Q_1$,
			\item
			\label{lem:MPG:monotone}
			$A_2\subseteq A_2'\implies M(A_2)\subseteq M(A_2')$,
			\item 
			\label{lem:MPG:subset}
			%$M(A_2 \cup A_2')\subseteq M(A_2)\cup M(A_2') $ and
			$M(A_2 \cup A_2')= M(A_2)\cup M(A_2') $ and
			\item 
			\label{lem:MPG:semconj}
			$F_1(M(A_2),r(u))\subseteq M(F_2(A_2,u))$.
		\end{enumerate}
		%	{\color{red} We assume that}
		%	\begin{enumerate}
		%		\item \label{lem:MPG:A1covers} 
		%		$\mathcal{A}_1 = \left\{M(A)\mid A\in \mathcal{A}_2\right\}$ is a cover of
		%    $Q_1$ and
		%	\end{enumerate}
		Let $(\mathcal{A}_2,G_2)$ be an invariant cover of $\Sigma_2$ and~$Q_2$ and
		let $$\mathcal{A}_1: = \left\{M(A)\mid A\in \mathcal{A}_2\right\}.$$
		%
		%	For all $\tau\in \N$, $t\in \intco{0;\tau-1}$, if $\mathcal{S}_2=\mathcal{A}_2^\intco{0;\tau} $ is a $(\tau,Q)$-spanning set in $(\mathcal{A}_2,G_2)$ for $\Sigma_2$,
		%	and 
		%	for $\alpha\in \mathcal{S}_2$, if 
		%	$V'=\left\{M(A)\mid A\in P(\alpha|_\intcc{0;t})\right\}$, 
		%	$u =G_2(\alpha(t))$, $A_1=M(\alpha(t))$, then we assume that
		%	\begin{enumerate}[resume]
		%		\item \label{lem:MPG:seqSucCovPos} $F_1(A_1,r(u))\subseteq \cup_{A \in V'} A$.
		%		%\item \label{lem:MPG:semconj}Further assume that $F_1(M(A),r(u))\subseteq M(F_2(A,u))$ for every $u\in U_2$, $A\in \mathcal{A}_2$.
		%	\end{enumerate}
		Then there exists a map $G_1^*:\mathcal{A}_1 \to U_1$ such that $(\mathcal{A}_1,G_1^*)$ is an invariant cover of $\Sigma_1$ and $Q_1$, and
		\begin{equation}
		h(\mathcal{A}_1,G_1^*) \leq h(\mathcal{A}_2,G_2).
		\end{equation}
	\end{lemma}
	
	\begin{proof}
		{ Let us first point out that $\mathcal{A}_1$ is a
			cover of $Q_1$. We use \ref{lem:MPG:cover}) and
			\ref{lem:MPG:subset}) to derive
			\begin{equation*}
			Q_1=M(Q_2)=M(\cup_{A_2\in\mathcal{A}_2}A_2)
			=
			\cup_{A_2\in\mathcal{A}_2}M(A_2)
			\end{equation*}
			and we see that $\mathcal{A}_1$ is
			a cover of $Q_1$.
			%  which shows $Q_1 = \cup_{A_1\in\mathcal{A}_1}A_1$ follows. Moreover, for
			%  every element $A_1\in \mathcal{A}_1$ there exists $A_2\in\mathcal{A}_2$ so
			%  that $A_1=M(A_2)$. Hence $A_1\subseteq Q_1$ .
		}
		
		{ Consider the map $G_1:\mathcal{A}_1\rightrightarrows U_1$ defined by 
			\begin{align*}
			G_1(A_1):= \left\{r(G_2(A_2))\mid A_2 \in \mathcal{A}_2, M(A_2) = A_1\right\}
			\end{align*}
			and let
			\begin{equation*} 
			\mathcal{V}(A_1):=
			\big\{ (V,u) \mid V \subseteq \mathcal{A}_1, u \in G_1(A_1),
			F_1(A_1,u) \subseteq \cup_{A \in V}A \big\}.
			\end{equation*} We show that
			$\mathcal{V}(A_1)$ is nonempty for every $A_1\in\mathcal{A}_1$. Let  $A_1\in\mathcal{A}_1$  and $u\in G_1(A_1)$. Then there exists
			$A_2\in\mathcal{A}_2$ so that $A_1=M(A_2)$ and $u=r(G_2(A_2))$. We use
			\ref{lem:MPG:semconj}) to see that 
			$F_1(A_1,u)\subseteq M(F_2(A_2,G_2(A_2)))$. Since $(\mathcal{A}_2,G_2)$ is
			an invariant cover we have $F_2(A_2,G_2(A_2))\subseteq Q_2$ and it follows
			from \ref{lem:MPG:monotone}) that $F_1(A_1,u)\subseteq M(Q_2)$. Since
			$\mathcal{A}_1$ covers $M(Q_2)=Q_1$, we see that $F_1(A_1,u) \subseteq
			\cup_{A \in \mathcal{A}_1}A$, which ensures that
			$\mathcal{V}(A_1)\neq\emptyset$.
			
			%    \subseteq 
			%    M(\cup_{A_2\in P(\alpha|_{\intcc{0;t}})}A_2)
			%    \subseteq 
			%    \cup_{A_2\in P(\alpha|_{\intcc{0;t}})}M(A_2)
			%  \end{multline*}
			
		}
		
		Given $\Sigma_1$ and $(\mathcal{A}_1,G_1)$ we construct an MPG $(V,E,w)$.
		Let $V_1:=\mathcal{A}_1$ and $V_2:=\cup_{A\in V_1}\mathcal{V}(A)$  then the \emph{positions}
		of the MPG follow by $V=V_1\cup V_2$. 
		We introduce the \emph{edges} $E:=E_1\cup E_2$ of the MPG by
		\begin{align*}
		E_1&:=\{(v_1,v_2)\in V_1\times V_2\mid   v_2\in \mathcal{V}(v_1)\}\\
		E_2&:=\{(v_2,v_1)\in V_2\times V_1\mid v_1 \in V',  v_2 = (V',u)    \}.
		\end{align*}
		For $v\in V_2$ with $v = (V',u)  $ by $\no v$ we refer to $\no V' $. 
		The weights for $(v_1,v_2)\in E_1$ and $(v_2,v_1)\in E_2$ are given by 
		$w(v_1,v_2):=\log_2\no v_2$ and $w(v_2,v_1):=\log_2\no v_2$. 
		We refer to $(V,E,w)$ as the MPG associated with 
		$\Sigma_1$ and $(\mathcal{A}_1,G_1)$.
		{ Subsequently, we use  $\sigma_i^*$, $i\in\{1,2\}$ to denote the optimal positional strategy for
			player~$i$.}

		%	Though in Section~\ref{s:inv} we used $r_{\text{inv}}(\tau, Q)$ to refer to the 
		%	smallest possible expansion number for any $(\tau,Q)$-spanning set in 
		%	an invariant cover $(\mathcal{A},G)$, in the following we use 
		%	$r_{\text{inv}}(\tau, \mathcal{A},G,Q)$ for clarity.
		%
		{ Fix $\tau \in \N$  and let
			$r_{2,\text{inv}}(\tau,Q_2)$ denote the 
			smallest possible expansion number associated with 
			the invariant cover $(\mathcal{A}_2,G_2)$ at time~$\tau$.
			%%
			%  {\color{red} Subsequently, for $i\in\{1,2\}$ we use
			%  $r_{i,\text{inv}}(\tau,Q_i)$ to refer to the 
			%	smallest possible expansion number associated with 
			%	the invariant covers $(\mathcal{A}_i,G_i)$ at time $\tau$.}
			%
			%
			Let $\mathcal{S}_2$ be a 
			$(\tau,Q)$-spanning set in $(\mathcal{A}_2,G_2)$ such that 
			$N(\mathcal{S}_2) = r_{2,\text{inv}}(\tau, Q_2)$.
			%Let $V_0 = \{ M(\alpha(0)) \mid \alpha \in \mathcal{S}_2 \}$ and pick
			%$\bar{v} \in V_0$ so that $\nu (\bar{v} ) = \max_{v \in V_0} \nu (v)$.
			%For $\mathcal{S}_{2,0} = \{ \alpha(0) \mid \alpha \in \mathcal{S}_2 \}$,
			%we observe that
			%$Q_1 = M(Q_2)$ $=M(\cup_{A\in \mathcal{S}_{2,0}}A)$ 
			%$\subseteq\cup_{A\in\mathcal{S}_{2,0}}M(A)$ $=\cup_{A\in V_0}A$. 
			We observe that
			$Q_1 = M(Q_2) =M(\cup_{\alpha \in \mathcal{S}_{2}}\alpha(0))\subseteq\cup_{\alpha \in \mathcal{S}_{2}}M(\alpha(0))$.
			Thus $V_0:=\{ M(\alpha(0)) \mid \alpha \in \mathcal{S}_2 \}$ covers $Q_1$. 
			We pick  $\bar{v} \in V_0$ so that $\nu (\bar{v} ) = \max_{v \in V_0} \nu (v)$.
			We show by induction over $t \in \intco{0;\tau-1}$ the existence of an $\alpha \in \mathcal{S}_2$ and an 
			\mbox{$(v_k,v_{k+1})_{k\in\intco{0;\infty}} \in \mathcal{P}_2(\bar{v},
				\sigma_2^*)$} such that 
			\begin{equation}
			\label{e:proof:lem:MPG:1}
			v_{2k} = M(\alpha(k)) \text{ and }
			v_{2k+1} = ( \{ M(A) \mid A \in P(\alpha|_{\intcc{0;k}})  \}, u_k  )
			\end{equation}
			with $u_k= r(G_2(\alpha(k)))$ holds 
			for all $k \in \intcc{0;t}$.
			Let $t=0$, then}
		there exists $\alpha \in \mathcal{S}_2$ with 
		$M(\alpha(0)) = \bar{v}$.
		As $\mathcal{S}_2$ is $(\tau,Q)$-spanning we have 
		$F_2(\alpha(0), G_2(\alpha(0)) ) \subseteq \cup_{A\in P(\alpha(0))} A$.
		{
			For $u=G_2(\alpha(0))$ and $V' = \{M( A) \mid A \in P(\alpha(0)) \} $ we
			use \ref{lem:MPG:semconj}), \ref{lem:MPG:monotone}) and \ref{lem:MPG:subset}) to derive
			\begin{equation}
			\label{e:proof:lem:MPG:2}
			F_1(\bar v,r(u))\subseteq M(F_2(\alpha(0),u))
			\subseteq 
			M(\cup_{A\in P(\alpha(0))}A)
			\subseteq 
			\cup_{A\in V'} A.
			\end{equation}%
			Hence, for $v_1:= (V', r(u))$ we have
			$v_1 \in \mathcal{V}(\bar{v})$ and $(\bar{v}, v_1) \in E_1$ thus 
			$e_0 = (\bar{v}, v_1) $ for some 
			$e \in  \mathcal{P}_2(\bar{v}, \sigma_2^*)$. 
			Now suppose that the induction hypothesis 
			\eqref{e:proof:lem:MPG:1}
			holds for { $t \in \intco{0;\tau-2}$,  $\alpha\in\mathcal{S}_2$ and
				$(v_k,v_{k+1})_{k\in\intco{0;\infty}}\in\mathcal{P}_2(\bar v,\sigma^\ast_2)$}.
			%Let $\hat{v}_{2t+2} = \sigma_2^*(v_{2t+1})$ then $(v_{2t+1}, \hat{v}_{2t+2}) \in E_2$. 
			Let $v_{2t+1} = (V',u)$. From the definition of $E_2$ we have 
			\mbox{$ v_{2t+2}=\sigma_2^\ast(v_{2t+1}) \in V'$}.
			Hence, together with \eqref{e:proof:lem:MPG:1} we see that there exists
			$A \in P(\alpha|_{\intcc{0;t}})$ with $M(A) = v_{2t+2}$.
			Then we can pick $\hat{\alpha} \in \mathcal{S}_{2}$ such that 
			%$A \in P(\alpha|_{\intcc{0;t}})$ with $M(A) = \hat{v}_{2t+2}$ we have
			$\hat{\alpha}|_{\intcc{0;t}} = \alpha|_{\intcc{0;t}}$ and $\hat{\alpha}(t+1) = A$.
			Further let $\hat{v}_{2t+3} = (V', r(u))$ with 
			$u = G_2(\hat{\alpha}(t+1) )$ and
			$V' = \{M(A) \mid A\in P(\hat{\alpha}|_{\intcc{0;t+1}} ) \}$.
			Then, by using the same arguments used to derive
			\eqref{e:proof:lem:MPG:2} with $v_{2t+2}$ and
			$P(\hat{\alpha}|_{\intcc{0;t+1}})$ in place of $\bar v$ and $P(\alpha(0))$ we
			obtain 
			\mbox{$F_1(v_{2t+2},r(u))\subseteq  \cup_{A\in V'}A$}.
			Thus 
			$( v_{2t+2}, \hat{v}_{2t+3} ) \in E_1$ and 
			there exists $e \in \mathcal{P}_2(\bar{v}, \sigma_2^*)$ such that 
			$e_k = (v_k,v_{k+1})$ for all $k\in \intcc{0;2t+1}$ and
			$e_{2t+2} = (v_{2t+2}, \hat{v}_{2t+3})$ which completes the induction.
			Let $\alpha$ and $e:=(v_k,v_{k+1})_{k\in\intco{0;\infty}}$ satisfy 
			\eqref{e:proof:lem:MPG:1} for all $k\in\intco{0;\tau-1}$, which implies
			$ \no v_{2t+1} \leq \no P(\alpha | _ \intcc{0;t} ) $ 
			for every $t \in \intco{0;\tau-1}$.
			As $e \in \mathcal{P}_2(\bar{v} , \sigma^*_2)$ from~(\ref{eq:P2finiteSum}) we have
			\begin{equation}
			\label{eq:AG}
			\begin{split}
			&\nu(\bar{v} ) + \frac{c_2}{2 \tau} \leq \frac{1}{2 \tau} \sum_{j=0}^{2 \tau-1}w({e}_j)  \\
			& \leq   \frac{1}{\tau}  \sum_{t=0}^{\tau-1}\log_2 \no P(\alpha|_{\intcc{0;t}}) + \frac{1}{\tau}  \log_2 \no  v_{2\tau-1}  - \frac{1}{\tau} \log_2\no V_0	\\
			& \leq \frac{1}{\tau} \log_2 r_{2,\text{inv}}(\tau, Q_2) + \frac{\bar{c}_2}{\tau}
			\end{split}
			\end{equation}
			where $	\bar{c}_2 = \log_2  \max_{v\in V_2} \no v$.}
		
		{ 
			We define $G_1^*: \mathcal{A}_1\to U_1$ based on the value of $\sigma_1^*(A)$,
			i.e., $G_1^*(A) := u$ where $\sigma_1^*(A) = (V',u) $.
			For any $A_1\in \mathcal{A}_1$ and $u=G_1^*(A_1)$ there exists $A_2\in \mathcal{A}_2$ such that 
			$A_1 = M(A_2)$ and $u=r(G_2(A_2))$. Hence, we use \ref{lem:MPG:semconj}),
			\ref{lem:MPG:monotone}) and
			\ref{lem:MPG:cover}) to derive	$F_1(A_1,G_1^*(A_1)) \subseteq M(F_2(A_2,G_2(A_2)))
			\subseteq M(Q_2)= Q_1$.
			Thus
			$(\mathcal{A}_1,G_1^*)$ is an invariant cover of $\Sigma_1$ and
			$Q_1$. }
		
		Now consider the set 
		$\mathcal{S}_1\subseteq \mathcal{A}_1^{\intco{0;\tau}}$ implicitly defined by 
		$\alpha\in\mathcal{S}_1$ if and only if there exists
		$(v_k, v_{k+1})_{k\in\intco{0;\infty}}\in \mathcal{P}_1(v_0,\sigma^*_1)$ with
		$v_0\in V_0$ so that $\alpha(t)=v_{2t}$ holds for all
		$t\in\intco{0;\tau}$. 
		The set $\{\alpha(0)\mid \alpha\in\mathcal{S}_1 \}$ equals 
		$V_0$ therefore it covers $Q_1$. 
		Consider any $\alpha \in \mathcal{S}_1$ and a play 
		$(v_k, v_{k+1})_{k\in\intco{0;\infty}}\in \mathcal{P}_1(v_0,\sigma^*_1)$ such that 
		$\alpha(t)=v_{2t}$ holds for all $t\in\intco{0;\tau}$. 
		For $k \in \intco{0;\tau-1}$ if $v_{2k+1} = (V',u)$ then from the definition of 
		$\mathcal{S}_1$  we have $P(\alpha |_\intcc{0;k}) = V'$ 
		and from the definition of the MPG we have that $V'$ covers $F_{1} (v_{2k},u)$.
		Thus
		\begin{equation*}
		\forall_{\alpha \in \mathcal{S}_1} \forall_{t \in \intco{0;\tau-1}}F(\alpha(t),G_1^*(\alpha(t))) \subseteq \cup_{A'\in P(\alpha \mid_\intcc{0;t})}A'.
		\end{equation*}
		Therefore $\mathcal{S}_1$ is a $(\tau,Q)$-spanning set in $(\mathcal{A}_1,G_1^*)$. 
		Let $\alpha \in \mathcal{S}_1$ such that 
		$ \prod_{t=0}^{\tau-1} \no P(\alpha|_{\intcc{0;t}}) =  N(\mathcal{S}_1) $.
		Pick an $e \in \mathcal{P}_1(\alpha(0), \sigma_1^*)$ such that $\alpha(t)=v_{2t}$ holds for all
		$t\in\intco{0;\tau}$. Then from~(\ref{eq:P1finiteSum}) we have 
		\begin{equation}
		\label{eq:CH*}
		\begin{split}
		&\nu (v_0) + \frac{c_1}{2 \tau}  \geq \frac{1}{2 \tau} \sum_{j=0}^{2 \tau-1}w({e}_j) \\
		&=  \frac{1}{\tau}  \sum_{t=0}^{\tau-1}\log_2 \no P(\alpha|_{\intcc{0;t}}) + \frac{1}{\tau}  \log_2 \no v_{2\tau-1}  - \frac{1}{\tau} \log_2\no V_0 	\\ 
		&\geq \frac{1}{\tau} \log_2 r_{1,\text{inv}} (\tau, Q_1) + \frac{\bar{c}_1}{\tau} 
		\end{split}
		\end{equation}
		where $\bar{c}_1 =   -  \log_2\no V_1 $.
		
		From (\ref{eq:AG}) and (\ref{eq:CH*}) we get
		\begin{equation*}
		\begin{split}
		&\frac{1}{\tau} \log_2 r_{1,\text{inv}} (\tau, Q_1) + \frac{\bar{c}_1}{\tau} 	
		\leq \nu (v_0 ) + \frac{c_1}{2\tau} 	
		{ \leq \nu (\bar v ) + \frac{c_1}{2\tau}} 	\\
		&\leq \frac{1}{\tau} \log_2 r_{2,\text{inv}}(\tau, Q_2) +
		{\frac{c_1+2\bar c_2- c_2}{2\tau}  }.
		\end{split}
		\end{equation*}
		{Since this inequality} holds for every $\tau \in \N$, {
			we get} 
		\begin{equation*}
		h(\mathcal{A}_1, G_1^*) \leq h(\mathcal{A}_2,G_2).
		\qedhere
		\end{equation*}
\end{proof} }

\subsection{Other Lemmas and Proofs}

\begin{proof}[Proof of Lemma~\ref{l:subadditivity}]
	We fix $\tau_1,\tau_2\in\N$ and choose two  minimal
	$(\tau_i,Q)$-spanning sets $\mathcal{S}_i$, $i\in\{1,2\}$ in $(\mathcal{A},G)$
	so that $r_{\rm inv}(\tau_i,Q)=N(\mathcal{S}_i$). 
	Let $\mathcal{S}$ be the set of sequences 
	$\alpha:\intco{0;\tau_1+\tau_2}\to \mathcal{A}$ given by
	$\alpha(t):=\alpha_1(t)$ for $t\in\intco{0;\tau_1}$ and
	$\alpha(t):=\alpha_2(t-\tau_1)$ for $t\in\intco{\tau_1;\tau_1+\tau_2}$, where
	$\alpha_i\in\mathcal{S}_i$ for $i\in\{1,2\}$.
	We claim that $\mathcal{S}$ is $(\tau_1+\tau_2,Q)$-spanning in
	$(\mathcal{A},G)$. It is easy to see that 
	$\{A\in\mathcal{A}\mid \exists_{\alpha\in\mathcal{S}} A= \alpha(0)\}$ covers
	$Q$, since
	$\{A\in\mathcal{A}\mid \exists_{\alpha\in\mathcal{S}_1} A= \alpha(0)\}$
	covers~$Q$. Let $t\in\intco{0;\tau_1+\tau_2}$ and $\alpha\in\mathcal{S}$. If
	$t\in\intco{0;\tau_1-1}$, we immediately see that
	$F(\alpha(t),G(\alpha(t)))\subseteq \cup_{A'\in P(\alpha|_{\intcc{0;t}})} A'$
	since $\alpha_1:=\alpha|_{\intco{0;\tau_1}}\in \mathcal{S}_1$  and
	$\mathcal{S}_1$ satisfies~\eqref{e:icover:spanning}. Similarly, if
	$t\in\intco{\tau_1;\tau_1+\tau_2-1}$, we have $F(\alpha(t),G(\alpha(t)))\subseteq
	\cup_{A'\in P(\alpha|_{\intcc{0;t}})} A'$ since 
	$\alpha_2:=\alpha|_{\intco{\tau_1;\tau_1+\tau_2}}\in \mathcal{S}_2$  and
	$\mathcal{S}_2$ satisfies~\eqref{e:icover:spanning}. For $t=\tau_1-1$, we know that 
	$P(\alpha|_{\intco{0;\tau_1}})$ equals $\{A\mid
	\exists_{\alpha_2\in\mathcal{S}_2}\; \alpha_2(0)=A\}$ which covers $Q$ and the
	inclusion $F(\alpha(t),G(\alpha(t)))\subseteq
	\cup_{A'\in P(\alpha|_{\intcc{0;t}})} A'$ follows. Hence, $\mathcal{S}$
	satisfies~\eqref{e:icover:spanning} and we see that $\mathcal{S}$ is
	$(\tau,Q)$-spanning. Subsequently, for $i\in\{1,2\}$ and $\alpha\in
	\mathcal{S}_i$, $t\in\intco{0;\tau_i-1}$, let us use
	$P_i(\alpha|_{\intcc{0;t}}):=\{A\in\mathcal{A}\mid \exists_{\hat
		\alpha\in\mathcal{S}_i}\;\hat\alpha|_{\intcc{0;t}}=\alpha|_{\intcc{0;t}}\wedge
	A=\hat\alpha(t+1)\}$. Then we have
	$P(\alpha|_{\intcc{0;t}})=P_1(\alpha_1|_{\intcc{0;t}})$ with
	$\alpha_1:=\alpha|_{\intco{0;\tau_1}}$ if $t\in\intco{0;\tau_1-1}$ and 
	$P(\alpha|_{\intcc{0;t}})=P_2(\alpha_2|_{\intcc{0;t-\tau_1}})$ with
	$\alpha_2:=\alpha|_{\intco{\tau_1;\tau_1+\tau_2}}$ if
	$t\in\intco{\tau_1;\tau_1+\tau_2-1}$, while for $t=\tau_1-1$ we have 
	$P(\alpha|_{\intcc{0;t}})=P_2(\alpha_2)$ with
	$\alpha_2:=\alpha|_{\intco{\tau_1;\tau_1+\tau_2}}$ and
	$P(\alpha):=P_1(\alpha_1)$ with $\alpha_1:=\alpha|_{\intco{0;\tau_1}}$.
	Therefore, $N(\mathcal{S})$ is bounded by $N(\mathcal{S}_1)\cdot N(\mathcal{S}_2)$ and we have $r_{\rm inv}(\tau_1+\tau_2,Q)\le r_{\rm inv}(\tau_1,Q)\cdot
	r_{\rm inv}(\tau_2,Q)$.
	Hence, $\tau\mapsto \log_2 r_{\rm inv}(\tau,Q)$, 
	$\N\to \R_{\ge0}$ is a subadditive
	sequence of real numbers and~\eqref{e:ifb:equiv} follows
	by~\cite[Lem.~2.1]{ColoniusKawanNair13}.
\end{proof}

\begin{proof}[Proof of Lemma~\ref{l:nospanningset}]
	For every $t\in\intco{0;\tau}$, we define the set $\mathcal{S}_t:=\{\alpha\in \mathcal{A}^{\intcc{0;t}}\mid
	\exists_{\alpha'\in\mathcal{S}}\;\alpha'|_{\intcc{0;t}}=\alpha\}$.
	By definition of $P$, we have for all $\alpha\in\mathcal{S}$ the equality
	$P(\alpha)=\mathcal{S}_0$, which shows the assertion for $\tau=1$ since in
	this case we have $\mathcal{S}_0=\mathcal{S}$. Subsequently, we assume $\tau>1$.
	For $t\in\intco{0;\tau}$ and $a_0\ldots a_{t}\in \mathcal{S}_t$, we use 
	$Y(a_0\ldots a_{t}):=\{ \alpha\in \mathcal{S} \mid a_0\ldots a_{t}= \alpha|_{\intcc{0;t}}\}$
	to denote the sequences in $\mathcal{S}$ whose initial part is restricted
	to $a_0\ldots a_{t}$. For 
	$t\in\intco{0;\tau-1}$ and $a_0\ldots a_{t}\in\mathcal{S}_{t}$, we have 
	\begin{align*}
	{\color{black}\no Y(a_0\ldots a_{t})} &{\color{black}= \sum_{a_{t+1} \in P(a_0 \ldots a_t)} \no Y(a_0\ldots a_{t+1})}	\\
	&\le \no P(a_0\ldots a_{t}) \max_{a_{t+1}\in P(a_0\ldots
		a_{t})}\no Y(a_0\ldots a_{t+1}).
	\end{align*}
	For every $a_0\ldots a_{\tau-2}\in\mathcal{S}_{\tau-2}$ we have 
	$\no Y(a_0\ldots a_{\tau-2})=\no P(a_0\ldots a_{\tau-2})$
	and we obtain a bound for $\no Y(a_0)$ by
	%\begin{multline*}
	\begin{equation*}
	\no P(a_0)\!\max_{a_1\in P(a_0)}\!\no P(a_0a_1)\!\cdots\!
	\max_{a_{\tau-2}\in P(a_0\ldots a_{\tau-3})} \!\no P(a_0\ldots a_{\tau-2})
	\end{equation*}
	%\end{multline*}
	so that $\no Y(a_0)\le
	\max_{\alpha\in\mathcal{S}}\prod_{t=0}^{\tau-2}\no P(\alpha|_{\intcc{0;t}})$ holds for any $a_0\in \mathcal{S}_0$.
	{\color{black} As $\cup_{a_0\in \mathcal{S}_0} Y(a_0)=\mathcal{S}$ we observe
		$\no \mathcal{S} = \sum_{a_0 \in \mathcal{S}_0} \no Y(a_0) 
		\le \no \mathcal{S}_0 \max_{\alpha \in \mathcal{S}} \prod_{t=0}^{\tau-2}\no P(\alpha|_{\intcc{0;t}})$. 
		Since $\mathcal{S}_0 = P(\alpha) = P(\alpha|_\intcc{0;\tau-1})$, we obtain the
	}
	desired inequality
	$\no \mathcal{S}\le  \max_{\alpha\in\mathcal{S}}\prod_{t=0}^{\tau-1} \no
	P(\alpha|_{\intcc{0;t}})$.
\end{proof}

\begin{lemma}\label{l:1}
	For $a,b\in \R$ and $T\in \N$, it holds
	\begin{equation}\label{e:sumeq}
	a + \sum_{t=1}^{T}\frac{ba^t}{(a-b)^t}
	= 
	\frac{a^{T+1}}{(a-b)^{T}}.
	\end{equation}
\end{lemma}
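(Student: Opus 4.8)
The plan is to prove~\eqref{e:sumeq} by induction on $T$, with $a,b$ regarded as fixed reals satisfying $a\neq b$ (this is implicit in the statement, since otherwise the denominators $(a-b)^{t}$ vanish; in the application inside the proof of Theorem~\ref{t:lb} the role of $a-b$ is played by $\mu(Q)^{\nicefrac{1}{n}}-\mu(W)^{\nicefrac{1}{n}}$, which is positive there). For the base case $T=1$ I would simply compute the left-hand side, $a+\frac{ba}{a-b}=\frac{a(a-b)+ba}{a-b}=\frac{a^{2}}{a-b}$, which is exactly the right-hand side of~\eqref{e:sumeq} for $T=1$.

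For the inductive step, assume~\eqref{e:sumeq} holds for some $T\in\N$ and split off the last summand:
\[
a+\sum_{t=1}^{T+1}\frac{ba^{t}}{(a-b)^{t}}
=\Bigl(a+\sum_{t=1}^{T}\frac{ba^{t}}{(a-b)^{t}}\Bigr)+\frac{ba^{T+1}}{(a-b)^{T+1}}
=\frac{a^{T+1}}{(a-b)^{T}}+\frac{ba^{T+1}}{(a-b)^{T+1}},
\]
where the second equality uses the induction hypothesis. Putting the two fractions over the common denominator $(a-b)^{T+1}$ and using $(a-b)+b=a$ in the numerator yields $\frac{a^{T+1}(a-b)+ba^{T+1}}{(a-b)^{T+1}}=\frac{a^{T+2}}{(a-b)^{T+1}}$, which is~\eqref{e:sumeq} with $T$ replaced by $T+1$. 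This closes the induction and completes the proof.

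A non-inductive alternative is to write $\sum_{t=1}^{T}\frac{ba^{t}}{(a-b)^{t}}=b\sum_{t=1}^{T}\rho^{t}$ with $\rho:=a/(a-b)$, evaluate the finite geometric sum, and simplify using the identity $\rho-1=b/(a-b)$; the left-hand side then collapses to $a+a(\rho^{T}-1)=a\rho^{T}=\frac{a^{T+1}}{(a-b)^{T}}$. I do not expect any genuine obstacle here, as both routes are elementary algebraic manipulations; the only point worth a remark is the standing assumption $a\neq b$ (and, for the geometric-series variant, the trivial separate check of the case $b=0$, in which both sides reduce to $a$).
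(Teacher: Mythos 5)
Your induction argument is correct and is essentially the paper's own proof: the same base case $T=1$ followed by splitting off the last summand, invoking the inductive hypothesis, and simplifying via $(a-b)+b=a$. Your remarks on the implicit assumption $a\neq b$ and the alternative geometric-series route are fine but not needed; nothing further to add.
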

\begin{proof}
	We show the identity by induction over $T$.
	For $T=1$,  equation  \eqref{e:sumeq} is easy to verify and subsequently, we
	%\begin{equation*}
	%a + \frac{ba}{a-b}
	%= \frac{a^{2}}{a-b}.
	%\end{equation*}
	assume that the equality holds for $T-1$ with $T\in \N_{\ge 2}$. Now we obtain 
	\begin{align*}
	a + \sum_{t=1}^{T}\frac{ba^t}{(a-b)^t}
	= 
	\frac{ba^T}{(a-b)^{T}}+a+\sum_{t=1}^{T-1}\frac{ba^t}{(a-b)^t}\\
	=
	\frac{ba^T}{(a-b)^{T}}+
	\frac{a^{T}}{(a-b)^{T-1}}
	=
	\frac{ba^T+a^T(a-b)}{(a-b)^{T}}=
	\frac{a^{T+1}}{(a-b)^{T}}
	\end{align*}
	which completes the proof.
\end{proof}

\end{document}